\let\leq\le
\let\geq\ge
\newcommand{\abs}[1]{{\left\lvert {#1} \right\rvert}}
\newcommand{\pbrac}[1]{{\left( {#1} \right)}}
\newcommand{\cbrac}[1]{{\left\lbrace {#1} \right\rbrace}}
\newcommand{\sbrac}[1]{{\left[ {#1} \right]}}
\newtheorem{prop}{Proposition}
\newtheorem{theorem}{Theorem}
\newtheorem{corollary}{Corollary}
\newtheorem{lemma}{Lemma}
\newtheorem{remark}{Remark}
\newtheorem{assumption}{Assumption}
\newtheorem{definition}{Definition}
\newcommand{\mc}[1]{\ensuremath{\mathcal{#1}}}
\renewcommand{\P}{\ensuremath{\mathbb{P}}}
\newcommand{\E}{\ensuremath{\mathbb{E}}}
\newcommand{\Var}
{\ensuremath{\operatorname{Var}}}
\newcommand{\Law}[1]{\ensuremath{\mathsf{Law}(#1)}}
\title{Estimation of Treatment Effects Under Nonstationarity via the Truncated Policy Gradient Estimator}
\author{Ramesh Johari \thanks{Management Science and Engineering Department, Stanford University, rjohari@stanford.edu}, Tianyi Peng \thanks{Columbia Business School, Columbia University, tianyi.peng@columbia.edu}, Wenqian Xing \thanks{Management Science and Engineering Department, Stanford University, wxing@stanford.edu}}
\begin{document}

\maketitle

\begin{abstract}
Randomized experiments (or A/B tests) are widely used to evaluate interventions in dynamic systems such as recommendation platforms, marketplaces, and digital health. In these settings, interventions affect both current and future system states, so estimating the global average treatment effect (GATE) requires accounting for temporal dynamics, which is especially challenging in the presence of {\em nonstationarity}; existing approaches suffer from high bias, high variance, or both. In this paper, we address this challenge via the novel Truncated Policy Gradient (TPG) estimator, which replaces instantaneous outcomes with short-horizon outcome trajectories. 
The estimator admits a {\em policy gradient} interpretation: it is a truncation of the first-order approximation to the GATE, yielding provable reductions in bias and variance in nonstationary Markovian settings. We further establish a central limit theorem for the TPG estimator and develop a consistent variance estimator that remains valid under nonstationarity with single-trajectory data.
We validate our theory with two real-world case studies. 
The results show that relative to existing approaches, a well-calibrated TPG estimator can achieve a favorable balance between bias and variance in nonstationary settings, highlighting the value of the policy-gradient perspective for designing effective estimators under complex dynamics.
\end{abstract}

\section{Introduction}
Randomized controlled experiments (``A/B tests'') are essential tools for evaluating the impact of interventions across dynamic, technology-driven environments such as recommendation systems, online marketplaces, and digital health platforms. Typically, these interventions do not merely influence immediate outcomes but also propagate effects over subsequent system states. This leads to a breakdown of the standard assumption of unit-level independence, a phenomenon known as a {\em carryover effect} \citep{bojinov2023design,hu2022switchback} or \textit{temporal interference} \citep{glynn2020adaptive}.  In such settings, it is well established that applying a naive estimator (e.g., the difference-in-means or Horvitz-Thompson estimators \citep{horvitz1952generalization}) can result in substantial bias as it ignores these intertemporal dependencies.  For instance, in a digital health platform, randomization of incoming patients to treatment and control affects the workload of clinical providers, which in turn impacts the care experience of subsequent patients.  Similarly, in a ride-sharing platform, user assignments to treatment and control affect the availability of drivers: a booking decision by one user influences the system state---e.g., driver distribution or availability---which in turn affects the experience and behavior of subsequent users.

Crucially, these real-world systems are also typically {\em nonstationary}: their dynamics can vary in a time-dependent manner, driven by factors such as evolving user preferences, time-dependent driver dynamics, and external shocks (e.g., seasonal trends or viral events).  In the presence of nonstationarity, the problem of experimental design and estimation becomes even more challenging -- a major issue for practitioners. 
In this paper, our central contribution is a simple and practical approach that leverages a policy-gradient perspective to yield an estimator that yields a favorable balance of bias and variance relative to existing approaches, in settings with both carryover effects and nonstationarity.

One way to model such dynamic, nonstationary environments is via a \emph{Markovian} system, represented by a tuple sequence $(X_t, Z_t, Y_t)$ for a finite time horizon $t = 1, \dotsc, T$. Here, $X_t$ denotes the system state at time $t$ (e.g., the number and locations of available drivers), and $Z_t \in \{0, 1\}$ indicates whether the user arriving at time $t$ is assigned to the control or treatment group. The variable $Y_t$ captures the reward resulting from the user's behavior after receiving the assignment (e.g., $Y_t = 1$ if the user books a trip). Conditioned on the state $X_t$ and action $Z_t$, the system evolves according to:
\begin{align*}
    X_{t+1} \sim P_{Z_t}^t(X_t, \cdot),
\end{align*}
where $P_0^t$ and $P_1^t$ denote the (unknown) transition kernels under control and treatment at time $t$. The superscript $t$ reflects the system's nonstationarity.

{Platforms typically want to compare} two policies: the \emph{treatment policy}, which assigns $Z_t = 1$ for all $t$, and the \emph{control policy}, which assigns $Z_t = 0$ for all $t$. The estimand of interest is the expected average reward under each policy over a fixed finite time horizon, commonly referred to as the \emph{global average treatment effect} (GATE). Since it is infeasible to observe both policies simultaneously, data are typically collected under an exploratory policy (the {\em experiment})---for instance, a Bernoulli randomization policy that assigns each $Z_t$ independently to $0$ or $1$ with equal probability, as in a typical A/B test setting.

This yields a challenging causal inference problem: \textit{estimation of the GATE from a single trajectory of the nonstationary system under the experiment.}  A variety of approaches, of differing levels of sophistication, exist for this problem:
\begin{itemize}[left=0pt]
  \item \textbf{Naive estimator.} A naive approach is to simply ignore the interference and take the difference in average outcomes between treatment and control groups (e.g., a difference-in means or Horvitz-Thompson estimator \citep{horvitz1952generalization}).
Though straightforward, the bias of such an estimator can be as large as the treatment effect itself \citep{glynn2020adaptive,farias2022markovian,hu2022switchback},  which motivates the line of research aimed at addressing interference in the first place.

  \item \textbf{Stationary OPE.} Recent research has explored more sophisticated techniques inspired by reinforcement learning, including off-policy evaluation and the recently proposed difference-in-Q's (DQ) estimator \citep{farias2022markovian,farias2023correcting,uehara2022review,shi2023dynamic}. However, most of these methods assume a \textit{stationary} environment. When applied to our setting, they tend to exhibit high bias and substantial variance (see Sections~\ref{sec:policy_gradient} and~\ref{sec:numerical}), as they rely heavily on assumptions such as the stationary observability of system states or the availability of multiple independent trajectories—none of which hold in our setting. 

  \item \textbf{More complex designs.} An orthogonal line of research investigates more elaborate experimental designs, such as switchback experiments \citep{hu2022switchback}, to address nonstationarity. However, these designs can be challenging to implement in practice and often require careful tuning of parameters such as interval length. This motivates our investigation into whether improvements are possible within the {more commonly utilized} Bernoulli randomized {design in} A/B testing.
\end{itemize}
Perhaps surprisingly, we find that a simple modification of the {difference-in-means (DM)\footnote{Strictly speaking, this is a naive Horvitz–Thompson estimator, but we use the more common language of “difference-in-means,” understanding that when $T$ is large, the gap between the two naive estimators is minimal.}} estimator yields a promising low-bias, low-variance alternative.
Specifically, rather than comparing {\em instantaneous} outcomes (as in the DM estimator), we compare {\em $k$-step accumulated rewards}, with $k$ fixed (and chosen to be small relative to the experiment horizon).

To motivate our approach, we show that this difference of $k$-step sums can be viewed as an approximate {\em policy gradient}: namely, the gradient of the cumulative reward with respect to the probability of assignment to treatment.  We thus refer to our estimator as the {\em Truncated Policy Gradient} (TPG) estimator (see Section~\ref{sec:est_non} for the formal definition).   By a Taylor approximation argument, this policy gradient approximates the GATE.  We show that we can interpret the $k$-step sums for treatment and control respectively as an appropriately defined $Q$-value function over a truncated horizon; thus the TPG estimator is a difference in these ``truncated" $Q$-values, as is the case in the DQ estimator in a stationary setting described above.

 We find that it offers several appealing properties:
\begin{itemize}[left=0pt]
    \item \textbf{Theoretical.} We show that the TPG estimator achieves substantial variance reduction, at the cost of a controlled mixing bias, under mild assumptions on the environment's mixing time.
    Our analysis is based on a first-order policy-gradient formulation tailored to nonstationary environments, which may be of independent interest to the reinforcement learning community.
    We also establish a CLT and develop a \emph{practical}, \emph{consistent} variance estimator for single-trajectory nonstationary data.
    
    \item \textbf{Practical.} The estimator is simple to implement: 
    \begin{enumerate}
        \item It is based entirely on \emph{observed rewards} and does not require knowledge of the state space;
        \item It can be computed from a single sample path, without requiring multiple i.i.d. trajectories;
        \item It allows for post-experiment analysis.  The variance estimator we develop can be used to build confidence intervals.  Furthermore, we propose a heuristic by which the truncation size $k$ can be evaluated post-experiment, enabling practitioners to adaptively select the truncation level (see, e.g., Appendix~\ref{appendix:algo}).
    \end{enumerate}

    \item \textbf{Empirical.} {In Section \ref{sec:numerical}}, we empirically evaluate the TPG estimator using two realistic simulations: a queueing model based on nonstationary patient arrivals to a hospital emergency department, and a large-scale ride-sharing system based on NYC taxi data. Both settings reflect real-world service system dynamics.
    In this highly nonstationary environment, we compare its performance to standard OPE methods and a recent switchback-based approach \citep{hu2022switchback}. With appropriate calibration, our estimator consistently exhibits substantially lower bias and variance, demonstrating its robustness and practical utility in real-world settings.
\end{itemize}

\paragraph{Related work.}
Experimental design has received growing attention in the era of technology-driven environments, particularly in \emph{networked systems} \citep{eckles2017design, peng2025differences, aronow2017estimating, viviano2023causal} and \emph{online marketplaces} \citep{johari2022experimental, wu2024switchback, johari2024does, bojinov2023design, wager2021experimenting}.
Recent research has increasingly adopted \emph{Markovian environments} to model the structural dynamics of treatment and control in experiments with temporal dependencies \citep{glynn2020adaptive, hu2022switchback, farias2022markovian, farias2023correcting, li2023experimenting}. Among these, the difference-in-Q's (DQ) estimator \citep{farias2022markovian, farias2023correcting} has demonstrated a favorable bias-variance tradeoff under stationary Markovian settings.
Experiments and inference in \emph{nonstationary environments} have also received increasing attention. Prior work has examined nonstationary time series models \citep{simchi2023non, wu2024nonstationary} and nonstationary Markovian environments with geometric mixing-time properties \citep{hu2022switchback}. Our model closely aligns with the setting considered in \cite{hu2022switchback}.
Additionally, {off-policy evaluation} (OPE) methods have been widely studied in both fully observed and partially observed MDP settings \citep{hu2023off, uehara2022review, uehara2023future, shi2023dynamic, thomas2016data, su2020adaptive}. However, a key limitation is that these methods typically rely on multiple independent trajectories and are difficult to adapt to a single trajectory in a nonstationary Markovian environment due to the curse of horizon \citep{uehara2022review}.

\section{Preliminaries}\label{sec:pre}

In this section, we introduce the formal framework used throughout the paper. 
Section~\ref{sec:POMDP} presents a nonstationary Markovian model of system dynamics under treatment and control.
In Section~\ref{sec:exp_est}, we define the global average treatment effect (GATE) and a Bernoulli randomized experimental design. 

\subsection{A dynamic model of treatment}\label{sec:POMDP}

\paragraph{Notation.} We adopt standard Landau notation, writing $f(x) = O\pbrac{g(x)}$ if there exists a constant $C > 0$, independent of $x$, such that $|f(x)| \leq C |g(x)|$ for all sufficiently large $x$. We use $[T]$ to denote the index set $\{1, \ldots, T\}$, and $\Law{Y}$ to denote the probability distribution of a random variable $Y$.

\paragraph{Time horizon.}  We consider a discrete, finite time horizon indexed by $t = 1, \ldots, T$.

\paragraph{State and action.}  We consider a finite state space $\mc{X}$ and a binary action space $\mc{Z} = \{ 0, 1 \}$.  %

\paragraph{Transition kernel.}  For $x \in \mc{X}$ and $z \in \mc{Z}$, we let $P_z^{t}(x, x')$ denote the transition probability from $x$ to $x'$ if action $z$ is chosen at time $t$, i.e., $\P(X_{t+1} = x' | X_t = x, Z_t = z) = P^{t}_z(x, x')$.

We assume the following {\em mixing time} property on the transition kernels. A similar uniform mixing-time bound was assumed by \cite{farias2022markovian} and \cite{hu2022switchback} in the context of switchback experiments in a Markovian setting. More broadly, mixing time assumptions are common in the literature on bandits and reinforcement learning \citep{van1998learning, even2004experts, hu2023off}, as well as in the study of general Markov processes \citep{doukhan1994mixing}.

\begin{assumption}[Mixing time]\label{assumption:mixing-time}
There exists a constant $\gamma$, $0 < \gamma < 1$, such that, for any time $t\in [T]$, any action $z \in \mc{Z}$, and any pair of distributions $f$ and $f'$ over the state space $\mathcal{X}$, the following holds
\begin{equation*}
\left\| f^{\prime} P_z^{t} - f P_z^{t} \right\|_{\mathrm{TV}} \leq \gamma \left\|f^{\prime}-f\right\|_{\mathrm{TV}},
\end{equation*}
where $ f P_z^{t} $ denotes the distribution on $ \mathcal{X} $ defined by $(f P_z^{t})(x') := \sum_{x \in \mathcal{X}} f(x) P_z^{t}(x, x')$,
and $\|\cdot\|_{\mathrm{TV}}$ denotes the total variation distance.
\end{assumption}

\paragraph{Markov process.}  We denote the distribution of the initial state $X_1$ by $\rho$.  For all subsequent time steps, we assume the state transition dynamics obey the Markov property.  
Formally, for any time $t$ and any sequences $\{X_1 = x_1, \ldots, X_{t-1} = x_{t-1}, X_t = x\}$ of states and $\{Z_1 = z_1, \ldots, Z_{t-1} = z_{t-1}, Z_t = z\}$ of actions, we assume the Markov property holds, i.e.,
\begin{multline*}
\P(X_{t+1} = x' | X_1 = x_1, Z_1 = z_1, \ldots, X_{t-1} = x_{t-1}, Z_{t-1} = z_{t-1}, X_t = x, Z_t = z) \\= \P(X_{t+1} = x' | X_t = x, Z_t = z) = P^{t}_z(x, x').
\end{multline*}
Any sequence of actions 
$\{z_1, \ldots, z_T\} \in \mc{Z}^{T}$, 
together with the initial distribution $X_1 \sim \rho$, completely determines the distribution of the (Markov) process $\{X_1, \ldots, X_T\}$.

\begin{remark}
Assumption~\ref{assumption:mixing-time} implies that the influence of the current action at time  $t$  on the future state diminishes over time, i.e., at an exponential rate of $O(\gamma^t)$.  {Informally, despite nonstationarity, this property yields systems that gradually ``forget" initial conditions or past actions.}
\end{remark}

\paragraph{Rewards.}  For $x \in \mc{X}$ and $z \in \mc{Z}$, we let $R(\cdot | x, z)$ denote the probability distribution of the \emph{real-valued rewards} obtained when action $z$ is taken in state $x$.  
We assume for simplicity that rewards are almost surely bounded over all states and actions, i.e., 
there exists a constant $M$ such that $R([-M,M] | x, z) = 1$ for all $x \in \mc{X}, z \in Z$. We let $r(x, z) = \int y R(dy | x, z)$ denote the expected reward when action $z$ is used in state $x$.
For any time $t$ and any sequences $\{X_1 = x_1, \ldots, X_{t-1} = x_{t-1}, X_t = x\}$ of states and $\{Z_1 = z_1, \ldots, Z_{t-1} = z_{t-1}, Z_t = z\}$ of actions, we assume the reward obtained at time $t$, denoted $Y_t$, is distributed according to $R(\cdot | x, z)$, and is conditionally independent of all other randomness (including the past history) given $X_t$ and $Z_t$.

\subsection{Estimand and experimental design}\label{sec:exp_est}

In this section, we describe our estimand as well as a Bernoulli randomized experimental design to compare treatment and control.  As we will see, these are naturally modeled through the viewpoint of Markov policies from the theory of Markov decision processes (MDPs).

\paragraph{Markov policies: Treatment, control, and Bernoulli randomization.}  Informally, Markov policies are those that choose actions based only on the current state (and not based on history); they can be deterministic or randomized. Since we consider $\mc{Z} = \{0, 1\}$ (binary treatment), any Markov policy can be specified by the probability it chooses $z = 1$ in each state $x$.  

In this paper, we consider a set of Markov policies parameterized by $\theta \in [0,1]$.  Given $\theta$, the policy $\pi_\theta$ chooses $z = 1$ (resp., $z = 0$) with probability $\theta$ (resp., $1 - \theta$) in every state $x$, independent of all other randomness.  Note that $\pi_1$ is the policy that always chooses $z = 1$, which we refer to as the {\em treatment} policy; analogously, $\pi_0$ is the policy that always chooses $z = 0$, which we refer to as the {\em control} policy.  For any $\theta$ such that $0 < \theta < 1$, the policy assigns treatment in each time step according to an i.i.d.~Bernoulli($\theta$) random variable, so we refer to such a $\pi_\theta$ as a {\em Bernoulli randomization} policy.

We denote $P_{\theta}^{t}$ as the transition kernel under the Bernoulli randomization policy $\pi_{\theta}$ at time $t$, i.e., $P_{\theta}^{t} = \theta P_{1}^{t} + (1-\theta)P_{0}^{t}$.
Similarly, we denote $r_\theta(x)$ as the expected reward in state $x$ under policy $\pi_\theta$, i.e., $r_{\theta}(x) = \theta r(x, 1) + (1-\theta)r(x, 0)$, and $\mathcal{L}_\theta^t$ as the marginal distribution of the outcome $Y_t$ under $\pi_\theta$, i.e.:
\begin{equation}
\label{eq:law}
    \mathcal{L}_\theta^t=\operatorname{Law}\left(Y_t \mid X_1 \sim \rho, \, Z_u \sim \text{i.i.d.~Bernoulli}(\theta)  \text { for all } 1 \leq u \leq t\right).
\end{equation}

\begin{remark}
    Together with the specification above, given an initial distribution for the state $X_1$, every policy $\pi_\theta$ induces a Markov process as follows: at each time $t = 1, \ldots, T$, the action is chosen according to $Z_t \sim \textrm{Bernoulli}(\theta)$, and the next state is then chosen according to $X_{t+1} \sim P_{Z_t}^{t}(X_t, \cdot)$.  Thus our specification above is a special case of a {\em nonstationary Markov decision process} (MDP), with respect to the restricted set of policies defined above; since the transition dynamics vary arbitrarily over time, in general, no stationary distribution will exist for the resulting process (see, e.g., \cite{lecarpentier2019non, cheung2020reinforcement, hu2022switchback}).
\end{remark} 

\paragraph{Estimand: Global Average Treatment Effect (GATE).}
Given an initial state distribution $X_1 \sim \rho$, recall from \eqref{eq:law} that $\mc{L}_1^t$ (resp., $\mc{L}_0^t$) represents the marginal distribution of the outcome $Y_t$ at time $t$ under the treatment policy $\pi_1$ (resp., under the control policy $\pi_0$).  
With these laws in hand, the treatment effect at time $t$, denoted by $\tau_t$, is defined as $\tau_t := \E_{\mathcal{L}_{1}^t}\sbrac{Y_t} - \E_{\mathcal{L}_{0}^t}\sbrac{Y_t}$, and the global average treatment effect (GATE) $\tau$ over the time horizon $[T]$ is then defined as
\begin{equation*}
\tau := \frac{1}{T} \sum_{t=1}^T \tau_t.
\end{equation*}

\begin{remark}
    The estimand $\tau$ represents the average treatment effect over the specific time horizon $[T]$ under the initial state distribution $\rho$. 
    In general, since the underlying process follows a nonstationary MDP, $\tau$ does not converge to a stationary value.
    Nevertheless, estimating $\tau$ remains meaningful, as it provides insight into the treatment effect over the experimental horizon---an interval that is typical of intrinsic interest and may be repeated in future deployments (see, e.g., \cite{hu2022switchback,wu2024nonstationary}).
\end{remark}

\paragraph{Data generation and estimation.}
The experiment starts from an initial state $X_1 \sim \rho$, evolves under a uniform Bernoulli randomization policy $\pi_{1/2}$ over the time horizon $[T]$, and generates a single data trajectory $\cbrac{(X_t, Z_t, Y_t): 1 \leq t \leq T}$. An {\em estimator} (computed at time $T$) is then any random variable adapted to $\sigma\pbrac{(X_t, Z_t, Y_t): 1 \leq t \leq T}$.

\section{Estimation under nonstationarity}\label{sec:est_non}

A naive approach to estimating $\tau$ under a Bernoulli randomization policy $\pi_{1/2}$ is to use the following difference-in-means (DM) estimator: 
\begin{equation}\label{eq:IPW_DM}
    \hat \tau_0 := \frac{1}{T}\sum_{u=1}^T\left(\frac{1\left\{Z_u=1\right\}}{1/2}-\frac{1\left\{Z_u=0\right\}}{1/2}\right) Y_u. 
\end{equation}
The DM estimator is unbiased and exhibits low variance when the observations are i.i.d. \citep{horvitz1952generalization,robins1994estimation}. However, in dynamic (e.g., Markovian) environments, the DM estimator has been shown to incur substantial bias due to temporal {\em interference} across observations \citep{glynn2020adaptive,farias2022markovian,hu2022switchback}. Consequently, a variety of estimators have been developed to correct for temporal interference in dynamic settings, including the recently proposed difference-in-Q's (DQ) estimators \citep{farias2022markovian,farias2023correcting} and estimators based on off-policy evaluation (OPE) techniques \citep{shi2023dynamic,uehara2022review}. Despite these advances, most bias-corrected estimators still rely on stationarity assumptions and can fail in nonstationary environments, where bias correction is inherently more challenging.

\paragraph{Truncated Policy Gradient (TPG).}
To address the challenge of estimation under nonstationarity, we propose an estimator that naturally extends the naive DM estimator and, perhaps surprisingly, mitigates temporal interference bias even in nonstationary Markovian environments. Our approach is straightforward to implement: instead of adding up only the immediate outcome $Y_t$ at each time $t$ as in \eqref{eq:IPW_DM}, we sum the subsequent outcomes over a window of size $k$. Formally, for each $0 \leq k \leq T$, we define the estimator:
\begin{equation}\label{eq:trunc_IPW_est}
    \hat \tau_k := \frac{1}{T}\sum_{u=1}^T\left(\frac{\mathbf{1}\left\{Z_u=1\right\}}{1/2}-\frac{\mathbf{1}\left\{Z_u=0\right\}}{1/2}\right)\sum_{t=u}^{\min (u+k, T)}  Y_t. 
\end{equation}
Note that when $k = 0$, the estimator $\hat \tau_k$ reduces to the DM estimator, which completely ignores the nonstationary Markovian dynamics of the environment. In contrast, when $k \geq 1$, the estimator $\hat \tau_k$ captures the Markovian dynamics by incorporating short-term future outcomes into the estimation.  We refer to $\hat\tau_k$ as the \emph{truncated policy gradient estimator} with truncation size $k$; later in Section~\ref{sec:trunc_policy_gradient}, we show that $\E[\hat\tau_k]$ represents the gradient of a hypothetical truncated policy.

\begin{remark}
    The TPG estimator $\hat \tau_k$ does not require access to the states, offering a practical advantage over estimators that depend on full state observability (e.g., OPE estimators \citep{uehara2022review}). 
    Moreover, when there is no interference, $\hat \tau_k$ provides an unbiased estimation of $\tau$ under any truncation size $k$ (reduces to DM as $k=0$).
\end{remark}

The main results of this paper establish bias and variance bounds for the TPG estimator $\hat{\tau}_k$ in nonstationary Markovian environments, as formalized in Theorem~\ref{them:trunc_DQ_bias}.

\begin{figure*}[!t]
    \centering
    \includegraphics[width=0.8\textwidth]{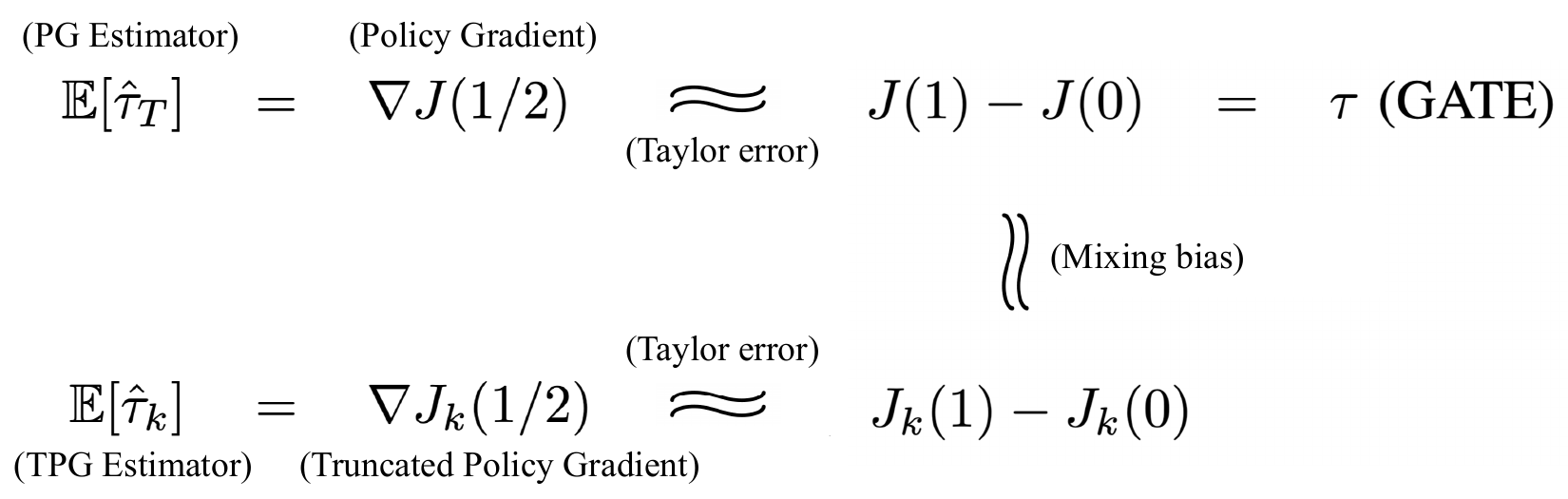}
    \caption{Connections between the TPG, PG estimators, policy gradients, and the GATE.}
    \vspace{-.5cm}
    \label{fig:figure3}
\end{figure*}

\begin{theorem}[Bias and variance bounds for the TPG estimator]\label{them:trunc_DQ_bias}
Under Assumption~\ref{assumption:mixing-time}, the bias of the TPG estimator $\hat \tau_k$ with respect to the GATE $\tau$, for any truncation size $0\leq k \leq T$, is bounded by
\begin{equation}\label{eq:bias_bound}
 \abs{\mathbb{E}\sbrac{\hat \tau_k} - \tau} = O\pbrac{\frac{1 - \gamma^k}{(1-\gamma)^2}\,\delta^2M + \frac{\gamma^k}{1-\gamma}\,\delta M},
\end{equation}
where $\delta$ is the {\em kernel deviation}, i.e., the maximum total variation distance between the transition kernels across all states and times:
$
    \delta:=\sup _{ 1 \leq t \leq T }\sup _{x\in \mathcal{X}}\|P_{1}^t(x, \cdot) - P_{0}^t(x, \cdot)\|_{\mathrm{TV}}.
$
The variance of $\hat\tau_k$ is bounded by
\begin{equation}
    \operatorname{Var}(\hat\tau_k)
= O\!\left(\frac{(k+1)^3M^2}{T} + \frac{\gamma (k+1)^2\, M^2}{T(1-\gamma)}\right). \label{eq:var_bound}
\end{equation}
\end{theorem}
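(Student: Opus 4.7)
The plan is to prove the bias bound \eqref{eq:bias_bound} and the variance bound \eqref{eq:var_bound} separately, both starting from the decomposition $\hat{\tau}_k = \frac{1}{T}\sum_u W_u$ with $W_u := 2(2Z_u - 1)\sum_{t=u}^{\min(u+k,T)} Y_t$, which satisfies $|W_u| \leq 2(k+1)M$ almost surely. For the bias, I first rewrite $\mathbb{E}[\hat{\tau}_k]$ in closed form by conditioning on $\mathcal{F}_{u-1}$ and using that $Z_u \sim \text{Bernoulli}(1/2)$ is independent of the history; the inverse-propensity weights collapse to yield
\begin{equation*}
    \mathbb{E}[\hat{\tau}_k] = \frac{1}{T}\sum_{u=1}^{T}\mathbb{E}_{\pi_{1/2}}\!\big[Q^k_u(X_u,1) - Q^k_u(X_u,0)\big],
\end{equation*}
where $Q^k_u(x,z) := \mathbb{E}_{\pi_{1/2}}\!\big[\sum_{t=u}^{\min(u+k,T)} Y_t \,\big|\, X_u = x,\, Z_u = z\big]$ is the $k$-step truncated $Q$-function under the experimental policy. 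This identifies $\mathbb{E}[\hat{\tau}_k]$ as a truncated policy-gradient quantity at $\theta = 1/2$ for the value functional $V(\theta) := \frac{1}{T}\sum_t \mathbb{E}_{\mathcal{L}_\theta^t}[Y_t]$.

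I then decompose $|\mathbb{E}[\hat{\tau}_k] - \tau|$ into a \emph{truncation error} (the gap from the truncated expression above to the full untruncated policy gradient $V'(1/2)$) and a \emph{first-order approximation error} (the gap from $V'(1/2)$ to the finite difference $V(1) - V(0) = \tau$). For the truncation error, Assumption~\ref{assumption:mixing-time} implies that the distributions of $X_t$ under $Z_u = 1$ versus $Z_u = 0$ differ by at most $\delta$ in total variation after one step and contract by a factor $\gamma$ per step thereafter; this gives a per-term bound of $M\delta\gamma^{t-u-1}$ for $t > u+k$, which sums to $M\delta\gamma^k/(1-\gamma)$ and matches the second term in \eqref{eq:bias_bound}. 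For the first-order approximation error, a symmetric Taylor expansion of $V$ around $\theta = 1/2$ cancels the second-order term, so $V(1) - V(0) = V'(1/2) + \tfrac{1}{24}V'''(\xi)$ for some $\xi \in (0,1)$. Bounding the cubic remainder requires an Abel-type decomposition of $\tau$ into single-action perturbations coupled across $\pi_0$, $\pi_{1/2}$, and $\pi_1$; each perturbation interacts with at most $O(k)$ transitions through the truncated $Q$-function structure and contributes at most $O(\delta)$ per kernel change, yielding the $O(k^2\delta^2 M)$ term in \eqref{eq:bias_bound}.

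For the variance, I compute $\operatorname{Var}(\hat{\tau}_k) = \frac{1}{T^2}\sum_{u,v}\operatorname{Cov}(W_u, W_v)$ and split the double sum at $|u-v| = k$. For overlapping pairs with $|u-v| \leq k$, the trivial bound $|\operatorname{Cov}(W_u, W_v)| \leq 4(k+1)^2 M^2$ suffices, and since there are $O(Tk)$ such pairs, the contribution is $O((k+1)^3 M^2/T)$ after normalization by $1/T^2$. For non-overlapping pairs with $|u-v| > k$, the reward windows $[u, u+k]$ and $[v, v+k]$ are disjoint in time; applying Assumption~\ref{assumption:mixing-time} to propagate the dependence from the earlier window through the intervening Markov dynamics yields $|\operatorname{Cov}(W_u, W_v)| \leq O((k+1)^2 M^2 \gamma^{|u-v|-k})$, and summing the geometric series over the later index produces the $O(\gamma(k+1)^2 M^2/(T(1-\gamma)))$ term.

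The principal obstacle is the $O(k^2\delta^2 M)$ first-order approximation bound: because the nonstationary MDP admits no stationary reference distribution, the symmetry cancellation at $\theta = 1/2$ must be verified via a path-coupling argument that tracks the midpoint policy against both $\pi_0$ and $\pi_1$ uniformly in $t$, with careful bookkeeping of which kernel deviations enter at each order so that the $T$-dependence drops out and only $k^2\delta^2$ survives. The remaining steps—closed-form identification of $\mathbb{E}[\hat{\tau}_k]$, the truncation-error bound via mixing, and the two-regime covariance split for the variance—are comparatively mechanical once the mixing geometry is established.
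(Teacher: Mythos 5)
Your variance argument is essentially the paper's: the same decomposition into $\Gamma_u$-type summands with $|\Gamma_u|\leq 2(k+1)M$, the same split of the covariance sum at lag $k$, the trivial bound for overlapping pairs giving $(k+1)^3M^2/T$, and a mixing-based geometric decay for non-overlapping pairs giving the $\gamma(k+1)^2M^2/(T(1-\gamma))$ term (the paper formalizes the last step via an $\alpha$-mixing covariance inequality, Lemma~\ref{lemma3_mixing} and Corollary~\ref{corollary:Gamma-alpha-mixing}, but the content is the same). The identification $\mathbb{E}[\hat\tau_k]=\nabla J_k(1/2)$ and your bound on the truncation error $\nabla J_k(1/2)-\nabla J(1/2)=\mathcal{O}(\gamma^k\delta M/(1-\gamma))$ are also correct.

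The gap is in the bias decomposition. You route the remaining error through the \emph{untruncated} policy gradient, i.e.\ you must bound $\nabla J(1/2)-(J(1)-J(0))$. That quantity does not depend on $k$ at all, so it cannot be $\mathcal{O}(k^2\delta^2M)$; your stated mechanism --- ``each perturbation interacts with at most $O(k)$ transitions through the truncated $Q$-function structure'' --- is a non sequitur here, because no truncated $Q$-function appears in the untruncated Taylor remainder. Indeed, Proposition~\ref{them:DQ_bias} shows this remainder is $\mathcal{O}(T^2\delta^2M+T\delta M)$ in general, and even exploiting Assumption~\ref{assumption:mixing-time} inside the derivative expansions the best you can extract is a $k$-free bound of order $\delta^2M/(1-\gamma)^2$ (plus higher powers of $\delta/(1-\gamma)$), which does not match \eqref{eq:bias_bound} and can be much larger than $k^2\delta^2M+\gamma^k\delta M/(1-\gamma)$ when $k\ll 1/(1-\gamma)$. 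The paper avoids this by putting the Taylor step on the \emph{truncated} value function: it compares $\nabla J_k(1/2)$ with $J_k(1)-J_k(0)$, where only the last $k$ kernels before each time $t$ are perturbed away from $P_{1/2}$, so the second-order remainder involves $\mathcal{O}(k^2)$ pairs of kernel deviations and gives $k^2\delta^2M$; the mixing assumption is then used at the level of \emph{values}, bounding $|J_k(1)-J(1)|$ and $|J_k(0)-J(0)|$ by $\gamma^k\delta M/(1-\gamma)$ (concretely, replacing $\rho^\top P_1^{1\to t}$ by $\rho^\top P_{1/2}^{1\to(t-k)}P_1^{(t-k)\to t}$ before expanding the last $k$ factors to first order). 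To repair your proof you would need to swap the order of your two approximations in exactly this way.
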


Theorem~\ref{them:trunc_DQ_bias} shows that the bias and variance of $\hat{\tau}_k$ is jointly governed by the truncation size $k$, the mixing rate $\gamma$, and the kernel deviation $\delta$. In particular, since the DM estimator corresponds to $k = 0$, the theorem demonstrates that the bias reduction benefit of the TPG estimator with $k > 0$ depends on the relationship between $\delta$, $\gamma$, and $k$.  
By rearranging, we see that the TPG estimator with $k \geq 1$ admits a lower bias bound than the DM estimator whenever $\delta < 1 - \gamma$. 
That is, the TPG estimator with $k \geq 1$ effectively debiases the DM estimator when the treatment and control kernels are closer, i.e., when $\delta$ is smaller, or when the process mixes faster, i.e., when $\gamma$ is smaller.

{We derive this theorem over the next two sections.  We start in Section \ref{sec:policy_gradient} by first introducing a broader class of (untruncated) policy gradient (PG) estimators in the nonstationary environment described in Section \ref{sec:pre}. 
In particular, we show that the recent DQ estimator \citep{farias2022markovian} is in fact a PG estimator evaluated at $\pi_{1/2}$ under stationary MDPs.
The policy gradient can be viewed as a first-order Taylor approximation to the GATE (i.e., the difference in policy value for $\pi_1$ and $\pi_0$).  In this way, we obtain a bias bound for the untruncated PG estimator in the nonstationary setting that (1) is dominated by the Taylor error; and (2) unfortunately, scales quadratically in $T$ due to nonstationarity.}

Our analysis of the TPG estimator is then carried out in a parallel manner, and shows this poor bias behavior can be mitigated. We show in Section \ref{sec:trunc_policy_gradient} that the TPG estimator can be interpreted as an estimator of a \emph{hypothetical truncated policy gradient}, that estimates the difference between two policies: first playing $\pi_{1/2}$, and then playing $\pi_1$ or $\pi_0$ respectively in the most recent $k$ time periods.  Because we approximate this difference using the hypothetical truncated policy gradient, a Taylor approximation error is again incurred, similar to the untruncated version; this is the first term in the bias bound \eqref{eq:bias_bound} in the theorem, which we refer to as the {\em Taylor error}. The second term in the bias bound arises by bounding the gap between the hypothetical truncated policy difference and the GATE, taking advantage of the mixing time assumption Assumption \ref{assumption:mixing-time}; thus, we refer to the second term as the {\em mixing bias}.  Crucially, we observe that due to truncation, the estimator $\hat{\tau}_k$ has a bias that scales quadratically in $k$, rather than $T$ as for the untruncated PG estimator. Figure~\ref{fig:figure3} shows the connections among TPG, PG estimators, policy gradients, and the GATE.

\section{{The policy gradient estimator under nonstationarity}}
\label{sec:policy_gradient}

As a starting point in our analysis, we consider the untruncated policy gradient (PG) estimator, i.e., $k = T$.  We show how this estimator can be naturally interpreted as an estimation of the policy gradient for the Bernoulli randomization policy, and use this interpretation to quantify its bias via a Taylor approximation argument.  We start with some preliminary definitions.

\paragraph{Value function.} 
Define the {\em value function} under the Bernoulli randomized policy $\pi_\theta$ as the average outcome over the horizon $[T]$, i.e., in our setting with nonstationary Markovian environments,
\begin{equation}\label{eq:policy_value_func}
    J(\theta) := \frac{1}{T}\sum_{t=1}^{T} \E_{\mathcal{L}_{\theta}^t}\sbrac{Y_t}.
\end{equation}
Note with this definition that the GATE $\tau$ is equal to $J(1) - J(0)$.
We then define the $Q$-value of taking action $z$ at time $t$ under the experimental policy $\pi_{1/2}$\footnote{The policy gradient results in this section also hold for experimental policies with $\theta \neq 1/2$, e.g., see \cite{sutton2018reinforcement}.}
as the expected cumulative reward from time $t$ to $T$.
Formally, we write:
\begin{equation}\label{eq:$Q$-values}
Q_{{1/2}}^{t}(z) 
:=\sum_{u=t}^{T}\mathbb{E}_{\mathcal{L}_{1/2}^t}\left[Y_u \mid Z_t = z\right].
\end{equation}
Note that this definition of the $Q$-value in nonstationarity is \emph{state-independent}. Unlike the standard $Q$-value in infinite-horizon stationary MDPs that depends on the state-action pair, our definition reflects the expected value from taking action $z$ at time $t$, conditioned on the initial state distribution $X_1 \sim \rho$.

\begin{prop}\label{prop:policy_gradient}
    The policy gradient of $ J(\theta)$ in \eqref{eq:policy_value_func} evaluated at the uniform randomized policy $\pi_{1/2}$ exists and is given by the average difference in $Q$-values defined in \eqref{eq:$Q$-values} over the horizon, i.e.,
    \begin{equation}\label{eq:policy_gradient}
        \nabla J(1/2) = \frac{1}{T}\sum_{t=1}^T\left(Q_{1 / 2}^t(1)-Q_{1 / 2}^t(0)\right).
    \end{equation}
\end{prop}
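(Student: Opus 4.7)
The plan is to apply the standard policy gradient (REINFORCE/score-function) identity, observe that ``future'' score terms vanish, and then recognize the resulting summands as differences of $Q$-values. Because $\theta$ enters the joint trajectory distribution only through the independent Bernoulli factors $p_\theta(Z_u) := \theta^{Z_u}(1-\theta)^{1-Z_u}$ at each time step, and the state space is finite, each $\E_{\mc{L}_\theta^t}[Y_t]$ is in fact a polynomial in $\theta$, so smoothness at $\theta = 1/2$ and the interchange of differentiation with finite summations require no additional justification.

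First, for each $t$, I would apply the score-function identity to obtain
\begin{equation*}
\nabla_\theta \E_{\mc{L}_\theta^t}[Y_t] = \E_{\mc{L}_\theta^t}\left[Y_t \sum_{u=1}^T \nabla_\theta \log p_\theta(Z_u)\right].
\end{equation*}
The key observation is that $Y_t$ is a function of $(X_1,Z_1,\ldots,X_t,Z_t)$ together with independent reward and transition noise, while for $u > t$ the action $Z_u$ is drawn afterward and is independent of everything that determines $Y_t$. Combined with the mean-zero property of the score, $\E[\nabla_\theta \log p_\theta(Z_u)] = 0$, every term with $u > t$ contributes zero. At $\theta = 1/2$, one has $\nabla_\theta \log p_\theta(z)\big|_{\theta=1/2} = \frac{1\{z=1\}}{1/2} - \frac{1\{z=0\}}{1/2}$, leaving
\begin{equation*}
\nabla_\theta \E_{\mc{L}_\theta^t}[Y_t]\Big|_{\theta=1/2} = \sum_{u=1}^t \E\left[\left(\frac{1\{Z_u=1\}}{1/2} - \frac{1\{Z_u=0\}}{1/2}\right) Y_t\right],
\end{equation*}
with expectation taken under $\pi_{1/2}$.

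Next I would sum over $t \in [T]$ and swap the order of summation (outer index $u$, inner $t \geq u$) to obtain
\begin{equation*}
T\,\nabla J(1/2) = \sum_{u=1}^T \E\left[\left(\frac{1\{Z_u=1\}}{1/2} - \frac{1\{Z_u=0\}}{1/2}\right) \sum_{t=u}^T Y_t\right].
\end{equation*}
Conditioning on $Z_u$ and using $\P(Z_u=1) = \P(Z_u=0) = 1/2$ reduces each summand to $Q_{1/2}^u(1) - Q_{1/2}^u(0)$, and dividing by $T$ gives the claimed identity. The derivation is essentially the standard policy gradient theorem adapted to a nonstationary finite-horizon MDP with Bernoulli action policies; the only step requiring care is the vanishing of the ``future'' score terms (those with $u > t$), which relies crucially on the fact that the $Z_u$'s are independent of the past under $\pi_\theta$ and on the mean-zero property of the score.
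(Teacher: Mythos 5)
Your proof is correct, but it takes a genuinely different route from the paper's. The paper works directly with the matrix representation $J(\theta) = \frac{1}{T}\sum_t \rho^\top P_\theta^{1\to t} r_\theta$, differentiates the product of kernels via the product rule to obtain a sum of terms of the form $\rho^\top P_{1/2}^{1\to u}\Delta^u P_{1/2}^{(u+1)\to t} r_{1/2}$ (where $\Delta^u = P_1^u - P_0^u$), identifies each such term as a difference of conditional expectations given $Z_u$, and then swaps the order of summation exactly as you do. You instead invoke the score-function (likelihood-ratio) identity on the trajectory distribution, kill the future score terms $u>t$ by independence and the mean-zero property of the score, and arrive at the inverse-propensity-weighted form before conditioning on $Z_u$. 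Both are standard derivations of a policy gradient and both correctly handle the smoothness issue (the value is a polynomial in $\theta$ over a finite state space). Your route has the pleasant side effect that the intermediate expression $\frac{1}{T}\sum_u \E\bigl[\bigl(\tfrac{1\{Z_u=1\}}{1/2}-\tfrac{1\{Z_u=0\}}{1/2}\bigr)\sum_{t=u}^T Y_t\bigr]$ is literally $\E[\hat\tau_T]$, so the unbiasedness of the untruncated DQ estimator for $\nabla J(1/2)$ (which the paper proves separately in Proposition~\ref{them:DQ_bias}) comes for free. The paper's route, by contrast, builds the explicit kernel-perturbation machinery ($\Delta^u$, $P_\theta^{a\to b}$, and their derivatives) that is reused verbatim in the second-derivative and Taylor-error bounds, which a score-function argument would not set up.
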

Proposition~\ref{prop:policy_gradient} shows that the policy gradient $\nabla J(1/2)$ corresponds to the average $Q$-value difference over the horizon $[T]$.\footnote{In stationary MDPs, $J(\theta)$ is known to be smooth under both average‑reward and discounted‑reward formulations (see \cite{sutton2018reinforcement}); our theorem shows it remains smooth for finite-horizon non‑stationary MDPs.} In fact, the difference-in-Q's (DQ) estimator proposed in \cite{farias2022markovian} can be interpreted as a policy gradient in stationary average-reward MDPs (see Appendix~\ref{appedix:dq} for a detailed discussion).

Along these lines, the key step in our analysis of the untruncated PG estimator $\hat{\tau}_T$ is to view it as an estimator for the policy gradient $\nabla J(1/2)$ in the nonstationary setting.
Applying a Taylor expansion around $J(1/2)$, the bias of $\hat{\tau}_T$ relative to $\tau$ decomposes as
\begin{align}\label{eq:bias}
    \mathbb{E}\left[\hat{\tau}_T\right] - \tau 
    &= \mathbb{E}\left[\hat{\tau}_T\right] - \nabla J(1/2) + \nabla J(1/2) - (J(1) - J(0)) \notag\\
    &=  \underbrace{\left( \mathbb{E}[\hat{\tau}_{T}] - \nabla J(1/2) \right)}_{\substack{\text{gradient estimation bias} \\ = 0}}
     + \underbrace{\sum_{n=2}^{\infty} \frac{1}{n!} \nabla^n J(1/2)\left[(-1/2)^n - (1/2)^n\right]}_{\text{Taylor error}},
\end{align}
where the Taylor error corresponds to the higher-order bias in approximating $\tau$ using the policy gradient $\nabla J(1/2)$. 
Now observe by elementary calculation that the untruncated PG estimator is {\em unbiased} for the policy gradient $\nabla J(1/2)$, i.e., $\mathbb{E}[\hat{\tau}_T] = \nabla J(1/2)$.  Therefore, the entire bias is due to the Taylor error. The following proposition shows that both the bias and the variance of the untruncated PG estimator scale {\em quadratically} in the full horizon $T$, without requiring Assumption~\ref{assumption:mixing-time}.

\begin{prop}
\label{them:DQ_bias}
Without imposing Assumption~\ref{assumption:mixing-time}, the bias of
$\hat{\tau}_T$ with respect to the GATE $\tau$ is bounded by
\begin{equation}
        \abs{\mathbb{E}\sbrac{\hat \tau_T} - \tau}
        =
        O\pbrac{T^2 \delta^2 M}.
        \label{eq:untrunc_bias_bound}
\end{equation}
In fact, the preceding bound applies for any unbiased estimator $\hat{\tau}$
of $\nabla J(1/2)=J'(1/2)$.
Moreover, the variance of $\hat{\tau}_T$ is bounded as
$
    \operatorname{Var}(\hat{\tau}_T) = O\pbrac{T^2M^2}.
$
\end{prop}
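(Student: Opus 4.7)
The plan is to split the statement into the bias and variance parts, handle the bias via a first-order Taylor expansion of $J(\theta)$ with integral remainder, and handle the variance via a simple almost-sure magnitude bound on $\hat{\tau}_T$.

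For the bias, I would first verify by direct conditioning that $\E[\hat{\tau}_T] = \nabla J(1/2)$: under the Bernoulli$(1/2)$ policy, for each $u$,
\begin{equation*}
\E\sbrac{\pbrac{\tfrac{1\{Z_u=1\}}{1/2}-\tfrac{1\{Z_u=0\}}{1/2}}\sum_{t=u}^T Y_t} \;=\; Q_{1/2}^u(1) - Q_{1/2}^u(0),
\end{equation*}
and averaging over $u \in [T]$ combined with Proposition~\ref{prop:policy_gradient} gives the claim. Since $\tau = J(1)-J(0)$, I would then apply Taylor's theorem with integral remainder at $\theta = 1/2$:
\begin{equation*}
J(1) - J(0) - \nabla J(1/2) \;=\; \int_{1/2}^{1}(1-s)\,J''(s)\,ds \;+\; \int_{0}^{1/2} s\,J''(s)\,ds,
\end{equation*}
which yields $\abs{\E\sbrac{\hat{\tau}_T} - \tau} \leq \tfrac{1}{4}\sup_{\theta \in [0,1]}\abs{J''(\theta)}$. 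This reduces the whole bias bound to a uniform bound on $\abs{J''(\theta)}$.

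The main technical step is then bounding $\abs{J''(\theta)}$. I would write $J(\theta) = \frac{1}{T}\sum_{t=1}^T \rho\,\prod_{s=1}^{t-1} P_\theta^s\, r_\theta$, parametrize $P_\theta^s = P_0^s + \theta\Delta P^s$ and $r_\theta = r(\cdot,0) + \theta\Delta r$ with $\Delta P^s := P_1^s - P_0^s$ and $\Delta r := r(\cdot,1) - r(\cdot,0)$, and apply the product rule. Because each factor is affine in $\theta$, the second derivative is a finite sum of terms of two types: (a) two of the $P_\theta^s$ factors replaced by $\Delta P^s$ with $r_\theta$ intact, of which there are $\binom{t-1}{2}$; and (b) one $P_\theta^s$ replaced by $\Delta P^s$ and $r_\theta$ replaced by $\Delta r$, of which there are $t-1$. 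The definition of $\delta$ gives $\|\mu \Delta P^s\|_1 \leq 2\delta\,\|\mu\|_1$ for any signed measure $\mu$, stochastic matrices are non-expansive in $\ell_1$ (so $\|\mu P_\theta^s\|_1 \leq \|\mu\|_1$), and $\|r_\theta\|_\infty,\|\Delta r\|_\infty = \mathcal{O}(M)$. Hence each type-(a) term is $\mathcal{O}(\delta^2 M)$ and each type-(b) term is $\mathcal{O}(\delta M)$, so $\abs{J''(\theta)} \leq \tfrac{1}{T}\sum_{t=1}^T \mathcal{O}((t-1)^2\delta^2 M + (t-1)\delta M) = \mathcal{O}(T^2\delta^2 M + T\delta M)$, which proves the bias bound. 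Because the argument uses $\hat{\tau}_T$ only through its unbiasedness for $\nabla J(1/2)$, the parenthetical remark in the proposition follows at no extra cost.

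For the variance, I would observe that the coefficient $\tfrac{1\{Z_u=1\}}{1/2}-\tfrac{1\{Z_u=0\}}{1/2}$ is bounded by $2$ in absolute value, and $\abs{\sum_{t=u}^T Y_t} \leq (T-u+1)M \leq TM$ almost surely, giving $\abs{\hat{\tau}_T} \leq 2TM$ pointwise and hence $\operatorname{Var}(\hat{\tau}_T) \leq \E[\hat{\tau}_T^2] = \mathcal{O}(T^2 M^2)$. The main obstacle in the overall argument is the combinatorial bookkeeping in the third paragraph: propagating $\ell_1$ bounds across the nonstationary product $\rho\,\prod_s P_\theta^s$ interleaved with perturbation operators $\Delta P^s$ without losing a factor of $\delta$ at each $\Delta P^s$. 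Everything else is essentially a direct calculation.
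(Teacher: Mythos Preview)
Your proposal is correct and follows essentially the same approach as the paper: unbiasedness for $\nabla J(1/2)$ by direct conditioning, the bias reduced to $\tfrac14\sup_\theta|J''(\theta)|$ via the integral-remainder form of Taylor's theorem, and $|J''(\theta)|$ bounded by the product rule applied to $\rho\prod_s P_\theta^s\,r_\theta$ (the paper organizes this as an iterated first derivative, you enumerate the second-derivative terms combinatorially, but the resulting $\mathcal{O}((t-1)^2\delta^2 M+(t-1)\delta M)$ per-$t$ bound is the same). For the variance, your pointwise bound $|\hat\tau_T|\leq 2TM$ is a slightly more economical version of the paper's expansion into $\Gamma_u^2$ and $\Gamma_i\Gamma_j$ terms; note only that your integral-remainder identity has a sign typo (the two integrals should be subtracted, not added), which is harmless once you take absolute values.
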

Proposition~\ref{them:DQ_bias} can be viewed as the untruncated counterpart of Theorem~\ref{them:trunc_DQ_bias} in the absence of Assumption~\ref{assumption:mixing-time}.
Together, the two results clarify the distinct roles of mixing and truncation. The mixing condition yields a uniform control of the bias, but does not by itself reduce the variance of the untruncated estimator. Truncation is introduced precisely to reduce this variance, improving the order from $O(T^2M^2)$ for the untruncated estimator to the corresponding truncated order, at the cost of an additional bias term that decays geometrically in the truncation length. Thus, the advantage of the TPG estimator lies in its favorable bias--variance trade-off: it achieves substantial variance reduction while incurring only a controlled additional mixing bias. The later numerical results in Section~\ref{sec:numerical} demonstrate this trade-off clearly.

\section{Estimation via the truncated policy gradient estimator}\label{sec:trunc_policy_gradient}

As shown in Proposition \ref{them:DQ_bias}, the performance of the untruncated PG estimator is poor over larger time horizons $T$; due to the nonstationarity of the environment both bias and variance grow quadratically.  In this section, we show that the TPG estimator exhibits more favorable performance in this regime. In particular, 
we carry out a parallel analysis of the TPG estimator \eqref{eq:trunc_IPW_est} with general $k$, and show how we can leverage the mixing time (Assumption \ref{assumption:mixing-time}) to obtain more favorable bounds on bias and variance.

We start by introducing the hypothetical truncated policy gradient in nonstationary MDPs and establish its connection to the TPG estimator $\hat\tau_k$. 

\paragraph{Hypothetical truncated policy.}
Let $\mathcal{L}_\theta^{t, k}$ denote the distribution of the outcome $Y_t$ when policy $\pi_{1 / 2}$ is followed for the first $(t-k-1)$ states (if $t > k+1$), and policy $\pi_\theta$ is followed for the remaining $k + 1$ states (i.e., a hypothetical truncated policy), conditioned on the initial state distribution $X_1 \sim \rho$, we have:
\begin{equation}
\label{eq:trunc_law}
\mathcal{L}_\theta^{t,k}
= \operatorname{Law}\left(
    Y_t \,\middle|\,
    \begin{array}{l}
        X_1 \sim \rho, \\[4pt]
        \{Z_u\}_{u=1}^{t-k-1} \sim \text{i.i.d.\ Bernoulli}(1/2), \\[4pt]
        \{Z_u\}_{u=t-k}^{t} \sim \text{i.i.d.\ Bernoulli}(\theta)
    \end{array}
\right).
\end{equation}
The {\em truncated policy value function} $J_k(\theta)$ is defined as
\begin{equation}\label{eq:trunc_value_func}
    J_k(\theta) 
    := \frac{1}{T}{\sum_{t=1}^{T} \E_{\mathcal{L}_\theta^{t, k}}\sbrac{Y_t}}.
\end{equation}
We refer to the gradient of $J_k(\theta)$ as the \emph{truncated policy gradient}. 
For truncation size $0 \leq k \leq T$,  define the \emph{truncated $Q$-value} as
\begin{equation}\label{eq:trunc_Q_values}
Q_{{1/2}}^{t, k}(z) 
:=\sum_{u=t}^{\min(t+k,T)}\mathbb{E}_{\mathcal{L}_{1/2}^t}\left[Y_u \mid Z_t = z\right].
\end{equation}

\begin{prop}\label{prop:trunc_policy_gradient}
    The truncated policy gradient $\nabla J_k(\theta)$ evaluated at the uniform randomized policy $\pi_{1/2}$ exists and is given by the average difference in truncated $Q$-values in \eqref{eq:trunc_Q_values} over the horizon, i.e.,
    \begin{equation*}
        \nabla J_k(1/2) = \frac{1}{T}\sum_{t=1}^T\left(Q_{1 / 2}^{t,k}(1)-Q_{1 / 2}^{t,k}(0)\right).
    \end{equation*}
\end{prop}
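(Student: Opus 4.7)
The plan is to derive $\nabla J_k(1/2)$ by the standard score-function (REINFORCE) calculation, specialized to the fact that under $\mc{L}_\theta^{t,k}$ only the actions in a trailing window depend on $\theta$, and then swap a double sum to recover the $Q$-value form; this parallels the derivation behind Proposition~\ref{prop:policy_gradient}. Concretely, for each $t\in[T]$ I would write $\E_{\mc{L}_\theta^{t,k}}\sbrac{Y_t}=\E\sbrac{h_t(Z_1,\ldots,Z_t)}$, where $h_t(z_1,\ldots,z_t):=\E\sbrac{Y_t\mid Z_1=z_1,\ldots,Z_t=z_t}$ absorbs the auxiliary randomness from $X_1\sim\rho$, the state transitions, and the reward noise. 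Under $\mc{L}_\theta^{t,k}$, only the actions in the trailing window $W_t^k:=\cbrac{\max(1,t-k),\ldots,t}$ are drawn i.i.d.~$\text{Bern}(\theta)$ (the earlier actions are i.i.d.~$\text{Bern}(1/2)$), so $\E\sbrac{h_t}$ depends on $\theta$ only through this window.

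Next I would invoke the elementary identity: for any bounded $h$ and independent $Z_{s_1},\ldots,Z_{s_m}\sim\text{Bern}(\theta)$,
\[
\frac{d}{d\theta}\E\sbrac{h(Z_{s_1},\ldots,Z_{s_m})}=\sum_{i=1}^{m}\pbrac{\E\sbrac{h\mid Z_{s_i}=1}-\E\sbrac{h\mid Z_{s_i}=0}},\qquad \theta\in(0,1),
\]
which follows from a product-rule computation using that each marginal $\P(Z_{s_i}=z)=\theta^z(1-\theta)^{1-z}$ is affine in $\theta$; differentiation under the expectation is trivial here because everything reduces to a finite polynomial in $\theta$. Applied to $h_t$ at $\theta=1/2$, where the remaining window actions are also distributed as $\text{Bern}(1/2)$ and hence match $\pi_{1/2}$, this yields
\[
\left.\frac{d}{d\theta}\E_{\mc{L}_\theta^{t,k}}\sbrac{Y_t}\right|_{\theta=1/2}=\sum_{s\in W_t^k}\pbrac{\E_{\mc{L}_{1/2}^t}\sbrac{Y_t\mid Z_s=1}-\E_{\mc{L}_{1/2}^t}\sbrac{Y_t\mid Z_s=0}}.
\]

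Finally, I would substitute into $\nabla J_k(1/2)=\tfrac{1}{T}\sum_t \tfrac{d}{d\theta}\E_{\mc{L}_\theta^{t,k}}\sbrac{Y_t}\big|_{\theta=1/2}$ and swap the order of the double sum over $t$ and $s$. For each fixed $s\in[T]$, the set of $t$'s with $s\in W_t^k$ is precisely $\cbrac{s,s+1,\ldots,\min(s+k,T)}$; relabeling the outer index as $t$ and the inner index as $u$ identifies the resulting inner sum with $Q_{1/2}^{t,k}(1)-Q_{1/2}^{t,k}(0)$ via \eqref{eq:trunc_Q_values}, which is the claimed formula. I do not expect any serious obstacle: the only delicate bookkeeping is the boundary behavior for small $t$ (equivalently, small $s$), which the $\max/\min$ truncations in $W_t^k$ and in the $Q$-value definition handle uniformly, and the remainder of the argument directly mirrors the derivation of Proposition~\ref{prop:policy_gradient}.
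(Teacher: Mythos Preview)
Your argument is correct and ultimately coincides with the paper's proof: both compute $\left.\tfrac{d}{d\theta}\E_{\mc{L}_\theta^{t,k}}\sbrac{Y_t}\right|_{\theta=1/2}$ as a sum over the trailing window of single-action contrasts and then swap the double sum to obtain the truncated $Q$-value form. The only difference is cosmetic: the paper differentiates the matrix product $\rho^\top P_{1/2}^{1\to\max(t-k,1)}P_\theta^{\max(t-k,1)\to t}r_\theta$ directly via the product rule (producing the $\Delta^u$ terms), whereas you invoke the equivalent REINFORCE-style Bernoulli identity. Your phrasing has the mild advantage of sidestepping the transition-kernel bookkeeping, while the paper's matrix computation makes the link to $\Delta^u$ explicit (useful elsewhere in the bias analysis); substantively, the two derivations are the same.
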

With the TPG estimator $\hat \tau_k$ defined in \eqref{eq:trunc_IPW_est}, we have
\begin{equation}\label{eq:trunc_DQ_est}
    \mathbb{E}\sbrac{\hat \tau_k} =  \frac{1}{T}\sum_{t=1}^T \left(Q_{{1/2}}^{t, k}(1) - Q_{{1/2}}^{t, k}(0) \right) = \nabla J_k(1/2).
\end{equation}
This indicates that the TPG estimator $\hat{\tau}_k$ can be viewed as an estimation of $\tau$ via an unbiased estimation of the truncated policy gradient $\nabla J_k(1/2)$. Accordingly, the bias of $\hat{\tau}_k$ relative to $\tau$ decomposes as 
\begin{equation*}
    \mathbb{E}\left[\hat \tau_k\right] - \tau = \underbrace{\nabla J_k(1/2) - (J_k(1) - J_k(0))}_{\text{{Taylor error w.r.t. }} J_k(1/2)} + \underbrace{(J_k(1) - J(1)) + (J(0) - J_k(0))}_{\text{mixing bias}},
\end{equation*}
which includes a {Taylor error} between $\nabla J_k(1/2)$ and $J_k(1) - J_k(0)$, and an additional mixing bias between the original value function $J(\cdot)$ and its truncated counterpart $J_k(\cdot)$.  We note that the {Taylor error} is of order $O(k^2)$, as the treatment probability affects only the most recent $k$ states rather than the full trajectory (cf. the $O(T^2)$ bias of $\hat{\tau}$).

Figure~\ref{fig:figure3} illustrates the connections among the GATE, the policy gradients $\nabla J(1/2)$ and $\nabla J_k(1/2)$, and the corresponding untruncated PG estimator $\hat{\tau}_T$ and TPG estimator $\hat{\tau}_k$. The bias of the TPG estimator $\hat \tau_k$ involves an additional mixing bias but reduces the {Taylor error.}
The variance of $\hat{\tau}_k$ can be bounded by applying the mixing-time property in Assumption~\ref{assumption:mixing-time} within a strong $\alpha$-mixing framework \citep{rosenblatt1956central}. 
Unlike the variance bound for $\hat\tau$ that grows with $T$, the variance of $\hat \tau_k$ remains bounded in the asymptotic regime as $T\rightarrow \infty$ for any fixed truncation size $k$.

\paragraph{Choice of size $k$.}
Post-calibration methods such as Lepski’s method \citep{lepski1997optimal, su2020adaptive} can select $k$ in a data-driven way using the variance estimates in Section~\ref{sec:var_est}, but should carefully handle the non-monotone bias bound.
As a practical guideline, we suggest starting with the standard DM estimator (i.e., $k=0$) and gradually increasing $k$ until the variance grows more rapidly than the change in the estimation. 
Our numerical results show that, in most cases, increasing $k$ corrects the bias of the DM estimator and moves the estimate closer to the ground-truth GATE, thereby revealing the bias direction. A practical heuristic $k$-selection algorithm is provided in Appendix~\ref{appendix:algo}.

\section{Nonstationary CLT and variance estimation}\label{sec:var_est}

In this section, we establish the asymptotic normality of the TPG estimator $\hat\tau_k$ for any finite truncation size $k$ and present a consistent nonstationary estimator of its asymptotic variance $\sigma_k^2$.
Define $w_t^{\mathrm{IPW}} := 2(2Z_t - 1)$, $B_t := \sum_{i=\max\{1,\, t-k\}}^t w_i^{\mathrm{IPW}}Y_t$, and $\bar B := T^{-1}\sum_{t=1}^T B_t$.
For any finite $k$, we consider the following assumptions to regularize the nonstationarity:
\begin{assumption}[Asymptotic variance]\label{ass:asymptotic_variance}
We assume that the following limit exists and is finite and positive:
\[
\sigma_k^2
:=
\lim_{T\to\infty}
\frac{1}{T}\,
\Var\!\left(\sum_{t=1}^T B_t\right)
\in (0,\infty).
\]
\end{assumption}

\begin{assumption}[Ces\`aro-$L^2$ mean stability]\label{ass:mean-drift}
Define $\mu_t := \E[B_t]$ and the averaged mean $\bar\mu_T := T^{-1}\sum_{t=1}^T \mu_t$.  
Assume that the Ces\`aro-$L^2$ deviation satisfies
$$
\Delta_T^2 := \frac{1}{T}\sum_{t=1}^T \bigl(\mu_t - \bar\mu_T\bigr)^2 = o(T^{-2/3}).
$$
\end{assumption}
Assumptions~\ref{ass:asymptotic_variance} and \ref{ass:mean-drift} impose mild regularity on the system’s nonstationarity; i.e., Assumption~\ref{ass:asymptotic_variance} ensures that the relevant empirical variances converge in large samples, while Assumption~\ref{ass:mean-drift} controls the mean drift of $B_t$ via a Ces\`aro-$L^2$ condition.
Under these conditions, Theorem~\ref{thm:truncDQ_CLT_HAC_concise} establishes the CLT for the TPG estimator and presents a \emph{nonstationary heteroskedasticity- and autocorrelation-consistent} (HAC) variance estimator.

\begin{theorem}\label{thm:truncDQ_CLT_HAC_concise}
Fix a finite truncation size $k \geq 0$. Under Assumptions~\ref{assumption:mixing-time}, \ref{ass:asymptotic_variance}, and \ref{ass:mean-drift}, define $\hat V_t := B_t - \bar B$,

$$
\hat\Gamma_\ell := \frac{1}{T}\sum_{t=1}^{T-\ell} \hat V_t \hat V_{t+\ell}, 
\quad
w^{\mathrm{NW}}_\ell := 1-\frac{\ell}{L_T+1},
$$
where $L_T \to \infty$ with $L_T = O(T^{1/3})$.
We construct the nonstationary HAC variance estimator as

$$
\hat\Omega_T := \hat\Gamma_0 + 2\sum_{\ell=1}^{L_T} w^{\mathrm{NW}}_\ell\,\hat\Gamma_\ell.
\vspace{-.3cm}
$$
Then, as $T \to \infty$,

$$
\sqrt{T}\,(\hat\tau_k - \E[\hat\tau_k]) \;\Rightarrow\; \mathcal{N}(0,\sigma_k^2),
\quad
\hat\Omega_T \;\xrightarrow{p}\; \sigma_k^2.
$$
\end{theorem}

\noindent\textit{Proof sketch.}
The proof of Theorem~\ref{thm:truncDQ_CLT_HAC_concise} has two parts: 
(i) establish a CLT under a nonstationary MDP, and 
(ii) show the consistency of the proposed nonstationary HAC variance estimator. 
A key challenge is that the traditional HAC estimator (see, e.g., \cite{hansen1992consistent}) is infeasible with single-trajectory nonstationary data, as $\E[B_t]$ cannot be consistently estimated.
We overcome this by working with time-averaged blocks $B_t - \bar B$ and leveraging Ces\`aro-$L^2$ mean stability, which controls the time-variation of $\E[B_t]$ and ensures that the proposed nonstationary HAC estimator converges to $\sigma_k^2$.

\section{Numerical Results}\label{sec:numerical}
In this section, we evaluate the performance of our TPG estimator through a series of experiments. These include: (i) a two-state nonstationary Markov decision process, (ii) a queueing simulator calibrated using real-world nonstationary patient arrival data adapted from \cite{li2023experimenting}, and (iii) a large-scale NYC ride-sharing simulator with real data adapted from \cite{peng2025differences}. All code and experiment details are publicly available.\footnote{\url{https://github.com/wenqian-xing/TPG-Estimator}}

\begin{figure}[!t]
\centering
\begin{minipage}{0.45\textwidth}
\centering
\begin{tabular}{ccc}
\toprule
$k$ & MAE (\%) & STD \\
\midrule
0 (DM)    & 50.400 & 0.126 \\
1     & 30.05 & 0.243 \\
3     & 24.97 & 0.465 \\
5     & 29.12 & 0.684 \\
10    & 47.20 & 1.210 \\
50    & 206.09 & 5.328 \\
100   & 402.77 & 10.470 \\
$T$ (PG)  & 11332.40 & 300.064 \\
\bottomrule
\end{tabular}
\captionof{table}{Evaluation of $\hat\tau_k$ across truncation sizes $k$, averaged over varying mixing rates $\gamma$. Reported metrics: MAE (mean absolute error, in \%) and STD (standard deviation).
Average true ATE: $\tau = 2.14$.
}
\label{tab:truncDQ}
\end{minipage}
\hfill
\begin{minipage}{0.5\textwidth}
\centering
\includegraphics[width=0.85\linewidth]{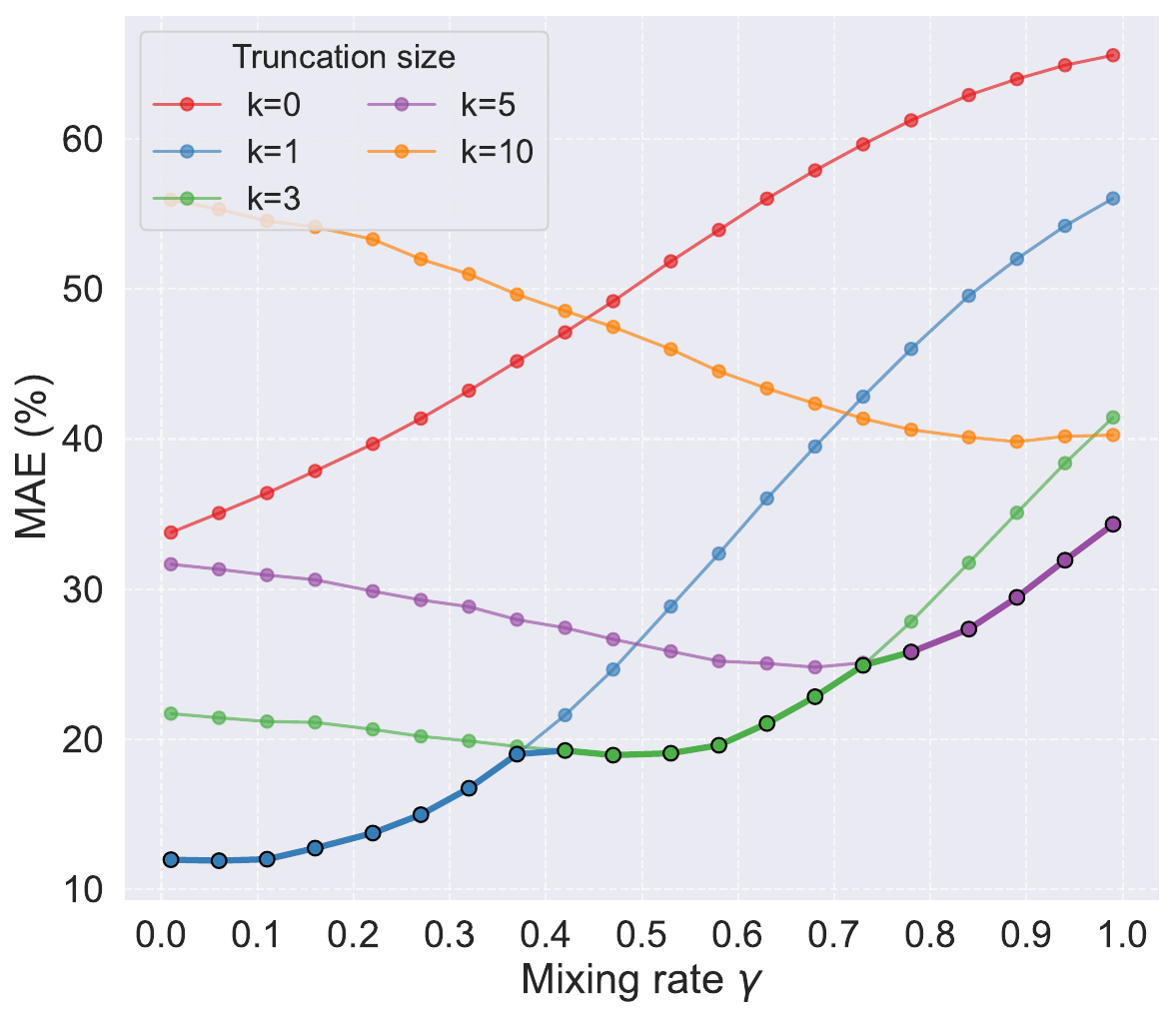}
\caption{MAE of $\hat\tau_k$ with varying mixing rates $\gamma$.  The optimal truncation size $k$ for each $\gamma$ is bolded.
}
\label{fig:truncDQ}
\end{minipage}
\end{figure}

\paragraph{Two-state nonstationary MDP.}
We simulate a two-state nonstationary MDP to evaluate the bias and variance of the TPG estimator under mixing rates $\gamma$ ranging in $[0,1]$. 
All experiments are repeated over 1,000 independent trials with a horizon of $T = 5000$.
For each trial, we simulate a treatment trajectory, a control trajectory, and an experimental trajectory. The ground truth $\tau$ is computed from the treatment and control trajectories, and the TPG estimator is evaluated on the experimental trajectory. Further details on the simulation are provided in Appendix~\ref{Appendix:two_state}.

In Table~\ref{tab:truncDQ}, we observe that the DM estimator (i.e., $k=0$) exhibits substantial bias, with a mean absolute error (MAE) approaching 50\%. The untruncated PG estimator ($k = T$) exhibits such high bias that the sign is incorrect; and very high variance as well. In contrast, the TPG estimator substantially reduces the bias even with $k=1$ while maintaining low variance. Figure~\ref{fig:truncDQ} further illustrates the MAE across different truncation sizes and mixing rates; {the optimal choice of $k$ can vary with $\gamma$, but remains small relative to $T = 5000$}.

\paragraph{Case Study I: {Queueing simulator based on real-world nonstationary patient arrivals.}}  %

To test the real-world effectiveness of our estimator, we implement a queueing simulator with nonstationary arrivals using a similar setup to \cite{li2023experimenting}.  The simulator is a queueing birth-death chain with inhomogeneous, state-dependent Poisson arrival rates based on a nonstationary estimate of patient arrival rates to an emergency department (both based on day of week, and time of day), obtained using data from the SEEStat Home Hospital database \citep{seestat_data}; see Figure 9, Section 5.2, and Appendix B.1 of \cite{li2023experimenting} for details.  In the simulator, customers are less likely to join when the state is higher (i.e., the emergency department is busier).  While the arrival rate is state-dependent in the simulator, the service rate is constant in all states.  We implement a discrete-time version of this simulator, with a constant proportional change in the arrival rates between treatment and control (e.g., a change in efficiency of the emergency department). Full details are provided in Appendix \ref{Appendix:real_case}. We run this simulator for 40,320 time steps (corresponding to four weeks of real time with 1-minute-long time steps), with 500 simulated experiment trajectories. The corresponding switchback experiment uses 1-hour intervals.

\begin{table}[t]
\centering
\begin{tabular}{c ccccccccc}
\toprule
$k$ & 2 & 3 & 4 & \textbf{5} & 6 & 7  \\
\midrule
Coverage (\%) 
& 2.0 & 44.2 & 93.2 & \textbf{94.6} & 85.6 & 78.8  \\
\bottomrule
\end{tabular}
\caption{Empirical coverage of 95\% confidence interval (CI) using the nonstationary HAC variance estimator in Case Study I. The choice $k=5$ is selected by the $k$-selection algorithm in Appendix~\ref{appendix:algo}, which yields near-nominal 95\% coverage.}
\label{tab:ci_coverage_k}
\end{table}

\begin{figure}[!t]
\centering
\includegraphics[width=0.5\linewidth]{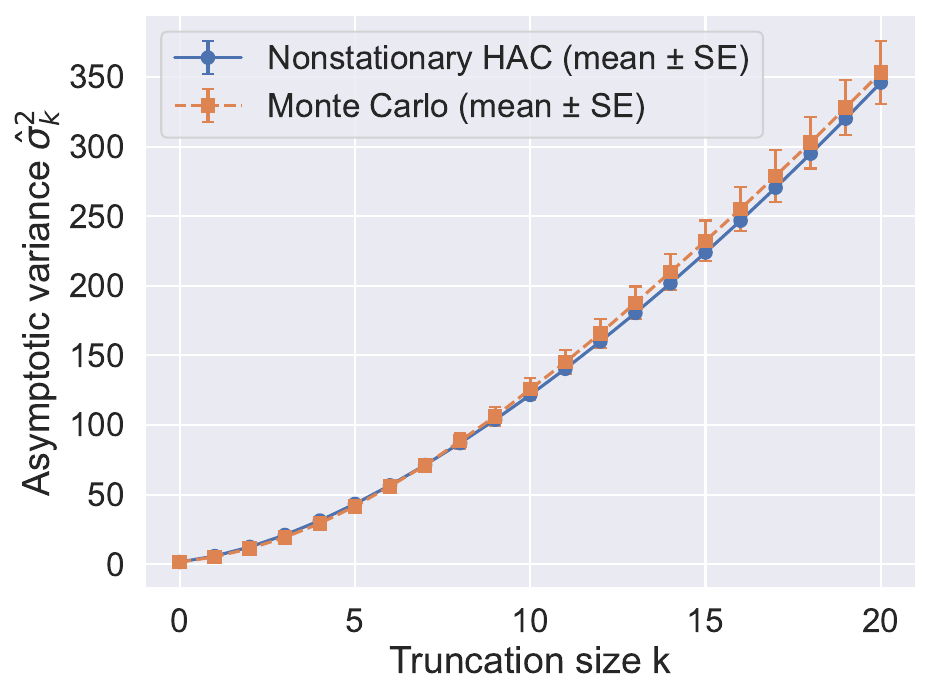}
\vspace{-.25cm}
\caption{Variance estimates for the TPG estimator $\hat{\tau}_k$ in a realistic patient queueing system in Section~\ref{sec:numerical}.}
\label{fig:hac_mc_var}
\end{figure}

We compare the TPG estimator with the stationary DQ and LSTD-OPE~\citep{farias2022markovian}, and with the bias-corrected (BC) estimator (with burn-in) under the Bernoulli \textit{switchback design}~\citep{hu2022switchback}. See  Figure~\ref{fig:case_study}. LSTD-OPE yields a mean of $-2.39$, outside the plotted range. Both DM and stationary DQ exhibit substantial bias. By contrast, the TPG estimator with $k=4$ or $5$ (chosen by Appendix~\ref{appendix:algo}) closely approximates the GATE. The BC estimator with a 20-minute burn-in also performs well, but it depends on a more complex switchback design that typically requires prior knowledge to choose the interval.
Table~\ref{tab:ci_coverage_k} and Figure~\ref{fig:hac_mc_var} further validate the nonstationary HAC variance estimator in this case study via Monte Carlo simulation: it closely matches the simulated variance, and the resulting 95\% CI attains near-nominal coverage at $k=5$.

\begin{figure}[H]
        \centering
        \includegraphics[width=1\linewidth]{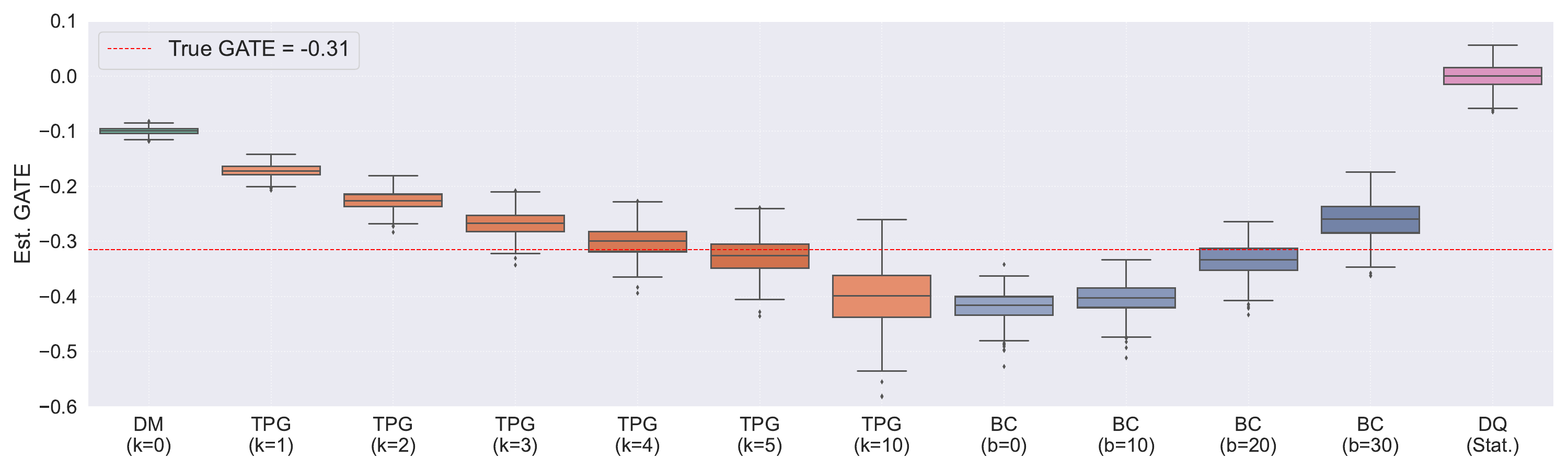}
    \caption{Estimation under real-world nonstationary patient arrivals.}
    \label{fig:case_study}
\end{figure}

\paragraph{Case Study II: NYC ride-sharing simulator.}
We further adapt a large-scale NYC ride-sharing simulator using real-world data from \cite{peng2025differences}. Rider arrivals and destination locations are generated based on historical records provided by New York City’s Taxi and Limousine Commission \cite{TLC}, restricted to trips within Manhattan. In this setting, we evaluate the platform’s pricing policy, where the fare offered to a rider is computed as a fixed rate per unit of time multiplied by the estimated trip duration from pickup to destination. The treatment policy applies a higher rate per unit of time than the control policy. Upon receiving a price quote, the rider can either accept the trip—triggering a dispatch of the nearest available vehicle and generating a reward equal to the trip fare—or reject it, resulting in no dispatch and zero reward. 
We run the simulator for 500,000 rider arrivals, i.e., approximately 142 hours of simulated time, across 100 simulated experiment trajectories.
\begin{figure}[H]
    \centering
    \includegraphics[width=1\linewidth]{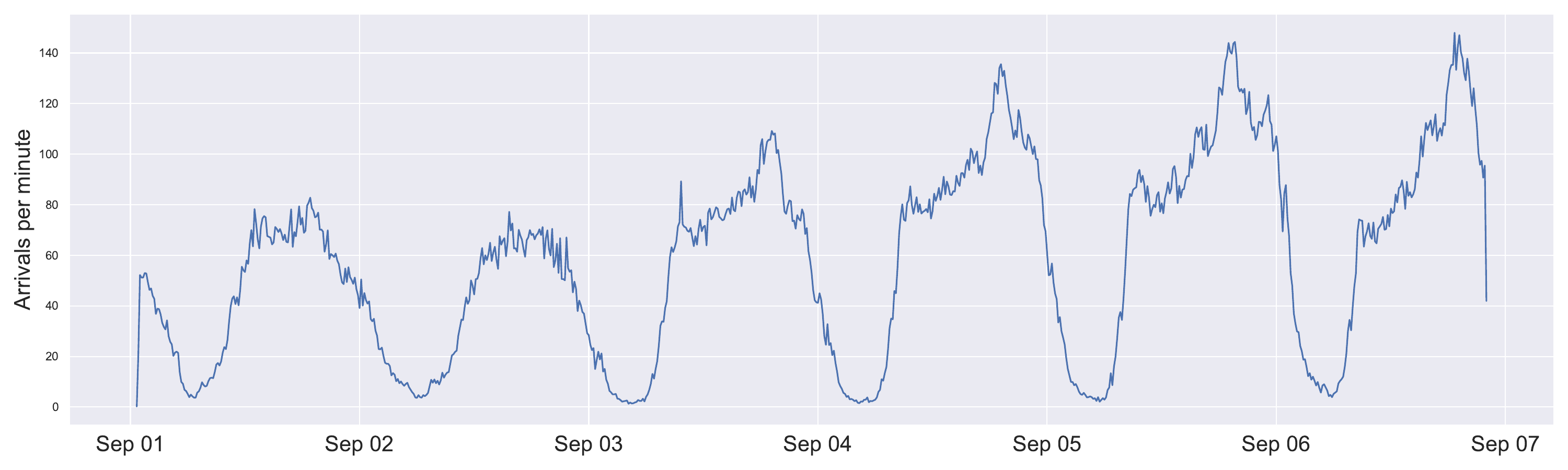}
    \caption{Nonstationary arrival rate (per minute) in the NYC ride-sharing dataset.}
    \label{fig:events}
\end{figure}
As shown in Figure~\ref{fig:events}, rider arrivals in the NYC ride-sharing data from September 2024 exhibit complex, non-stationary patterns \citep{TLC}, presenting a significant challenge for treatment effect estimation.
Further details on the simulator and the rider choice model are provided in Appendix~\ref{Appendix:real_case_NYC}.

\begin{figure}[H]
        \centering
        \includegraphics[width=1\linewidth]{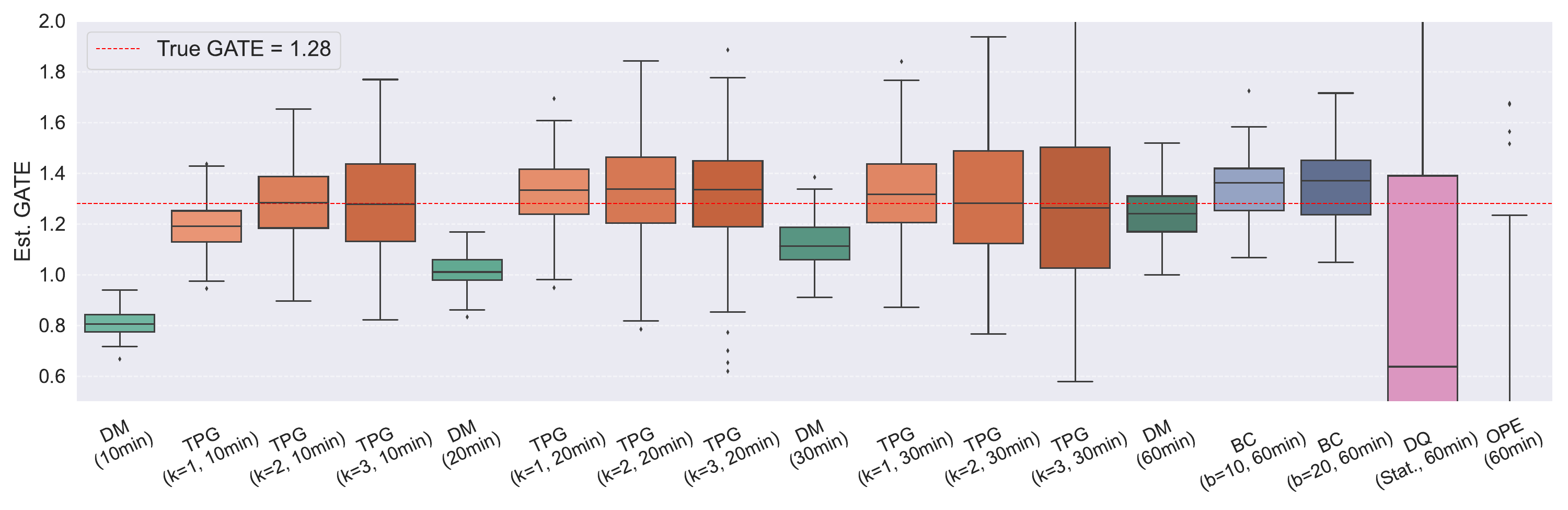}
    \caption{Estimation under NYC ride-sharing simulation with real data. 
    }
    \label{fig:case_study_NYC}
\end{figure}

In this experiment, {mirroring common practice in industry}, we implement a Bernoulli switchback design, in which each fixed time interval (e.g., 10 or 20 minutes) is independently assigned to treatment or control with probability $1/2$. The system state for each interval is defined as the average number of available drivers.
Figure~\ref{fig:case_study_NYC} presents the results of various estimators for this setting. Our first finding is that the DM estimator can perform well but is highly sensitive to the choice of switchback interval. For example, intervals of 10, 20, and 30 minutes all exhibit large bias, while the 60-minute interval performs well. This means the experimenter must identify the appropriate interval in advance. In contrast, the TPG estimators (which are all extensions of DM)---consistently reduce bias across a wide range of switchback intervals. This suggests that TPG can serve as an effective postprocessing tool to compensate for imperfect interval selection during the design phase, at the cost of slightly increased variance. We also evaluated several state-of-the-art estimators {for switchback experiments}. Under the 1-hour interval, the BC estimator offers no improvement over the naive DM and shows slight overestimation when the burn-in period $b$ is set to 10 or 20 minutes. Meanwhile, the stationary DQ and OPE estimators struggle to learn reliable stationary $Q$-values, resulting in estimates with both high bias and high variance.
In addition, we can observe the bias--variance trade-off described after Proposition~\ref{them:DQ_bias}, i.e., increasing the truncation size $k$ leads to a significant increase in variance, whereas the mixing bias induced by choosing a smaller $k$ appears to be modest. In particular, in the ride-sharing case study, the TPG estimator with $k=2$ or $k=3$ already produces estimates that are very close to the GATE.

This case study also demonstrates that TPG estimators can be applied not only to the naive Bernoulli A/B design but also to more general designs, such as the Bernoulli switchback design. This provides additional flexibility for using TPG in practical systems. It also raises an important question regarding how to co-design the experiment interval (which controls the mixing rate $\gamma$) and the truncation size 
$k$ together to optimize treatment effect estimation. We leave this as a promising open question for future research.

\section{Conclusion}

In this paper, we introduced the TPG estimator for the GATE in nonstationary Markovian settings, which operates by differencing $k$-step reward aggregates. We developed a nonstationary HAC variance estimator and established a corresponding CLT for the TPG estimator, enabling both data-driven selection of the truncation size $k$ and valid statistical inference. 
Finally, through theoretical analysis and two realistic case studies, we showed that under mild mixing-time conditions, TPG achieves low bias and variance in single-trajectory, nonstationary Markov environments, without requiring prior knowledge of the state space or multiple independent trajectories, and that the proposed variance estimator provides a consistent asymptotic variance estimation.

\section*{Acknowledgments}
The authors are especially grateful to Andrew Zheng for his contributions to the ride-sharing simulator, to Shuangning Li for generously sharing details of the nonstationary patient arrival queueing simulator, and to Stefan Wager for insightful discussions on the policy gradient theorem.
We also gratefully acknowledge the SEE Research Team at the Technion—Israel Institute of Technology for providing the queueing data used in our simulations. 
This work was supported in part by Stanford Data Science, the Stanford Human-Centered AI Institute, and the National Science Foundation under Grant 2205084.

\bibliographystyle{informs2014}
\bibliography{references}

\begin{thebibliography}{37}
\providecommand{\natexlab}[1]{#1}
\providecommand{\url}[1]{\texttt{#1}}
\providecommand{\urlprefix}{URL }

\bibitem[{Arlotto \protect\BIBand{} Steele(2016)}]{arlotto2016central}
Arlotto A, Steele JM (2016) A central limit theorem for temporally
  nonhomogenous markov chains with applications to dynamic programming.
  \emph{Mathematics of Operations Research} 41(4):1448--1468.

\bibitem[{Aronow \protect\BIBand{} Samii(2017)}]{aronow2017estimating}
Aronow PM, Samii C (2017) Estimating average causal effects under general
  interference, with application to a social network experiment .

\bibitem[{Bojinov et~al.(2023)Bojinov, Simchi-Levi, \protect\BIBand{}
  Zhao}]{bojinov2023design}
Bojinov I, Simchi-Levi D, Zhao J (2023) Design and analysis of switchback
  experiments. \emph{Management Science} 69(7):3759--3777.

\bibitem[{Bradtke \protect\BIBand{} Barto(1996)}]{bradtke1996linear}
Bradtke SJ, Barto AG (1996) Linear least-squares algorithms for temporal
  difference learning. \emph{Machine learning} 22(1):33--57.

\bibitem[{Cheung et~al.(2020)Cheung, Simchi-Levi, \protect\BIBand{}
  Zhu}]{cheung2020reinforcement}
Cheung WC, Simchi-Levi D, Zhu R (2020) Reinforcement learning for
  non-stationary markov decision processes: The blessing of (more) optimism.
  \emph{International Conference on Machine Learning}, 1843--1854 (PMLR).

\bibitem[{Doukhan(1994)}]{doukhan1994mixing}
Doukhan P (1994) \emph{Mixing} (Springer).

\bibitem[{Eckles et~al.(2017)Eckles, Karrer, \protect\BIBand{}
  Ugander}]{eckles2017design}
Eckles D, Karrer B, Ugander J (2017) Design and analysis of experiments in
  networks: Reducing bias from interference. \emph{Journal of Causal Inference}
  5(1):20150021.

\bibitem[{Even-Dar et~al.(2004)Even-Dar, Kakade, \protect\BIBand{}
  Mansour}]{even2004experts}
Even-Dar E, Kakade SM, Mansour Y (2004) Experts in a markov decision process.
  \emph{Advances in neural information processing systems} 17.

\bibitem[{Farias et~al.(2022)Farias, Li, Peng, \protect\BIBand{}
  Zheng}]{farias2022markovian}
Farias V, Li A, Peng T, Zheng A (2022) Markovian interference in experiments.
  \emph{Advances in Neural Information Processing Systems} 35:535--549.

\bibitem[{Farias et~al.(2023)Farias, Li, Peng, Ren, Zhang, \protect\BIBand{}
  Zheng}]{farias2023correcting}
Farias V, Li H, Peng T, Ren X, Zhang H, Zheng A (2023) Correcting for
  interference in experiments: A case study at douyin. \emph{Proceedings of the
  17th ACM Conference on Recommender Systems}, 455--466.

\bibitem[{Glynn et~al.(2020)Glynn, Johari, \protect\BIBand{}
  Rasouli}]{glynn2020adaptive}
Glynn PW, Johari R, Rasouli M (2020) Adaptive experimental design with temporal
  interference: A maximum likelihood approach. \emph{Advances in Neural
  Information Processing Systems} 33:15054--15064.

\bibitem[{Hansen(1992)}]{hansen1992consistent}
Hansen BE (1992) Consistent covariance matrix estimation for dependent
  heterogeneous processes. \emph{Econometrica: Journal of the Econometric
  Society} 967--972.

\bibitem[{Horvitz \protect\BIBand{} Thompson(1952)}]{horvitz1952generalization}
Horvitz DG, Thompson DJ (1952) A generalization of sampling without replacement
  from a finite universe. \emph{Journal of the American statistical
  Association} 47(260):663--685.

\bibitem[{Hu \protect\BIBand{} Wager(2022)}]{hu2022switchback}
Hu Y, Wager S (2022) Switchback experiments under geometric mixing. \emph{arXiv
  preprint arXiv:2209.00197} .

\bibitem[{Hu \protect\BIBand{} Wager(2023)}]{hu2023off}
Hu Y, Wager S (2023) Off-policy evaluation in partially observed markov
  decision processes under sequential ignorability. \emph{The Annals of
  Statistics} 51(4):1561--1585.

\bibitem[{Johari et~al.(2022)Johari, Li, Liskovich, \protect\BIBand{}
  Weintraub}]{johari2022experimental}
Johari R, Li H, Liskovich I, Weintraub GY (2022) Experimental design in
  two-sided platforms: An analysis of bias. \emph{Management Science}
  68(10):7069--7089.

\bibitem[{Johari et~al.(2024)Johari, Li, Murthy, \protect\BIBand{}
  Weintraub}]{johari2024does}
Johari R, Li H, Murthy A, Weintraub GY (2024) When does interference matter?
  decision-making in platform experiments. \emph{arXiv preprint
  arXiv:2410.06580} .

\bibitem[{Lecarpentier \protect\BIBand{} Rachelson(2019)}]{lecarpentier2019non}
Lecarpentier E, Rachelson E (2019) Non-stationary markov decision processes, a
  worst-case approach using model-based reinforcement learning. \emph{Advances
  in neural information processing systems} 32.

\bibitem[{Lepski et~al.(1997)Lepski, Mammen, \protect\BIBand{}
  Spokoiny}]{lepski1997optimal}
Lepski OV, Mammen E, Spokoiny VG (1997) Optimal spatial adaptation to
  inhomogeneous smoothness: an approach based on kernel estimates with variable
  bandwidth selectors. \emph{The Annals of Statistics} 929--947.

\bibitem[{Li et~al.(2023)Li, Johari, Kuang, \protect\BIBand{}
  Wager}]{li2023experimenting}
Li S, Johari R, Kuang X, Wager S (2023) Experimenting under stochastic
  congestion. \emph{arXiv preprint arXiv:2302.12093} .

\bibitem[{Peng et~al.(2025)Peng, Ye, \protect\BIBand{}
  Zheng}]{peng2025differences}
Peng T, Ye N, Zheng A (2025) Differences-in-neighbors for network interference
  in experiments. \emph{arXiv preprint arXiv:2503.02271} .

\bibitem[{Robins et~al.(1994)Robins, Rotnitzky, \protect\BIBand{}
  Zhao}]{robins1994estimation}
Robins JM, Rotnitzky A, Zhao LP (1994) Estimation of regression coefficients
  when some regressors are not always observed. \emph{Journal of the American
  statistical Association} 89(427):846--866.

\bibitem[{Rosenblatt(1956)}]{rosenblatt1956central}
Rosenblatt M (1956) A central limit theorem and a strong mixing condition.
  \emph{Proceedings of the national Academy of Sciences} 42(1):43--47.

\bibitem[{SEE-Center(2013)}]{seestat_data}
SEE-Center T (2013) {HomeHospital} dataset.
  \urlprefix\url{https://see-center.iem.technion.ac.il/databases/HomeHospital/}.

\bibitem[{Shi et~al.(2023)Shi, Wang, Luo, Zhu, Ye, \protect\BIBand{}
  Song}]{shi2023dynamic}
Shi C, Wang X, Luo S, Zhu H, Ye J, Song R (2023) Dynamic causal effects
  evaluation in a/b testing with a reinforcement learning framework.
  \emph{Journal of the American Statistical Association} 118(543):2059--2071.

\bibitem[{Simchi-Levi et~al.(2023)Simchi-Levi, Wang, \protect\BIBand{}
  Zheng}]{simchi2023non}
Simchi-Levi D, Wang C, Zheng Z (2023) Non-stationary experimental design under
  linear trends. \emph{Advances in Neural Information Processing Systems}
  36:32102--32116.

\bibitem[{Su et~al.(2020)Su, Srinath, \protect\BIBand{}
  Krishnamurthy}]{su2020adaptive}
Su Y, Srinath P, Krishnamurthy A (2020) Adaptive estimator selection for
  off-policy evaluation. \emph{International Conference on Machine Learning},
  9196--9205 (PMLR).

\bibitem[{Sutton(2018)}]{sutton2018reinforcement}
Sutton RS (2018) Reinforcement learning: An introduction. \emph{A Bradford
  Book} .

\bibitem[{Thomas \protect\BIBand{} Brunskill(2016)}]{thomas2016data}
Thomas P, Brunskill E (2016) Data-efficient off-policy policy evaluation for
  reinforcement learning. \emph{International conference on machine learning},
  2139--2148 (PMLR).

\bibitem[{TLC(2024)}]{TLC}
TLC NYC (2024) Tlc trip record data.
  \url{https://www1.nyc.gov/site/tlc/about/tlc-triprecord-data.page}, accessed:
  2024-05-30.

\bibitem[{Uehara et~al.(2023)Uehara, Kiyohara, Bennett, Chernozhukov, Jiang,
  Kallus, Shi, \protect\BIBand{} Sun}]{uehara2023future}
Uehara M, Kiyohara H, Bennett A, Chernozhukov V, Jiang N, Kallus N, Shi C, Sun
  W (2023) Future-dependent value-based off-policy evaluation in pomdps.
  \emph{Advances in neural information processing systems} 36:15991--16008.

\bibitem[{Uehara et~al.(2022)Uehara, Shi, \protect\BIBand{}
  Kallus}]{uehara2022review}
Uehara M, Shi C, Kallus N (2022) A review of off-policy evaluation in
  reinforcement learning. \emph{arXiv preprint arXiv:2212.06355} .

\bibitem[{Van~Roy(1998)}]{van1998learning}
Van~Roy B (1998) \emph{Learning and value function approximation in complex
  decision processes}. Ph.D. thesis, Massachusetts Institute of Technology.

\bibitem[{Viviano et~al.(2023)Viviano, Lei, Imbens, Karrer, Schrijvers,
  \protect\BIBand{} Shi}]{viviano2023causal}
Viviano D, Lei L, Imbens G, Karrer B, Schrijvers O, Shi L (2023) Causal
  clustering: design of cluster experiments under network interference.
  \emph{arXiv preprint arXiv:2310.14983} .

\bibitem[{Wager \protect\BIBand{} Xu(2021)}]{wager2021experimenting}
Wager S, Xu K (2021) Experimenting in equilibrium. \emph{Management Science}
  67(11):6694--6715.

\bibitem[{Wu et~al.(2024{\natexlab{a}})Wu, Johari, Syrgkanis, \protect\BIBand{}
  Weintraub}]{wu2024switchback}
Wu Y, Johari R, Syrgkanis V, Weintraub GY (2024{\natexlab{a}}) Switchback price
  experiments with forward-looking demand. \emph{arXiv preprint
  arXiv:2410.14904} .

\bibitem[{Wu et~al.(2024{\natexlab{b}})Wu, Zheng, Zhang, Zhang,
  \protect\BIBand{} Wang}]{wu2024nonstationary}
Wu Y, Zheng Z, Zhang G, Zhang Z, Wang C (2024{\natexlab{b}}) Nonstationary a/b
  tests: Optimal variance reduction, bias correction, and valid inference.
  \emph{Management Science} .

\end{thebibliography}

\newpage
\appendix
\section{Notation and Technical Lemmas}\label{appendix:notation}
In this section, we present the key notation along with several technical results and lemmas used throughout the proofs.

\subsection{Time-Varying Transition Kernels}
We denote by $P_{\theta}^{a\rightarrow b}$ the transition kernel induced by the Bernoulli-randomized policy $\pi_{\theta}$ from time $a$ to time $b$ for $1 \leq a \leq b \leq T$, defined as
\begin{equation*}
    P_{\theta}^{a \rightarrow b} := 
    \begin{cases}
        \prod_{u=a}^{b-1} P_{\theta}^u, & \text{if } a < b, \\
        I, & \text{if } a = b,
    \end{cases}
\end{equation*}
where $I$ denotes the identity matrix. Additionally, note that $P_{\theta}^{a \rightarrow a+1} = P_{\theta}^a$ for all $a\in [T]$.

\subsection{Time-Varying Kernel Deviations}
For each time step $t \in [T]$, we define the difference in transition kernels as $\Delta^t := P_1^t - P_0^t$, where $P_1^t$ and $P_0^t$ are the transition kernels under the always-treated and always-controlled policies, respectively. We quantify the maximum deviation across time and states using the worst-case total variation norm:
\begin{equation*}
    \delta^t := \sup_{x \in \mathcal{X}} \left\| P_1^t(x, \cdot) - P_0^t(x, \cdot) \right\|_{\mathrm{TV}}, \quad \delta := \sup_{1 \leq t \leq T} \sup_{x \in \mathcal{X}} \left\| P_1^t(x, \cdot) - P_0^t(x, \cdot) \right\|_{\mathrm{TV}}.
\end{equation*}
Additionally, we define the deviations from the uniform randomization policy $\pi_{1/2}$ at each time $t$ as $\Delta_+^t := P_1^t - P_{1/2}^t$ and $\Delta_-^t := P_0^t - P_{1/2}^t$, where $P_{1/2}^t = (P_1^t + P_0^t)/2$. It follows directly that
\begin{equation*}
    \sup_{1 \leq t \leq T} \sup_{x \in \mathcal{X}} \left\| \Delta_+^t(x, \cdot) \right\|_{\mathrm{TV}} 
    = \sup_{1 \leq t \leq T} \sup_{x \in \mathcal{X}} \left\| \Delta_-^t(x, \cdot) \right\|_{\mathrm{TV}} 
    = \sup_{1 \leq t \leq T} \sup_{x \in \mathcal{X}} \frac{1}{2}\left\| \Delta^t(x, \cdot) \right\|_{\mathrm{TV}} \,\leq\, \delta.
\end{equation*}

\subsection{Derivative of the Transition Kernels}
Recall that under the Bernoulli-randomized policy $\pi_\theta$, the transition kernel and expected reward at time $t$ are given by
\begin{equation*}
    P_\theta^t =  (1-\theta)P_0^t + \theta P_1^t, \quad r_\theta = (1-\theta)r_0 + \theta r_1.
\end{equation*}
Differentiating with respect to the treatment probability $\theta$, we have
\begin{equation*}
    \frac{d}{d\theta} P_\theta^t = P_1^t - P_0^t = \Delta^t, \quad \frac{d}{d\theta}r_\theta = r_1 - r_0.
\end{equation*}
By applying the product rule for matrix-valued functions for all  $1 \leq a < b \leq T$, we have
\begin{align*}
\frac{d}{d\theta} P_\theta^{a \rightarrow b} &= \frac{d}{d\theta}  \prod_{u=a}^{b-1} P_{\theta}^u = \sum_{u=a}^{b-1}\left(\prod_{v=a}^{u-1} P_\theta^v\right)\left(\frac{d}{d \theta} P_\theta^u\right)\left(\prod_{v=u+1}^{b-1} P_\theta^v\right) = \sum_{u=a}^{b-1} P_\theta^{a \rightarrow u} \Delta^u P_\theta^{(u+1) \rightarrow b},
\end{align*}
and $\frac{d}{d\theta} P_\theta^{a \rightarrow a} = 0$ for all $a \in [T]$.

\subsection{Strong Mixing Property}
We introduce the strong ($\alpha$) mixing property \citep{rosenblatt1956central} and relevant lemmas used in our variance analysis.

\begin{lemma}[Lemma 3 in \cite{doukhan1994mixing}]\label{lemma3_mixing}
    Let $X$ and $Y$ be measurable random variables with respect to sigma-algebras $\mathcal{U}$ and $\mathcal{V}$, respectively. Denote $\|X\|_\infty=\inf \{M \geq 0: \mathbb{P}(|X|>M)=0\}$. Then,
    \begin{equation*}
        \abs{\operatorname{Cov}(X, Y)} \leq 4 \, \alpha(\mathcal{U}, \mathcal{V})  \|X\|_\infty \|Y\|_\infty,
    \end{equation*}
    where
    $$
    \alpha(\mathcal{U}, \mathcal{V}) = 
\sup_{\substack{A\in\mathcal{U}, B\in \mathcal{V}}}
\bigl|\mathbb{P}(A\cap B)-\mathbb{P}(A)\mathbb{P}(B)\bigr|.
    $$
\end{lemma}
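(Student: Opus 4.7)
The plan is to prove the inequality by a three-stage reduction: from general bounded random variables to $\{-1,0,+1\}$-valued random variables via a sign trick, then to pairs of indicator functions, where the $\alpha$-mixing coefficient applies by definition. I note a small typo in the statement: the covariance on the left-hand side should read $\operatorname{Cov}(X,Y)$ rather than $\operatorname{Cov}(\Gamma_i,\Gamma_j)$; I proceed with this correction. Throughout, I assume $\|X\|_\infty, \|Y\|_\infty < \infty$, as otherwise the inequality is vacuous.

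First I would write $\operatorname{Cov}(X,Y) = \E[X(Y-\E Y)]$ and apply the tower property together with the $\mathcal{U}$-measurability of $X$ to reexpress it as $\E[X\phi]$, where $\phi := \E[Y\mid \mathcal{U}] - \E Y$ is $\mathcal{U}$-measurable and mean zero. Hölder's inequality gives $\abs{\operatorname{Cov}(X,Y)} \leq \|X\|_\infty\, \E[\abs{\phi}]$. The key move is to set $W := \operatorname{sgn}(\phi)$, which is $\mathcal{U}$-measurable with $|W|\leq 1$; then $\E[\abs{\phi}] = \E[W\phi] = \operatorname{Cov}(W,Y)$, where the last equality uses the tower property in reverse together with $\E[W\cdot \E Y] = \E[W]\E[Y]$. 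Applying the symmetric maneuver on $\mathcal{V}$---writing $\operatorname{Cov}(W,Y) = \E[Y\cdot(\E[W\mid\mathcal{V}] - \E W)]$ and setting $V := \operatorname{sgn}(\E[W\mid\mathcal{V}]-\E W)$, which is $\mathcal{V}$-measurable with $|V|\leq 1$---yields $\abs{\operatorname{Cov}(W,Y)} \leq \|Y\|_\infty\abs{\operatorname{Cov}(V,W)}$. Chaining these bounds gives
\begin{equation*}
\abs{\operatorname{Cov}(X,Y)} \leq \|X\|_\infty\|Y\|_\infty\abs{\operatorname{Cov}(V,W)},
\end{equation*}
with $V\in\{-1,0,+1\}$ being $\mathcal{V}$-measurable and $W\in\{-1,0,+1\}$ being $\mathcal{U}$-measurable.

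For the final step, I would decompose the sign indicators into indicators of the underlying events: $W = \mathbf{1}_{A^+} - \mathbf{1}_{A^-}$ with disjoint $A^\pm = \{W = \pm 1\} \in \mathcal{U}$, and analogously $V = \mathbf{1}_{B^+} - \mathbf{1}_{B^-}$ with $B^\pm \in \mathcal{V}$. Expanding $\operatorname{Cov}(V,W)$ by bilinearity produces four terms of the form $\pm\pbrac{\P(A\cap B) - \P(A)\P(B)}$ with $A\in\mathcal{U}$ and $B\in\mathcal{V}$; each is bounded in absolute value by $\alpha(\mathcal{U},\mathcal{V})$ by definition of the $\alpha$-mixing coefficient. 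Summing the four terms gives $\abs{\operatorname{Cov}(V,W)} \leq 4\alpha(\mathcal{U},\mathcal{V})$, and combining with the previous display completes the proof.

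This is a classical inequality due essentially to Ibragimov and Davydov, and there is no substantive obstacle; the only care needed is bookkeeping---ensuring at each sign-trick step that the $L^1$ norm of a mean-zero random variable is faithfully re-expressed as a covariance while preserving measurability with respect to the correct $\sigma$-algebra. The factor of 4 arises entirely from Step 3 (two disjoint sign indicators on each side are needed) and cannot be reduced without stronger mixing assumptions such as $\rho$- or $\varphi$-mixing.
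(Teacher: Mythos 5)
Your proof is correct, and the typo you flag is real: the left-hand side of the lemma as stated in the paper should indeed read $\abs{\operatorname{Cov}(X,Y)}$; the $\Gamma_i,\Gamma_j$ notation leaks in from the corollary where the lemma is later applied. One point of comparison to note: the paper gives \emph{no} proof of this statement at all---it is imported as a black-box citation to Lemma~3 of Doukhan (1994), and the paper's appendix only proves the surrounding results (that the process is $\alpha$-mixing, and the covariance bound for the $\Gamma_t$'s that consumes this lemma). So there is nothing in the paper to match your argument against step by step; what you have reconstructed is the classical Ibragimov--Davydov covariance inequality, by essentially the textbook route: reduce $\operatorname{Cov}(X,Y)$ to $\|X\|_\infty\,\E\abs{\E[Y\mid\mathcal{U}]-\E Y}$ via conditioning, replace $X$ by the sign variable $W$ (checking the tower-property bookkeeping so that $\E\abs{\phi}=\operatorname{Cov}(W,Y)$ exactly), repeat symmetrically on $\mathcal{V}$, and finish by splitting each sign variable into two disjoint indicators, yielding four terms each bounded by $\alpha(\mathcal{U},\mathcal{V})$. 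Each reduction preserves measurability with respect to the correct $\sigma$-algebra, which is the only place the argument could silently break, and you handle it correctly (e.g., $W=\operatorname{sgn}(\phi)$ is $\mathcal{U}$-measurable because $\phi$ is). Your closing remarks---that the constant $4$ arises from the two-indicator decomposition on each side and is not improvable under $\alpha$-mixing alone---are also accurate. In short: a correct, self-contained proof of a result the paper merely cites.
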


\begin{lemma}\label{lemma:alpha-mixing}
Under Assumption~\ref{assumption:mixing-time}, the process $\{(X_t, Z_t, Y_t)\}_{t=1}^T$ is $\alpha$‐mixing \citep{rosenblatt1956central} under Bernoulli randomization policy $\pi_\theta$.  Specifically, for any $1 \leq a \leq b \leq T$, define the sigma-algebra
\begin{equation*}
    \mathcal{F}_a^b :=\sigma\bigl((X_t, Z_t, Y_t)\colon a\leq t\leq b\bigr),
\end{equation*}
which represents the information generated by the states, actions, and outcomes from time $a$ to $b$.
Then, for every lag $h\geq 1$, the $\alpha$‐mixing coefficient is
\begin{equation*}
\alpha(h) =\sup_{1\leq s\leq T-h}\sup_{\substack{A\in\mathcal{F}_1^s\\B\in\mathcal{F}_{s+h}^T}}
\bigl|\mathbb{P}(A\cap B)-\mathbb{P}(A)\mathbb{P}(B)\bigr| \leq \gamma^{h-1}.
\end{equation*}
\end{lemma}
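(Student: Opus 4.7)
The plan is to exploit the Markov structure of $\{(X_t, Z_t, Y_t)\}$ by conditioning on the intermediate state $X_{s+h}$ and then invoking the contraction from Assumption~\ref{assumption:mixing-time} iteratively across the $h-1$ transitions from $X_{s+1}$ to $X_{s+h}$. Fix $A\in\mathcal{F}_1^s$, $B\in\mathcal{F}_{s+h}^T$, and $1\leq s\leq T-h$.

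First, I would establish that $A$ and $B$ are conditionally independent given $X_{s+h}$. This is a standard consequence of the Markov structure: under a Markov policy $\pi$, the block $(Z_{s+h}, Y_{s+h}, X_{s+h+1}, \ldots, X_T, Z_T, Y_T)$ is a measurable function of $X_{s+h}$ and of action/reward/transition noise that is independent of $\mathcal{F}_1^{s+h-1}\supseteq \mathcal{F}_1^s$. Setting $f(x) := \mathbb{P}(B\mid X_{s+h}=x)\in[0,1]$, $\mu_A := \mathrm{Law}(X_{s+h}\mid A)$, and $\tilde\mu := \mathrm{Law}(X_{s+h})$, the conditional independence gives
\begin{equation*}
\mathbb{P}(A\cap B) - \mathbb{P}(A)\mathbb{P}(B) = \mathbb{P}(A)\int f\, d(\mu_A - \tilde\mu).
\end{equation*}
Using $\int d(\mu_A-\tilde\mu)=0$ to center $f$ at $\tfrac{1}{2}$, together with the identity $\|\mu_A-\tilde\mu\|_{\mathrm{TV}}=\tfrac{1}{2}\|\mu_A-\tilde\mu\|_1$, yields the clean bound $|\mathbb{P}(A\cap B)-\mathbb{P}(A)\mathbb{P}(B)|\leq \mathbb{P}(A)\,\|\mu_A-\tilde\mu\|_{\mathrm{TV}}$.

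Next, I would propagate the TV distance using Assumption~\ref{assumption:mixing-time}. Let $\nu_A := \mathrm{Law}(X_{s+1}\mid A)$ and $\nu := \mathrm{Law}(X_{s+1})$. The Markov property---specifically, that given $X_{s+1}$ the forward evolution is independent of $\mathcal{F}_1^s$---implies $\mu_A = \nu_A P_\pi^{s+1\to s+h}$ and $\tilde\mu = \nu P_\pi^{s+1\to s+h}$, a composition of exactly $h-1$ one-step kernels $P_\pi^{s+1},\ldots,P_\pi^{s+h-1}$. For the Bernoulli policies used throughout the paper, each $P_\pi^t$ is a convex combination of $P_0^t$ and $P_1^t$, so the triangle inequality applied to the contraction in Assumption~\ref{assumption:mixing-time} gives $\|g P_\pi^t - g' P_\pi^t\|_{\mathrm{TV}}\leq \gamma\|g-g'\|_{\mathrm{TV}}$ for each $t$. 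Iterating $h-1$ times,
\begin{equation*}
\|\mu_A-\tilde\mu\|_{\mathrm{TV}}\leq \gamma^{h-1}\|\nu_A-\nu\|_{\mathrm{TV}}\leq \gamma^{h-1}.
\end{equation*}
Combining with the previous display and taking the supremum over $A$, $B$, and $s$ delivers the stated bound $\alpha(h)\leq \gamma^{h-1}$.

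The main subtlety---more a bookkeeping matter than a conceptual obstacle---is lifting the contraction in Assumption~\ref{assumption:mixing-time} (stated for the action-specific kernels $P_z^t$) to the policy-induced kernel $P_\pi^t$. For the state-independent Bernoulli policies that appear in this paper, this is immediate by convexity; for general state-dependent Markov policies, one can still recover a contraction via a coupling argument that bounds the Dobrushin coefficient $\tau(P_\pi^t)$ in terms of $\tau(P_z^t)$ together with $\|\pi(\cdot\mid x)-\pi(\cdot\mid y)\|_{\mathrm{TV}}$, but such generality is not needed downstream. A small additional care point is indexing: the gap $h$ corresponds to applying $h-1$ (not $h$) one-step kernels, because conditioning on $\mathcal{F}_1^s$ already determines $\mathrm{Law}(X_{s+1}\mid\mathcal{F}_1^s)$, which is why the exponent in the bound is $h-1$ rather than $h$.
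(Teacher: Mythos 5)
Your proof is correct and follows essentially the same route as the paper's: condition on the Markov structure at an intermediate time, reduce the covariance gap $\bigl|\mathbb{P}(A\cap B)-\mathbb{P}(A)\mathbb{P}(B)\bigr|$ to a total-variation distance between two laws of $X_{s+h}$, and contract through the $h-1$ policy-induced kernels via Assumption~\ref{assumption:mixing-time}, with the same exponent bookkeeping. The only cosmetic differences are that you condition at $X_{s+1}$ and compare $\mathrm{Law}(X_{s+h}\mid A)$ to the unconditional law using the $f-\tfrac{1}{2}$ centering trick, where the paper conditions at $(X_s,Z_s)$ and bounds by a supremum over pairs $(x,z),(x',z')$ together with the dual representation of TV; your explicit caveat about lifting the contraction from the action-specific kernels $P_z^t$ to the policy kernel $P_\pi^t$ (immediate by convexity for the state-independent randomized policies actually used downstream) is a point the paper's step $(\ast)$ leaves implicit.
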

\begin{proof}[Proof of Lemma~\ref{lemma:alpha-mixing}]
    Fix any lag $h \geq 1$ and any split point $1 \leq s \leq T - h$. Let
\begin{equation*}
    \mathcal{F}_1^s = \sigma\left((X_1, Z_1, Y_1), \ldots, (X_s, Z_s, Y_s)\right), \,  \mathcal{F}_{s+h}^T = \sigma\left((X_{s+h}, Z_{s+h}, Y_{s+h}), \ldots, (X_T, Z_T, Y_T)\right).
\end{equation*}

By definition, we have
\begin{equation*}
    \alpha(h) =\sup_{1\leq s\leq T-h}\sup_{\substack{A\in\mathcal{F}_1^s\\B\in\mathcal{F}_{s+h}^T}}
\bigl|\mathbb{P}(A\cap B)-\mathbb{P}(A)\mathbb{P}(B)\bigr|.
\end{equation*}
    We show that $\bigl|\mathbb{P}(A\cap B)-\mathbb{P}(A)\mathbb{P}(B)\bigr| \leq \gamma^{h-1} $ for any $A\in \mathcal{F}_1^s$ and $B\in \mathcal{F}_{s+h}^T$.
    
    First, we observe
\begin{equation*}
    \mathbb{P}(A\cap B) = \mathbb{E}\left[ \mathbf{1}\{A\}\mathbf{1}\{B\}\right] = \mathbb{E}\left[\mathbb{E}\left[ \mathbf{1}\{A\}\mathbf{1}\{B\}\mid \mathcal{F}_1^s\right]\right] = \mathbb{E}\left[  \mathbf{1}\{A\} \mathbb{P}(B\mid \mathcal{F}_1^s)\right] 
    ,
\end{equation*}
    and
\begin{equation*}
    \mathbb{P}(A)\mathbb{P}(B) = \mathbb{E}\left[\mathbf{1}\{A\}   \mathbb{P}(B)\right].
\end{equation*}
Then, we have
    \begin{align*}
    \abs{\mathbb{P}(A\cap B) - \mathbb{P}(A)\mathbb{P}(B)} 
    &\leq \mathbb{E}\left[\mathbf{1}\{A\}\abs{ \left(  \mathbb{P}(B\mid \mathcal{F}_1^s)-  \mathbb{P}(B)\right) }\right] \\
    &\overset{\dagger}{\leq} \mathbb{P}(A) \sup _{\substack{x, x' \in \mathcal{X} \\ z, z' \in \mathcal{Z}}}\left|\mathbb{P}(B \mid X_s = x, Z_s = z)-\mathbb{P}\left(B \mid X_s = x', Z_s = z'\right)\right|\\
    &\overset{\dagger}{=} \mathbb{P}(A) \sup _{\substack{x, x' \in \mathcal{X} \\ z, z' \in \mathcal{Z}}} 
    \Bigg| 
    \sum_{y \in \mathcal{X}} \mathbb{P}(B \mid X_{s+h}=y) 
    \Big( \mathbb{P}(X_{s+h}=y \mid X_s=x,  Z_s=z) \\
    &\hspace{4.85cm} - \mathbb{P}(X_{s+h}=y \mid X_s=x',  Z_s=z') 
    \Big)
    \Bigg| \\
    &\overset{\ddagger}{\leq} \mathbb{P}(A) \sup _{\substack{x, x' \in \mathcal{X} \\ z, z' \in \mathcal{Z}}} \left\| (e_{x}P_z^s- e_{x'}P_{z'}^s) \prod_{t=s+1}^{s+h-1}P_{\pi}^t \right\|_{\mathrm{TV}}\\
    &\overset{\ast}{\leq} \mathbb{P}(A) \gamma^{h-1}\\
    &\leq \gamma^{h-1},
    \end{align*}
where $e_x$ denotes the probability distribution vector with a one in the $x$th position and zero elsewhere.
Here, ($\dagger$) follows from the Markov property and the independence of the reward, ($\ddagger$) follows from the dual representation of total variation, and ($\ast$) follows from the mixing-time property (Assumption~\ref{assumption:mixing-time}).
\end{proof}

\begin{corollary}\label{corollary:Gamma-alpha-mixing}
Under Lemma~\ref{lemma:alpha-mixing} and Bernoulli randomization policy $\pi_\theta$, define
\begin{equation*}
\Gamma_t
:=2\,(2Z_t-1)\sum_{u=t}^{\min(t+k,T)}Y_u
\in
  \sigma\bigl((X_t,Z_t,Y_t),\dots,(X_{t+k},Z_{t+k},Y_{t+k})\bigr)=\mathcal{F}_t^{t+k}.
\end{equation*}
Then the process $\{\Gamma_t\}_{t=1}^T$ is $\alpha$-mixing with the mixing coefficient for any lag $h\geq1 $ is 
\begin{align*}
    \alpha_{\Gamma}(h) &= \sup_{1\leq s\leq T-h}
\sup_{\substack{A\in\sigma(\Gamma_{1:s})\\B\in\sigma(\Gamma_{s+h:T})}}
\bigl|\mathbb{P}(A\cap B)-\mathbb{P}(A)\mathbb{P}(B)\bigr| \\
&\leq
\sup_{1\leq s\leq T-h}
\sup_{\substack{A\in\mathcal{F}_1^{s+k}\\B\in\mathcal{F}_{s+h}^T}}
\bigl|\mathbb{P}(A\cap B)-\mathbb{P}(A)\mathbb{P}(B)\bigr| 
\\
&\le
\begin{cases}
1, & 1\le h\le k+1,\\[1mm]
\alpha(h-k), & h\ge k+2,
\end{cases}  \\
&\le
\begin{cases}
1, & 1\le h\le k+1,\\[1mm]
\gamma^{h-k-1}, & h\ge k+2,
\end{cases}\\
&=
\gamma^{\max\{0,h-k-1\}} .
\end{align*}
\end{corollary}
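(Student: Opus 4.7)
}

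The plan is to reduce the claim about $\alpha_{\Gamma}(h)$ to the already‐established $\alpha$‐mixing bound $\alpha(h)\leq\gamma^{h-1}$ of Lemma~\ref{lemma:alpha-mixing}, by exploiting the explicit measurability of each $\Gamma_t$ with respect to a finite window of the underlying process.

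First, I would record the key measurability step. Since
\begin{equation*}
\Gamma_t=2\,(2Z_t-1)\sum_{u=t}^{\min(t+k,T)}Y_u
\end{equation*}
is a (deterministic) function of $(Z_t,Y_t,Y_{t+1},\dots,Y_{\min(t+k,T)})$, it is $\mathcal{F}_t^{t+k}$‐measurable. Consequently, for any $1\leq s\leq T-h$, the block $\Gamma_{1:s}:=(\Gamma_1,\dots,\Gamma_s)$ is measurable with respect to $\mathcal{F}_1^{s+k}$, and the block $\Gamma_{s+h:T}:=(\Gamma_{s+h},\dots,\Gamma_T)$ is measurable with respect to $\mathcal{F}_{s+h}^{T+k}\cap\mathcal{F}_1^T=\mathcal{F}_{s+h}^T$ (the truncation $\min(t+k,T)$ keeps everything within $[1,T]$). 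Hence
\begin{equation*}
\sigma(\Gamma_{1:s})\subseteq\mathcal{F}_1^{s+k},\qquad \sigma(\Gamma_{s+h:T})\subseteq\mathcal{F}_{s+h}^T.
\end{equation*}

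Next, I would use this inclusion to dominate the supremum defining $\alpha_{\Gamma}(h)$ by a supremum over the larger sigma‐algebras, and then rewrite the resulting expression as an $\alpha$‐mixing coefficient of $\{(X_t,Z_t,Y_t)\}$ at a shifted lag. Concretely, for any $h>k$, setting $s':=s+k$ gives a valid split point in $[1,T-(h-k)]$, and the pair $(\mathcal{F}_1^{s+k},\mathcal{F}_{s+h}^T)=(\mathcal{F}_1^{s'},\mathcal{F}_{s'+(h-k)}^T)$ is exactly of the form appearing in the definition of $\alpha(h-k)$. Taking suprema over $A,B$ and over $s$ therefore yields
\begin{equation*}
\alpha_{\Gamma}(h)\leq\alpha(h-k)\leq\gamma^{(h-k)-1}=\gamma^{h-k-1},
\end{equation*}
where the second inequality is Lemma~\ref{lemma:alpha-mixing}.

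Finally, for $h\leq k$ the process blocks can share information, and the best general bound is the trivial one $\alpha_{\Gamma}(h)\leq 1=\gamma^{0}$, which matches the exponent $\max\{0,h-k-1\}=0$. Combining the two cases gives $\alpha_{\Gamma}(h)\leq\gamma^{\max\{0,h-k-1\}}$, as claimed. I do not anticipate a serious obstacle; the only subtlety is checking that the shift $s\mapsto s+k$ remains a valid split point (which needs $h>k$ and $s+k\leq T-(h-k)$, i.e.\ $s\leq T-h$, the same constraint already present in the definition of $\alpha_{\Gamma}$), and ensuring the $\min(t+k,T)$ truncation does not push any $\Gamma_t$ outside $\mathcal{F}_1^T$, both of which are straightforward.
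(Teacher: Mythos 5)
Your proposal is correct and follows essentially the same route as the paper: the measurability inclusion $\sigma(\Gamma_{1:s})\subseteq\mathcal{F}_1^{s+k}$, $\sigma(\Gamma_{s+h:T})\subseteq\mathcal{F}_{s+h}^T$, domination of the supremum by the one over the larger sigma-algebras, and re-indexing the split point so that the gap becomes a lag of $h-k$ to invoke Lemma~\ref{lemma:alpha-mixing}. Your explicit treatment of the $h\leq k$ case via the trivial bound and your check that the shifted split point stays in range are exactly the details the paper leaves implicit in the exponent $\max\{0,h-k-1\}$.
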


\section{Heuristic Truncation Size \texorpdfstring{$k$}{k}-Selection Algorithm}\label{appendix:algo}

\begin{algorithm}[H]
  \caption{Heuristic $k$-Selection via Stability}
  \label{alg:simple-k}
  \begin{algorithmic}
    \STATE {\bfseries Input:} Candidate set $\mathcal{K}=\{0,1,2,\dots,K_{\max}\}$; stability threshold $\alpha\ge 0$
    \STATE {\bfseries Output:} Selected $k^\star$
    \STATE $k^\star \leftarrow \varnothing$
    \STATE $(\hat\tau_0,\hat\sigma_0)\leftarrow$ TPG (or DM) and nonstationary HAC estimator under truncation size $0$
    \FOR{$k=1$ {\bfseries to} $K_{\max}$}
      \STATE $(\hat\tau_k,\hat\sigma_k)\leftarrow$ TPG and nonstationary HAC estimator under truncation size $k$
      \IF{$|\hat\tau_k - \hat\tau_{k-1}| \le \alpha\, \hat\sigma_k$}
        \STATE $k^\star \leftarrow k$
        \STATE {\bfseries break}
      \ENDIF
    \ENDFOR
    \IF{$k^\star = \varnothing$}
      \STATE $k^\star \leftarrow 0$ \hfill \texttt{\# Use the naive DM estimator}
    \ENDIF
    \STATE {\bfseries return} $k^\star$
  \end{algorithmic}
\end{algorithm}

Algorithm~\ref{alg:simple-k} is a practical heuristic in the spirit of Lepski’s method \citep{lepski1997optimal}. It selects the largest truncation size $k$ for which the increase in variance does not exceed the change in the estimated treatment effect. The resulting $k^\star$ balances bias reduction with variance control, ensuring the estimator is effectively debiased without being dominated by the quadratically increasing variance. 

\begin{table}[H]
\centering
\caption{Sensitivity of post-$k$ selection to the hyperparameter $\alpha$.}
\label{tab:post_k_alpha_robustness}
\begin{tabular}{cccc}
\toprule
$\alpha$ & Confidence level & Median $k$ & RMSE \\
\midrule
$1.036$ & $70\%$ & $5$ & $0.043$ \\
$1.282$ & $80\%$ & $4$ & $0.039$ \\
$1.645$ & $90\%$ & $4$ & $0.040$ \\
$1.960$ & $95\%$ & $3$ & $0.050$ \\
\bottomrule
\end{tabular}
\par\medskip
\begin{minipage}{0.85\linewidth}
\footnotesize
\textit{Notes.} The threshold $\alpha$ is the standard normal critical value
associated with each two-sided confidence level. RMSE is computed against the
true GATE $\tau$ in Case Study I across simulation replications.
\end{minipage}
\end{table}
\paragraph{Choice of $\alpha$.}
We adopt $\alpha=1$ as the default, corresponding to an approximate $70\%$ confidence level. This threshold controls the stopping point in the bias--variance trade-off: larger values lead to later termination, allowing for lower bias at the cost of higher variance. Table~\ref{tab:post_k_alpha_robustness} reports a robustness check across conventional confidence levels for Case Study I in Section~\ref{sec:numerical}. The selected median $k$ and the resulting RMSE remain stable across these choices, suggesting that the procedure is not sensitive to the precise calibration of $\alpha$.

\section{DQ Estimator \texorpdfstring{\citep{farias2022markovian}}{} in a Stationary Average-Reward MDP}\label{appedix:dq}

In \cite{farias2022markovian}, a {\em difference-in-Q's} (DQ) estimator is proposed to estimate the average treatment effect (ATE) in a stationary, average-reward Markovian setting. This approach was later extended to finite-horizon and discounted-reward settings in \cite{farias2023correcting}.  Recall that in an average-reward MDP, for a state $x$ and action $z$, the $Q$-value $Q(x,z|\theta)$ is defined as the \emph{excess} reward over the long-run average reward obtained starting from state $x$ with initial action $z$, under policy $\pi_\theta$. The DQ estimator assumes access to an unbiased estimator of the $Q$-function $Q(\cdot | \theta)$ and computes the difference in steady-state $Q$-values between the treatment and control actions.
The DQ estimator has been shown to achieve a favorable bias-variance trade-off in stationary MDPs, especially when $Q$-values can be efficiently estimated without bias (e.g., with LSTD-$\lambda$ \citep{bradtke1996linear}). 
By the policy gradient theorem for stationary average‐reward MDPs \citep{sutton2018reinforcement}, when $J(\cdot)$ denotes the steady-state value function, its gradient $\nabla J(\theta)$ can be written in terms of the policy gradient, which in our case (since actions are binary) can be expressed as a difference in $Q$-values.  Thus we can view the DQ estimator as a form of policy-gradient estimator in the stationary setting. We demonstrate this in detail below.

\paragraph{Policy gradient in an average-reward MDP.}
In an average-reward MDP, the $Q$-value represents the cumulative excess reward of a state-action pair $(x,z)$ relative to the long-run average reward under a Bernoulli-randomized policy $\pi_\theta$. Specifically, it is defined as
\begin{equation*}
    Q(x, z|\theta):=\lim _{T \rightarrow \infty} \mathbb{E}_{\pi_\theta}\left[\sum_{i=1}^{T}\left( Y_i - \lambda_\theta \right) \mid X_1 = x, Z_1= z\right],
\end{equation*}
where $\E_{\pi_\theta}$ denotes the expectation under policy $\pi_\theta$, and  $\lambda_\theta$ denotes the long-run average return under policy $\pi_\theta$, i.e., $\lambda_\theta := \lim _{T \rightarrow \infty}\frac{1}{T} \E_{\pi_\theta}\sbrac{\sum_{i=1}^{T}Y_i}$. In this setting, the $Q$-value depends on the state and action, but no longer on time.

Under the ergodicity assumption, let $\rho_\theta$ denote the stationary distribution of the states induced by policy $\pi_\theta$. The steady-state value function $J(\theta)$ is then defined as
\begin{equation*}
    J(\theta) := \sum_{x\in\mathcal{X}}\rho_{\theta}(x)\sum_{z\in\mathcal{Z}} \pi_\theta(z|x)r(x, z). 
\end{equation*}

Then, the average treatment effect (ATE) in \cite{farias2022markovian} is defined as $\tau := J(1) - J(0)$. 
By the policy gradient theorem for average-reward MDPs (see Page 333 in \cite{sutton2018reinforcement}), we have
\begin{align*}
    \nabla J(\theta) &= \sum_{x\in\mathcal{X}}\rho_{\theta}(x)\sum_{z\in\mathcal{Z}} \nabla \pi_\theta(z|x)Q(x, z|\theta)\\
    &= \sum_{x\in\mathcal{X}}\rho_{\theta}(x) \left(Q(x, 1|\theta) - Q(x, 0|\theta)\right),
\end{align*}
which corresponds exactly to the DQ estimator proposed in \cite{farias2022markovian} when $\theta = 1/2$.

\section{Proof of Theorems}\label{app:thm}

\subsection{Proof of Theorem~\ref{them:trunc_DQ_bias}}

{\color{black}
\begin{proof}[Proof of Theorem~\ref{them:trunc_DQ_bias}]
The key idea here is to estimate the global treatment effect at each time using the transition kernel under the experimental policy $\pi_{1/2}$, and to show that this estimate corresponds to a truncated policy gradient (or truncated difference in $Q$-values by Proposition~\ref{prop:trunc_policy_gradient}). We first prove the bias bound, followed by the variance bound.

\paragraph{Bias bound.}
Recall that $\hat\tau_k$ is unbiased for $\nabla J_k(1/2)$, so it suffices to bound
\[
    \abs{\mathbb{E}[\hat\tau_k]-\tau}
    =
    \abs{\nabla J_k(1/2)-\tau}.
\]

For each $t\in[T]$, define
\[
    s_t := \max\{1,t-k\},
    \qquad
    \ell_t := t-s_t = \min\{k,t-1\}.
\]

We first record two operator bounds.

Since
\[
    P_{1/2}^u = \frac{1}{2}P_1^u+\frac{1}{2}P_0^u,
\]
Assumption~\ref{assumption:mixing-time} implies that $P_{1/2}^u$ inherits the same contraction coefficient $\gamma$: for any two probability distributions $f,f'$ on $\mathcal X$,
\[
    \|f'P_{1/2}^u-fP_{1/2}^u\|_{\mathrm{TV}}
    \le
    \frac12 \|f'P_1^u-fP_1^u\|_{\mathrm{TV}}
    +\frac12 \|f'P_0^u-fP_0^u\|_{\mathrm{TV}}
    \le
    \gamma \|f'-f\|_{\mathrm{TV}}.
\]
By Jordan decomposition, the same bound extends to all signed measures $\mu$ with $\mu(\mathcal X)=0$; hence for all $1\le a\le b\le T$,
\begin{equation}
    \|\mu P_z^{a\to b}\|_{\mathrm{TV}}
    \le
    \gamma^{\,b-a}\|\mu\|_{\mathrm{TV}},
    \qquad z\in\{0,1,1/2\}.
    \label{eq:trunc_kernel_contraction}
\end{equation}

Next, by the definition of $\delta$ and again by Jordan decomposition, for every signed measure $\mu$,
\begin{equation}
    \|\mu \Delta_+^u\|_{\mathrm{TV}}
    \le
    \delta \|\mu\|_{\mathrm{TV}},
    \qquad
    \|\mu \Delta_-^u\|_{\mathrm{TV}}
    \le
    \delta \|\mu\|_{\mathrm{TV}}.
    \label{eq:trunc_delta_operator}
\end{equation}

Fix $1\le a<b\le T$, and let $\mu^\top$ be any probability row vector.
Using the telescoping identity for the difference of two matrix products,
\begin{align}
    \mu^\top P_1^{a\to b} - \mu^\top P_{1/2}^{a\to b}
    &=
    \sum_{u=a}^{b-1}
    \mu^\top P_1^{a\to u}\Delta_+^u P_{1/2}^{(u+1)\to b}.
    \label{eq:telescope_plus}
\end{align}
Subtracting the first-order term
\[
    \sum_{u=a}^{b-1}
    \mu^\top P_{1/2}^{a\to u}\Delta_+^u P_{1/2}^{(u+1)\to b},
\]
we obtain the exact remainder
\begin{align}
R_{a,b}^{(+)}(\mu)
&:=
\mu^\top P_1^{a\to b}
-
\mu^\top P_{1/2}^{a\to b}
-
\sum_{u=a}^{b-1}
\mu^\top P_{1/2}^{a\to u}\Delta_+^u P_{1/2}^{(u+1)\to b}
\notag\\
&=
\sum_{u=a}^{b-1}
\mu^\top \bigl(P_1^{a\to u}-P_{1/2}^{a\to u}\bigr)\Delta_+^u P_{1/2}^{(u+1)\to b}
\notag\\
&=
\sum_{a\le i<u\le b-1}
\mu^\top P_1^{a\to i}\Delta_+^i P_{1/2}^{(i+1)\to u}\Delta_+^u P_{1/2}^{(u+1)\to b},
\label{eq:exact_plus_remainder}
\end{align}
where in the last step we applied \eqref{eq:telescope_plus} again to
$P_1^{a\to u}-P_{1/2}^{a\to u}$.

Now each summand in \eqref{eq:exact_plus_remainder} is bounded, using
\eqref{eq:trunc_kernel_contraction} and \eqref{eq:trunc_delta_operator}, by
\begin{align*}
&\left\|
\mu^\top P_1^{a\to i}\Delta_+^i P_{1/2}^{(i+1)\to u}\Delta_+^u P_{1/2}^{(u+1)\to b}
\right\|_{\mathrm{TV}} \\
&\qquad\le
\delta\,
\gamma^{\,u-i-1}\,
\delta\,
\gamma^{\,b-u-1}
=
\delta^2\gamma^{\,u-i-1}\gamma^{\,b-u-1}.
\end{align*}
Therefore,
\begin{align}
\|R_{a,b}^{(+)}(\mu)\|_{\mathrm{TV}}
&\le
\delta^2
\sum_{a\le i<u\le b-1}
\gamma^{\,u-i-1}\gamma^{\,b-u-1}
\notag\\
&=
\delta^2 \sum_{m=0}^{b-a-2}(m+1)\gamma^m
=
\delta^2 \Psi_{\,b-a}(\gamma),
\label{eq:plus_remainder_bound}
\end{align}
where
$$
\Psi_k(\gamma)
    :=
    \sum_{m=0}^{k-2}(m+1)\gamma^m, \quad \left(\Psi_0: = 0\right).
$$
The same argument, with $P_0^u=P_{1/2}^u+\Delta_-^u$, yields the analogous identity and bound
\begin{equation}
\|R_{a,b}^{(-)}(\mu)\|_{\mathrm{TV}}
\le
\delta^2 \Psi_{\,b-a}(\gamma),
\label{eq:minus_remainder_bound}
\end{equation}
for the first-order perturbation remainder around $P_0$.

Fix $t\in[T]$. We decompose
\begin{align}
    \E_{\mathcal L_1^t}[Y_t]
    &=
    \rho^\top P_1^{1\to t}r_1 \notag\\
    &=
    \underbrace{
    \bigl(\rho^\top P_1^{1\to s_t}-\rho^\top P_{1/2}^{1\to s_t}\bigr)
    P_1^{s_t\to t}r_1
    }_{=:B_{1,t}}
    +
    \underbrace{
    \rho^\top P_{1/2}^{1\to s_t}P_1^{s_t\to t}r_1
    }_{=:L_{1,t}}.
    \label{eq:treatment_decomposition}
\end{align}

We first bound the tail term $B_{1,t}$.
If $s_t=1$, then $B_{1,t}=0$.
Otherwise $s_t=t-k$, and by the same telescoping argument used above,
\[
    \left\|
    \rho^\top P_1^{1\to s_t}-\rho^\top P_{1/2}^{1\to s_t}
    \right\|_{\mathrm{TV}}
    \le
    \sum_{u=1}^{s_t-1}\gamma^{\,s_t-u-1}\delta
    \le
    \frac{\delta}{1-\gamma}.
\]
Since this signed measure has total mass zero, another application of
\eqref{eq:trunc_kernel_contraction} gives
\[
    \left\|
    \bigl(\rho^\top P_1^{1\to s_t}-\rho^\top P_{1/2}^{1\to s_t}\bigr)
    P_1^{s_t\to t}
    \right\|_{\mathrm{TV}}
    \le
    \gamma^k \frac{\delta}{1-\gamma}.
\]
Using $\|r_1\|_\infty\le M$, we obtain the uniform bound
\begin{equation}
    |B_{1,t}|
    \le
    \frac{\gamma^k}{1-\gamma}\,\delta M.
    \label{eq:B1t_bound}
\end{equation}

Next, apply \eqref{eq:exact_plus_remainder}--\eqref{eq:plus_remainder_bound}
with $\mu^\top=\rho^\top P_{1/2}^{1\to s_t}$, $a=s_t$, and $b=t$.
Since $\rho^\top P_{1/2}^{1\to s_t}$ is a probability row vector,
\begin{align}
    L_{1,t}
    &=
    \rho^\top P_{1/2}^{1\to t}r_1
    +
    \sum_{u=s_t}^{t-1}
    \rho^\top P_{1/2}^{1\to u}\Delta_+^u P_{1/2}^{(u+1)\to t} r_1
    +
    \varepsilon_{1,t},
    \label{eq:L1t_expansion}
\end{align}
where
\begin{equation}
    |\varepsilon_{1,t}|
    \le
    \|R_{s_t,t}^{(+)}(\rho^\top P_{1/2}^{1\to s_t})\|_{\mathrm{TV}}\,
    \|r_1\|_\infty
    \le
    \delta^2 M\,\Psi_{\ell_t}(\gamma)
    \le
    \delta^2 M\,\Psi_k(\gamma).
    \label{eq:eps1t_bound}
\end{equation}

Combining \eqref{eq:treatment_decomposition}, \eqref{eq:B1t_bound},
\eqref{eq:L1t_expansion}, and \eqref{eq:eps1t_bound}, we obtain
\begin{equation}
    \E_{\mathcal L_1^t}[Y_t]
    =
    \phi_1^t
    +
    O\!\left(
        \Psi_k(\gamma)\delta^2 M
        +
        \frac{\gamma^k}{1-\gamma}\delta M
    \right),
    \label{eq:phi1_approx_refined}
\end{equation}
where
\[
    \phi_1^t
    :=
    \rho^\top P_{1/2}^{1\to t}r_1
    +
    \sum_{u=\max(t-k,1)}^{t-1}
    \rho^\top P_{1/2}^{1\to u}\Delta_+^u P_{1/2}^{(u+1)\to t} r_1.
\]

Exactly the same argument, using \eqref{eq:minus_remainder_bound}, gives
\begin{equation}
    \E_{\mathcal L_0^t}[Y_t]
    =
    \phi_0^t
    +
    O\!\left(
        \Psi_k(\gamma)\delta^2 M
        +
        \frac{\gamma^k}{1-\gamma}\delta M
    \right),
    \label{eq:phi0_approx_refined}
\end{equation}
where
\[
    \phi_0^t
    :=
    \rho^\top P_{1/2}^{1\to t}r_0
    +
    \sum_{u=\max(t-k,1)}^{t-1}
    \rho^\top P_{1/2}^{1\to u}\Delta_-^u P_{1/2}^{(u+1)\to t} r_0.
\]

\medskip
\noindent
\emph{Step 4: identification with the truncated policy gradient.}

Exactly as in the original proof,
\begin{align*}
    \phi_1^t-\phi_0^t
    =
    \sum_{u=\max(t-k,1)}^{t}
    \left(
        \mathbb E_{\mathcal L_{1/2}^t}[Y_t\mid Z_u=1]
        -
        \mathbb E_{\mathcal L_{1/2}^t}[Y_t\mid Z_u=0]
    \right).
\end{align*}
Hence,
\begin{align*}
    \frac1T\sum_{t=1}^T(\phi_1^t-\phi_0^t)
    &=
    \frac1T\sum_{t=1}^T \sum_{u=\max(t-k,1)}^{t}
    \left(
        \mathbb E_{\mathcal L_{1/2}^t}[Y_t\mid Z_u=1]
        -
        \mathbb E_{\mathcal L_{1/2}^t}[Y_t\mid Z_u=0]
    \right) \\
    &=
    \frac1T\sum_{u=1}^T \sum_{t=u}^{\min(u+k,T)}
    \left(
        \mathbb E_{\mathcal L_{1/2}^t}[Y_t\mid Z_u=1]
        -
        \mathbb E_{\mathcal L_{1/2}^t}[Y_t\mid Z_u=0]
    \right) \\
    &=
    \nabla J_k(1/2),
\end{align*}
where the last equality follows from Proposition~\ref{prop:trunc_policy_gradient}.

Then, using \eqref{eq:phi1_approx_refined} and \eqref{eq:phi0_approx_refined},
\begin{align*}
    \abs{\mathbb E[\hat\tau_k]-\tau}
    &=
    \abs{\nabla J_k(1/2)-\tau} \\
    &=
    \left|
        \frac1T\sum_{t=1}^T(\phi_1^t-\phi_0^t)
        -
        \frac1T\sum_{t=1}^T
        \bigl(
            \E_{\mathcal L_1^t}[Y_t]-\E_{\mathcal L_0^t}[Y_t]
        \bigr)
    \right| \\
    &\le
    \frac1T\sum_{t=1}^T
    \left(
        \abs{\phi_1^t-\E_{\mathcal L_1^t}[Y_t]}
        +
        \abs{\phi_0^t-\E_{\mathcal L_0^t}[Y_t]}
    \right) \\
    &=
    O\!\left(
        \Psi_k(\gamma)\delta^2 M
        +
        \frac{\gamma^k}{1-\gamma}\delta M
    \right).
\end{align*}
Equivalently, for $k\ge 1$,
\[
    \Psi_k(\gamma)
    =
    \frac{1-k\gamma^{k-1}+(k-1)\gamma^k}{(1-\gamma)^2} \leq \frac{1 - \gamma^k}{(1-\gamma)^2}.
\]
Therefore, the bias bound can be simplified as follows
$$
\abs{\mathbb E[\hat\tau_k]-\tau} = O\pbrac{\frac{1 - \gamma^k}{(1-\gamma)^2}\,\delta^2M + \frac{\gamma^k}{1-\gamma}\,\delta M}.
$$

\paragraph{Variance bound.}
We now prove the variance bound.

For any truncation size $0\leq k \leq T$, recall that $\hat \tau_k$ is given by
\begin{equation*}
    \hat \tau_k = \frac{1}{T}\sum_{u=1}^T\left(\frac{\mathbf{1}\left\{Z_u=1\right\}}{1/2}-\frac{\mathbf{1}\left\{Z_u=0\right\}}{1/2}\right)\sum_{t=u}^{\min (t+k, T)}  Y_t. 
\end{equation*}

For all $1\leq t \leq T$, denote
\begin{equation*}
    \Gamma_t := \left(\frac{\mathbf{1}\left\{Z_t=1\right\}}{0.5}- \frac{\mathbf{1}\left\{Z_t=0\right\}}{0.5}\right) \sum_{u=t}^{\min (t+k, T)} Y_u = 2(2Z_t - 1) \sum_{u=t}^{\min (t+k, T)} Y_u.
\end{equation*}

Then, the variance of the TPG estimator $\hat \tau_k$ can be written as
\begin{equation*}
\operatorname{Var}(\hat\tau_k)
=\frac{1}{T^2}\left(\sum_{t=1}^T\operatorname{Var}(\Gamma_t)
+2\sum_{1\leq i<j\leq T}\operatorname{Cov}(\Gamma_i,\Gamma_j)\right).
\end{equation*}

We begin by bounding the variance term $\operatorname{Var}(\Gamma_t)$. For any time $t\in [T]$, we observe
\begin{equation*}
\operatorname{Var}(\Gamma_t) = \mathbb{E}[\Gamma_t^2] - \mathbb{E}^2[\Gamma_t] \leq \mathbb{E}[\Gamma_t^2],
\end{equation*}
where $\left|\Gamma_t\right| \leq 2(k+1) M$.

Then,
$$
    \operatorname{Var}(\Gamma_t) \leq \mathbb{E}[\Gamma_t^2] \leq 4(k+1)^2M^2.
$$

Next, we bound the covariance term $\operatorname{Cov}(\Gamma_i, \Gamma_j)$ for $1 \leq i < j \leq T$. By Corollary~\ref{corollary:Gamma-alpha-mixing}, the process $\{\Gamma_t\}$ is $\alpha$-mixing with mixing coefficients satisfying $\alpha_\Gamma(h) \leq \gamma^{\max(0,h-k-1)}$ for any lag $h \geq 1$. Then, applying the foundational covariance bound for $\alpha$-mixing sequences in Lemma~\ref{lemma3_mixing}, we obtain
\begin{align*}
    \abs{\operatorname{Cov}(\Gamma_i, \Gamma_j)} &\leq 4 \alpha_{\Gamma}(j-i)\|\Gamma_i\|_{\infty}\|\Gamma_j\|_{\infty} \leq 16 \gamma^{\max(0,j-i-k-1)} (k+1)^2M^2.
\end{align*}

Then, we have
\begin{align*}
    2\sum_{1\leq i<j\leq T}\operatorname{Cov}(\Gamma_i,\Gamma_j) &\leq 2\sum_{1\leq i<j\leq T} \abs{\operatorname{Cov}(\Gamma_i,\Gamma_j)} \\
    &= 2\sum_{d=1}^{T-1} \sum_{\substack{1 \leq i<j \leq T \\ j-i=d}}\left|\operatorname{Cov}\left(\Gamma_i, \Gamma_j\right)\right|\\
    &\leq 2 \sum_{d=1}^{T-1} (T-d) \left(16 \gamma^{\max(0,d-k-1)} (k+1)^2M^2 \right)\\
    & = 32 \left[\sum_{d=1}^{k+1} (T-d)(k+1)^2M^2 +   \sum_{d=k+2}^{T-1} (T-d)(k+1)^2M^2\gamma^{d-k-1}\right]\\
    & \leq 32 \left[T(k+1)^3M^2 +   T(k+1)^2M^2\frac{\gamma}{1-\gamma}\right].
\end{align*}

Finally, we conclude the proof by noting that
\begin{align*}
    \operatorname{Var}(\hat\tau_k)
&=\frac{1}{T^2}\left[\sum_{t=1}^T\operatorname{Var}(\Gamma_t)
+2\sum_{1\leq i<j\leq T}\operatorname{Cov}(\Gamma_i,\Gamma_j)\right]\\
&\leq \frac{1}{T}\left[ 4(k+1)^2M^2 + 32  (k+1)^3M^2 +   \frac{32 \gamma (k+1)^2 M^2}{1-\gamma} \right]\\
&= O\left( \frac{(k+1)^3M^2}{T} + \frac{\gamma (k+1)^2M^2}{T(1-\gamma) }\right).
\end{align*}
\end{proof}
}

\subsection{Proof of Theorem~\ref{thm:truncDQ_CLT_HAC_concise}}
\begin{lemma}[CLT for the TPG estimator]\label{lemma:trunc_DQ_CLT}
Fix a finite truncation size $k \geq 0$. 
Under Assumptions~\ref{assumption:mixing-time} and \ref{ass:asymptotic_variance}, define
$$
\sigma_k^2 
:= \lim_{T\to\infty} \frac{1}{T} \,
\Var\!\left( \sum_{u=1}^T 
\left( \frac{\mathbf{1}\{Z_u=1\}}{1/2} - \frac{\mathbf{1}\{Z_u=0\}}{1/2} \right)
\sum_{t=u}^{\min(u+k,\,T)} Y_t \right),
$$
then, as $T \to \infty$,
$
\sqrt{T}\,\big( \hat{\tau}_k - \E[\hat{\tau}_k] \big)
\;\Rightarrow\; \mathcal{N}(0,\,\sigma_k^2).
$
\end{lemma}

\begin{lemma}[Consistency of the ideal HAC estimator]\label{lem:HAC_consistency}
Under Assumptions~\ref{assumption:mixing-time} and \ref{ass:asymptotic_variance},
define
$$
V_t := B_t - \mathbb{E}[B_t], \qquad
\tilde\Gamma_\ell := \frac{1}{T}\sum_{t=1}^{T-\ell}  V_t  V_{t+\ell}, 
\qquad
w^{\mathrm{NW}}_\ell := 1-\frac{\ell}{L_T+1},
$$
where $L_T \to \infty$ with $L_T=O(T^{1/3})$. 
Construct the ideal HAC estimator as
$$
\tilde\Omega_T \;:=\; \tilde\Gamma_0 \,+\, 2\sum_{\ell=1}^{L_T} w^{\mathrm{NW}}_\ell\,\tilde\Gamma_\ell.
$$
Then,
$\tilde\Omega_T$ is a consistent estimator of the asymptotic variance $\sigma_k^2$ in Lemma~\ref{lemma:trunc_DQ_CLT}, i.e.,
$
\tilde\Omega_T \;\xrightarrow{p}\; \sigma_k^2.
$
\end{lemma}

\begin{lemma}[Consistency of the nonstationary HAC estimator under Ces\`aro-$L^2$ mean stability]\label{lemma:HAC_consistency}
Under Assumption~\ref{assumption:mixing-time}, \ref{ass:asymptotic_variance}, and \ref{ass:mean-drift}, let $\bar B := T^{-1}\sum_{t=1}^T B_t$, 
$$
\hat V_{t} := B_t - \bar B, \qquad
\hat\Gamma_\ell := \frac{1}{T}\sum_{t=1}^{T-\ell} \hat V_{t}\,\hat V_{t+\ell},
\qquad
w^{\mathrm{NW}}_\ell := 1-\frac{\ell}{L_T+1},
$$
where $L_T \to \infty$ with $L_T=O(T^{1/3})$.
Construct the nonstationary HAC estimator as
$$
\hat\Omega_T := \hat\Gamma_0 + 2\sum_{\ell=1}^{L_T} w^{\mathrm{NW}}_\ell\,\hat\Gamma_\ell.
$$
Then, as $T\to\infty$,
$
\hat\Omega_T - \tilde\Omega_T = o_p(1)
$.
\end{lemma}
\begin{proof}[Proof of Theorem~\ref{thm:truncDQ_CLT_HAC_concise}]
The theorem follows from Lemma~\ref{lemma:trunc_DQ_CLT}, Lemma~\ref{lem:HAC_consistency}, and Lemma~\ref{lemma:HAC_consistency}.
\end{proof}

\section{Proof of Propositions}\label{app:prop}
\subsection{Proof of Proposition~\ref{prop:policy_gradient}}
\begin{proof}[Proof of Proposition~\ref{prop:policy_gradient}]
The value function $J(\theta)$ can be written as
\begin{equation*}
    J(\theta)= \frac{1}{T}\sum_{t=1}^{T} \E_{\mathcal{L}_{\theta}^t}\sbrac{Y_t} = \frac{1}{T}\sum_{t=1}^T \rho^{\top} P_\theta^{1 \rightarrow t} r_\theta.
\end{equation*}
Then, differentiating $J(\theta)$ yields
\begin{align*}
    \frac{d}{d\theta}J(\theta) &= \frac{1}{T}\sum_{t=1}^T \frac{d}{d\theta} \rho^{\top} P_\theta^{1 \rightarrow t} r_\theta\\
    &= \frac{1}{T}\sum_{t=1}^T \rho^{\top}\left[\sum_{u=1}^{t-1} P_\theta^{1 \rightarrow u} \Delta^u P_\theta^{(u+1) \rightarrow t}\right] r_\theta+\frac{1}{T}\sum_{t=1}^T \rho^{\top} P_\theta^{1 \rightarrow t}\left(r_1-r_0\right)\\
    &= \frac{1}{T}\sum_{t=1}^T \left[ \rho^{\top} P_\theta^{1 \rightarrow t}\left(r_1-r_0\right) + \sum_{u=1}^{t-1} \rho^{\top} P_\theta^{1 \rightarrow u} \Delta^u P_\theta^{(u+1) \rightarrow t} r_\theta \right].
\end{align*}
Evaluate at $\theta = 1/2$ (more generally, for any $\theta\in(0,1)$), we have
\begin{align*}
    \nabla J(1/2) 
    &= \left. \frac{d}{d\theta} J(\theta) \right|_{\theta=1/2} \\
    &= \frac{1}{T} \sum_{t=1}^T \left( \sum_{u=1}^{t-1} \rho^\top P_{1/2}^{1 \rightarrow u} \Delta^u P_{1/2}^{(u+1) \rightarrow t} r_{1/2} + \rho^\top P_{1/2}^{1 \rightarrow t} (r_1 - r_0) \right).
\end{align*}
Observe that each term $\rho^\top P_{1/2}^{1 \rightarrow u} \Delta^u P_{1/2}^{(u+1) \rightarrow t} r_{1/2}$ corresponds to the difference in the conditional expected outcome at time $t$ between taking action $Z_u = 1$ versus $Z_u = 0$ at time $u$, under policy $\pi_{1/2}$ and initial state distribution $X_1 \sim \rho$. Thus, we can express the gradient as
\begin{align*}
    \nabla J(1/2)
    &= \frac{1}{T} \sum_{t=1}^T \sum_{u=1}^{t} \left( \mathbb{E}_{\mathcal{L}_{1/2}^t}[Y_t \mid Z_u = 1] - \mathbb{E}_{\mathcal{L}_{1/2}^t}[Y_t \mid Z_u = 0] \right) \\
    &= \frac{1}{T} \sum_{u=1}^T \sum_{t=u}^{T} \left( \mathbb{E}_{\mathcal{L}_{1/2}^t}[Y_t \mid Z_u = 1] - \mathbb{E}_{\mathcal{L}_{1/2}^t}[Y_t \mid Z_u = 0] \right) \\
    &= \frac{1}{T} \sum_{u=1}^T \left( Q_{1/2}^u(1) - Q_{1/2}^u(0) \right) \\
    &= \frac{1}{T} \sum_{t=1}^T \left( Q_{1/2}^t(1) - Q_{1/2}^t(0) \right),
\end{align*}
which completes the proof.
\end{proof}

\subsection{Proof of Proposition~\ref{them:DQ_bias}}

{\color{black}

\begin{proof}[Proof of Proposition~\ref{them:DQ_bias}]
We first show that the untruncated PG estimator $\hat{\tau}_T$ is an unbiased
estimator of the policy gradient $\nabla J(1/2)$. Recall that
\begin{equation*}
    \hat \tau_T
    =
    \frac{1}{T}\sum_{u=1}^T
    \left(
        \frac{\mathbf{1}\left\{Z_u=1\right\}}{1/2}
        -
        \frac{\mathbf{1}\left\{Z_u=0\right\}}{1/2}
    \right)
    \sum_{t=u}^T Y_t .
\end{equation*}
Under the experimental policy $\pi_{1/2}$ and initial state distribution
$X_1\sim \rho$, taking expectation gives
\begin{align*}
    \mathbb{E}\sbrac{\hat \tau_T}
    &=
    \frac{1}{T}\sum_{u=1}^T
    \mathbb{E}\sbrac{
        \frac{\mathbf{1}\left\{Z_u=1\right\}}{1/2}
        \sum_{t=u}^T Y_t
        -
        \frac{\mathbf{1}\left\{Z_u=0\right\}}{1/2}
        \sum_{t=u}^T Y_t
    } \\
    &=
    \frac{1}{T}\sum_{u=1}^T
    \left(
        Q_{1/2}^u(1)-Q_{1/2}^u(0)
    \right)
    =
    \nabla J(1/2),
\end{align*}
where the final equality follows from Proposition~\ref{prop:policy_gradient}.
Hence, $\hat{\tau}_T$ is an unbiased estimator of $\nabla J(1/2)=J'(1/2)$.

We then prove the bias bound. Recall that
\[
    J(\theta)
    :=
    \frac{1}{T}\sum_{t=1}^T
    \rho^\top P_\theta^{1\to t} r_\theta,
    \qquad
    \tau = J(1)-J(0).
\]
Since $\mathbb{E}\sbrac{\hat \tau_T}=J'(1/2)$,
\[
    \mathbb{E}\sbrac{\hat \tau_T}-\tau
    =
    J'(1/2)-\bigl(J(1)-J(0)\bigr).
\]
Using the integral form of the Taylor remainder around $\theta=1/2$,
\begin{equation}
    \mathbb{E}\sbrac{\hat \tau_T}-\tau
    =
    \int_0^{1/2} \theta J''(\theta)\,d\theta
    -
    \int_{1/2}^1 (1-\theta)J''(\theta)\,d\theta .
    \label{eq:centered_bias_integral}
\end{equation}

It remains to bound this centered remainder. Let
\[
    d_r := r_1-r_0,
    \qquad
    \Delta^u := P_1^u-P_0^u,
    \qquad
    \bar\theta := 1/2.
\]
By differentiating $J(\theta)$ twice, we may write
\begin{equation}
    J''(\theta)
    =
    2D(\theta)+A(\theta)+B(\theta),
    \label{eq:Jpp_decomp_no_mixing_centered}
\end{equation}
where
\begin{align}
D(\theta)
&:=
\frac{1}{T}\sum_{t=1}^T\sum_{u=1}^{t-1}
\rho^\top
P_\theta^{1\to u}
\Delta^u
P_\theta^{(u+1)\to t}
d_r,
\label{eq:D_theta_def}
\\
A(\theta)
&:=
\frac{1}{T}\sum_{t=1}^T\sum_{u=1}^{t-1}\sum_{i=1}^{u-1}
\rho^\top
P_\theta^{1\to i}
\Delta^i
P_\theta^{(i+1)\to u}
\Delta^u
P_\theta^{(u+1)\to t}
r_\theta,
\label{eq:A_theta_def}
\\
B(\theta)
&:=
\frac{1}{T}\sum_{t=1}^T\sum_{u=1}^{t-1}\sum_{j=u+1}^{t-1}
\rho^\top
P_\theta^{1\to u}
\Delta^u
P_\theta^{(u+1)\to j}
\Delta^j
P_\theta^{(j+1)\to t}
r_\theta.
\label{eq:B_theta_def}
\end{align}
Define the first-order term at the centered design by
\[
    \bar D := D(\bar\theta)=D(1/2).
\]
Then
\begin{equation}
    J''(\theta)
    =
    2\bar D + R(\theta),
    \qquad
    R(\theta)
    :=
    2\pbrac{D(\theta)-\bar D}+A(\theta)+B(\theta).
    \label{eq:Jpp_centered_linear_remainder}
\end{equation}

We now show that, without any mixing assumption,
\[
    \sup_{\theta\in[0,1]}\abs{R(\theta)}
    =
    O\pbrac{T^2\delta^2M}.
    \label{eq:R_bound_no_mixing_centered}
\]
We only use the non-expansiveness of Markov kernels in total variation:
\[
    \|\mu P_\theta^{a\to b}\|_{\mathrm{TV}}
    \le
    \|\mu\|_{\mathrm{TV}},
\]
together with
\[
    \|\mu\Delta^u\|_{\mathrm{TV}}
    \le
    \delta \|\mu\|_{\mathrm{TV}},
    \qquad
    \|r_\theta\|_\infty\le M,
    \qquad
    \|d_r\|_\infty\le 2M.
\]

First, every term in $A(\theta)$ contains two $\Delta$ insertions. Therefore,
uniformly over $\theta\in[0,1]$,
\begin{align}
\abs{A(\theta)}
&\le
\frac{\delta^2 M}{T}
\sum_{t=1}^T\sum_{u=1}^{t-1}\sum_{i=1}^{u-1} 1
=
O\pbrac{T^2\delta^2M}.
\label{eq:A_bound_no_mixing_centered}
\end{align}
Similarly,
\begin{align}
\abs{B(\theta)}
&\le
\frac{\delta^2 M}{T}
\sum_{t=1}^T\sum_{u=1}^{t-1}\sum_{j=u+1}^{t-1} 1
=
O\pbrac{T^2\delta^2M}.
\label{eq:B_bound_no_mixing_centered}
\end{align}

It remains to bound $D(\theta)-\bar D$. By adding and subtracting
$P_{\bar\theta}^{1\to u}\Delta^u P_\theta^{(u+1)\to t}$,
\begin{align}
D(\theta)-\bar D
&=
\frac{1}{T}\sum_{t=1}^T\sum_{u=1}^{t-1}
\rho^\top
\left(
    P_\theta^{1\to u}
    -
    P_{\bar\theta}^{1\to u}
\right)
\Delta^u
P_\theta^{(u+1)\to t}
d_r
\notag\\
&\quad
+
\frac{1}{T}\sum_{t=1}^T\sum_{u=1}^{t-1}
\rho^\top
P_{\bar\theta}^{1\to u}
\Delta^u
\left(
    P_\theta^{(u+1)\to t}
    -
    P_{\bar\theta}^{(u+1)\to t}
\right)
d_r .
\label{eq:D_minus_Dbar_split}
\end{align}
For the first difference, the product telescoping identity gives
\[
    P_\theta^{1\to u}
    -
    P_{\bar\theta}^{1\to u}
    =
    \left(\theta-\bar\theta\right)
    \sum_{i=1}^{u-1}
    P_\theta^{1\to i}
    \Delta^i
    P_{\bar\theta}^{(i+1)\to u}.
\]
For the second difference,
\[
    P_\theta^{(u+1)\to t}
    -
    P_{\bar\theta}^{(u+1)\to t}
    =
    \left(\theta-\bar\theta\right)
    \sum_{j=u+1}^{t-1}
    P_\theta^{(u+1)\to j}
    \Delta^j
    P_{\bar\theta}^{(j+1)\to t}.
\]
Since $\abs{\theta-\bar\theta}\le 1/2$, each term in
\eqref{eq:D_minus_Dbar_split} contains two $\Delta$ insertions. Hence,
uniformly over $\theta\in[0,1]$,
\begin{align}
\abs{D(\theta)-\bar D}
&\le
\frac{C\delta^2M}{T}
\sum_{t=1}^T\sum_{u=1}^{t-1}
\left\{
    \sum_{i=1}^{u-1}1
    +
    \sum_{j=u+1}^{t-1}1
\right\}
\notag\\
&=
O\pbrac{T^2\delta^2M},
\label{eq:D_minus_Dbar_bound_no_mixing_centered}
\end{align}
for a universal constant $C>0$.

Combining
\eqref{eq:A_bound_no_mixing_centered},
\eqref{eq:B_bound_no_mixing_centered}, and
\eqref{eq:D_minus_Dbar_bound_no_mixing_centered} gives
\[
    \sup_{\theta\in[0,1]}\abs{R(\theta)}
    =
    O\pbrac{T^2\delta^2M}.
\]

Substituting the decomposition
\eqref{eq:Jpp_centered_linear_remainder} into
\eqref{eq:centered_bias_integral}, we obtain
\begin{align*}
\mathbb{E}\sbrac{\hat \tau_T}-\tau
&=
2\bar D
\left(
    \int_0^{1/2}\theta\,d\theta
    -
    \int_{1/2}^1(1-\theta)\,d\theta
\right)
+
\int_0^{1/2}\theta R(\theta)\,d\theta
-
\int_{1/2}^1(1-\theta)R(\theta)\,d\theta.
\end{align*}
The first term is exactly zero because
\[
    \int_0^{1/2}\theta\,d\theta
    =
    \int_{1/2}^1(1-\theta)\,d\theta
    =
    \frac18.
\]
Therefore,
\begin{align*}
\abs{\mathbb{E}\sbrac{\hat \tau_T}-\tau}
&\le
\left(
    \int_0^{1/2}\theta\,d\theta
    +
    \int_{1/2}^1(1-\theta)\,d\theta
\right)
\sup_{\theta\in[0,1]}\abs{R(\theta)}
\\
&=
O\pbrac{T^2\delta^2M}.
\end{align*}
This proves the bias bound. Since the argument used only the identity
$\mathbb{E}[\hat{\tau}]=J'(1/2)$, the same bound applies to any unbiased
estimator of $\nabla J(1/2)$.

We now bound the variance of $\hat{\tau}_T$. By definition,
\[
    \hat \tau_T
    =
    \frac{1}{T}\sum_{u=1}^T
    \left(
        \frac{\mathbf 1\{Z_u=1\}}{1/2}
        -
        \frac{\mathbf 1\{Z_u=0\}}{1/2}
    \right)
    \sum_{t=u}^T Y_t .
\]
Since
\[
    \left|
        \frac{\mathbf 1\{Z_u=1\}}{1/2}
        -
        \frac{\mathbf 1\{Z_u=0\}}{1/2}
    \right|
    \le 2,
    \qquad
    \abs{Y_t}\le M,
\]
we have
\begin{align*}
\abs{\hat \tau_T}
&\le
\frac{1}{T}\sum_{u=1}^T
2\sum_{t=u}^T \abs{Y_t}
\\
&\le
\frac{2M}{T}\sum_{u=1}^T (T-u+1)
=
M(T+1).
\end{align*}
Thus,
\[
    \operatorname{Var}(\hat{\tau}_T)
    \le
    \mathbb{E}\sbrac{\hat{\tau}_T^2}
    \le
    M^2(T+1)^2
    =
    O\pbrac{T^2M^2}.
\]
This completes the proof.
\end{proof}
}

\subsection{Proof of Proposition~\ref{prop:trunc_policy_gradient}}
\begin{proof}[Proof of Proposition~\ref{prop:trunc_policy_gradient}]
The truncated policy value function $J_k(\theta)$ can be written as
\begin{equation*}
    J_k(\theta)= \frac{1}{T}\sum_{t=1}^{T} \E_{\mathcal{L}_\theta^{t, k}}\sbrac{Y_t}= \frac{1}{T}\sum_{t=1}^T \rho^{\top} P_{1/2}^{1 \rightarrow \max(t-k,1)} P_{\theta}^{\max(t-k,1) \rightarrow t}r_\theta.
\end{equation*}

Then, differentiating $J_k(\theta)$ yields
\begin{align*}
    &\textcolor{white}{\,\,=} \frac{d}{d\theta}J_k(\theta) \\
    &= \frac{1}{T}\sum_{t=1}^T \frac{d}{d\theta} \rho^{\top} P_{1/2}^{1 \rightarrow \max(t-k,1)} P_{\theta}^{\max(t-k,1) \rightarrow t}r_\theta\\
&= \frac{1}{T}\sum_{t=1}^T \rho^{\top} P_{1/2}^{1 \rightarrow \max(t-k,1)}\Bigg[\sum_{u=\max(t-k,1)}^{t-1} P_\theta^{\max(t-k,1) \rightarrow u} \Delta^u P_\theta^{(u+1) \rightarrow t}r_\theta \\
    &\hspace{7cm} +\; P_{\theta}^{\max(t-k,1) \rightarrow t}(r_1 - r_0)\Bigg].
\end{align*}

Evaluate at $\theta = 1/2$, one can observe
\begin{align*}
    \nabla J_k(1/2) &= \left.\frac{d}{d\theta} J_k(\theta)\right|_{\theta = \frac{1}{2}} \\
    &= \frac{1}{T}\sum_{t=1}^T \left[ \rho^{\top} P_{1/2}^{1 \rightarrow t}\left(r_1-r_0\right) + \sum_{u=\max(t-k,1)}^{t-1} \rho^{\top} P_{1/2}^{1 \rightarrow u} \Delta^u P_{1/2}^{(u+1) \rightarrow t} r_{1/2} \right]\\
    & = \frac{1}{T} \sum_{t=1}^T  \sum_{u=\max(t-k,1)}^{t} \left(\mathbb{E}_{\mathcal{L}_{1/2}^t}\left[Y_t\mid Z_u=1\right] - \mathbb{E}_{\mathcal{L}_{1/2}^t}\left[Y_t\mid Z_u=0\right] \right)\\
    & = \frac{1}{T}\sum_{u=1}^T  \sum_{t=u}^{\min (u+k, T)} \left(\mathbb{E}_{\mathcal{L}_{1/2}^t}\left[Y_t\mid Z_u=1\right] - \mathbb{E}_{\mathcal{L}_{1/2}^t}\left[Y_t\mid Z_u=0\right] \right)\\
    &= \frac{1}{T}\sum_{t=1}^T \left(Q_{{1/2}}^{t, k}(1) - Q_{{1/2}}^{t, k}(0) \right).
\end{align*}
\end{proof}

{\color{black}
\section{Proof of Lemmas}
\label{app:lemma}
\subsection{Proof of Lemma~\ref{lemma:trunc_DQ_CLT}}
\begin{definition}[Dobrushin, ergodic, and minimal ergodic coefficients]\label{def:Dob_coef}
Let $K$ be a Markov transition kernel on a measurable state space $(\mathcal{X},\mathcal{B})$.
The \emph{Dobrushin contraction coefficient} of $K$ is
$$
\tilde \gamma(K) \;:=\; \sup_{x_1,x_2\in\mathcal{X}}\ \sup_{B\in\mathcal{B}}
\big|\,K(x_1,B)-K(x_2,B)\,\big|,
$$
and the associated \emph{ergodic coefficient} is $\tilde \alpha(K) := 1 - \tilde \gamma(K)$.
For each $n\geq 1$, let $\{X_{n,i}: 1 \leq i \leq N_n\}$ be a finite temporally nonhomogeneous Markov chain
with one-step kernels $K^{(n)}_{i,i+1}(x,\cdot)$ satisfying
$$
\mathbb{P}\!\left( X_{n,i+1} \in \cdot \,\middle|\, X_{n,1},\ldots,X_{n,i} \right)
= K^{(n)}_{i,i+1}(X_{n,i}, \cdot).
$$
The \emph{minimal ergodic coefficient} of the $n$-th row is then
$$
\tilde \alpha_n \;:=\; \min_{1 \leq i < N_n} \tilde \alpha\!\left(K^{(n)}_{i,i+1}\right)
\;=\; 1 - \max_{1 \leq i < N_n} \tilde \gamma\!\left(K^{(n)}_{i,i+1}\right).
$$
\end{definition}

\begin{lemma}[CLT for temporally nonhomogeneous Markov chains {\normalfont in \cite{arlotto2016central}}]
\label{thm:GO-CLT}
Fix $m \geq 0$. For each $n\geq 1$, let $\{X_{n,i}\}_{i=1}^{n+m}$ be as in Definition~\ref{def:Dob_coef},
and let $\{f_{n,i}\}_{i=1}^n$ be bounded measurable functions
$(i.e., f_{n,i}: \mathcal{X}^{1+m} \rightarrow \mathbb{R})$ with
$$
S_n\;:=\;\sum_{i=1}^n f_{n, i}\left(X_{n, i}, \ldots, X_{n, i+m}\right),\qquad
\max_{1\leq i\leq n}\ \|f_{n,i}\|_\infty \;\le\; C_n,
$$
where $C_n$ may depend on $n$. Suppose the minimal ergodic coefficients satisfy $\tilde \alpha_n>0$ and
$$
C_n^2\,\tilde \alpha_n^{-2}\;=\;{o}\left(\operatorname{Var}(S_n)\right),\quad\text{as }n\to\infty.
$$
Then, we have
$$
\frac{S_n-\E[S_n]}{\sqrt{\operatorname{Var}(S_n)}}
\ \Rightarrow\ \mathcal{N}(0,1),
\quad\text{as }n\to\infty.
$$
\end{lemma}

\begin{proof}[Proof of Lemma~\ref{lemma:trunc_DQ_CLT}]
    The proof follows by recasting our nonstationary MDP under the experimenting policy $\pi_{1/2}$ as a temporally nonhomogeneous Markov chain and verifying the conditions of Lemma~\ref{thm:GO-CLT}.

    Let
    $$
    G(x,z,u) := F^{-1}_{R(\cdot\mid x,z)}(u),
    $$
    where $F^{-1}_{R(\cdot\mid x,z)}$ denotes a generalized inverse CDF. Using independent
    $U_t\sim \operatorname{Unif}[0,1]$, the reward can be represented as
    $$
    Y_t = G(X_t,Z_t,U_t).
    $$

    For each horizon $T$, define the shifted augmented row
    $$
    \widetilde X_{T,t} := (X_t,Z_{t-1},U_{t-1}), \qquad 1\leq t\leq T+1,
    $$
    where $Z_0$ and $U_0$ are arbitrary dummy variables independent of everything else. For
    $t>T+1$, extend the row to length $T+k+1$ using any fixed transition kernel that is independent
    of the current state; the functions below do not use these additional variables.

    For $1\leq t\leq T$, the one-step transition kernel of $\widetilde X_{T,t}$ is, for
    $w=(x,z_-,u_-)\in \mathcal X\times\{0,1\}\times[0,1]$ and measurable
    $A\subseteq \mathcal X\times\{0,1\}\times[0,1]$,
    $$
    \widetilde K_t(w,A)
    :=
    \frac12\sum_{z=0}^1 \int_0^1
    \sum_{y\in\mathcal X}
    \mathbf 1\{(y,z,u)\in A\} P^t_z(x,y)\,du .
    $$
    This shifted augmentation is Markov because, conditional on $X_t$, the fresh action $Z_t$ and
    reward noise $U_t$ are drawn independently, and then $X_{t+1}$ is generated from
    $P^t_{Z_t}(X_t,\cdot)$.

    We now verify the Dobrushin coefficient of $\widetilde K_t$. Fix
    $w=(x,z_-,u_-)$ and $w'=(x',z'_-,u'_-)$, and for each $z,u$ define the section
    $$
    A_{z,u}:=\{y\in\mathcal X:(y,z,u)\in A\}.
    $$
    Then, for $1\leq t\leq T$,
    $$
    \begin{aligned}
    \big|\widetilde K_t(w,A)-\widetilde K_t(w',A)\big|
    &\leq
    \frac12\sum_{z=0}^1\int_0^1
    \big|P^t_z(x,A_{z,u})-P^t_z(x',A_{z,u})\big|\,du  \\
    &\leq
    \frac12\sum_{z=0}^1 \gamma
    \;=\;\gamma,
    \end{aligned}
    $$
    where the second inequality follows from Assumption~\ref{assumption:mixing-time} applied to point
    masses. For the artificially extended kernels after time $T$, the Dobrushin coefficient is zero
    because those kernels are independent of the current state. Hence the minimal ergodic coefficient
    of the row satisfies
    $$
    \tilde \alpha_T \geq 1-\gamma>0.
    $$

    We next define the local functions. For local arguments
    $w_\ell=(x_\ell,a_\ell,u_\ell)\in \mathcal X\times\{0,1\}\times[0,1]$,
    $\ell=0,\ldots,k+1$, define
    $$
    f_{T,i}(w_0,\ldots,w_{k+1})
    :=
    2(2a_1-1)
    \sum_{j=0}^k
    \mathbf 1\{i+j\leq T\}
    G(x_j,a_{j+1},u_{j+1}).
    $$
    Since $w_\ell$ corresponds to $\widetilde X_{T,i+\ell}$, this gives
    $$
    f_{T,i}\left(\widetilde X_{T,i},\ldots,\widetilde X_{T,i+k+1}\right)
    =
    2(2Z_i-1)\sum_{j=0}^k \mathbf 1\{i+j\leq T\}Y_{i+j}.
    $$
    Therefore, with
    $$
    S_T := \sum_{i=1}^T
    f_{T,i}\left(\widetilde X_{T,i},\ldots,\widetilde X_{T,i+k+1}\right),
    $$
    we have
    $
    S_T = T\hat \tau_k.
    $

    We now verify the remaining conditions of Lemma~\ref{thm:GO-CLT}, applying it with
    $n=T$ and $m=k+1$. Since $|Y_t|\leq M$ almost surely,
    $$
    \max_{1\leq i\leq T}\|f_{T,i}\|_\infty \leq 2(k+1)M.
    $$
    Moreover, by the assumed existence of the asymptotic variance,
    $$
    \frac{1}{T}\operatorname{Var}(S_T)\to \sigma_k^2 > 0.
    $$
    Then
    $\operatorname{Var}(S_T)\sim T\sigma_k^2\to\infty$, and
    $$
    C_T^2\tilde\alpha_T^{-2}
    \leq
    4(k+1)^2M^2(1-\gamma)^{-2}
    =
    o\!\left(\operatorname{Var}(S_T)\right).
    $$
    Lemma~\ref{thm:GO-CLT} therefore yields
    $$
    \frac{S_T-\E[S_T]}{\sqrt{\operatorname{Var}(S_T)}}
    \Rightarrow
    \mathcal N(0,1).
    $$
    Since $\operatorname{Var}(S_T)/T\to\sigma_k^2$, Slutsky's theorem gives
    $$
    \sqrt{T}\,\big(\hat\tau_k-\E[\hat\tau_k]\big)
    =
    \frac{S_T-\E[S_T]}{\sqrt{\operatorname{Var}(S_T)}}
    \sqrt{\frac{\operatorname{Var}(S_T)}{T}}
    \Rightarrow
    \mathcal N(0,\sigma_k^2).
    $$
\end{proof}

\subsection{Proof of Lemma~\ref{lem:HAC_consistency}}
\begin{proof}[Proof of Lemma~\ref{lem:HAC_consistency}]
After reindexing, we can write
$$
\hat \tau_k = \frac{1}{T}\sum_{u=1}^T 
w_u^{\mathrm{IPW}}
\sum_{t=u}^{\min(u+k,\,T)} Y_t = \frac{1}{T}\sum_{t=1}^T\sum_{i=\max\{1,\, t-k\}}^t w_i^{\mathrm{IPW}} Y_t = \frac{1}{T}\sum_{t=1}^T B_t.
$$
Set the target variance as
$$
\Omega\ :=\ \lim_{T\to\infty} \frac{1}{T}\sum_{i=1}^T\sum_{j=1}^T \E\!\big[V_{i}\,V_{j}\big].
$$
Then
$$
\Omega\;=\;\lim_{T\to\infty} \frac{1}{T}\E\!\left[\Big(\textstyle\sum_{t=1}^T V_{t}\Big)\Big(\sum_{s=1}^T V_{s}\Big)\right]
=\lim_{T\to\infty}\frac{1}{T}\Var\Bigl(\sum_{t=1}^TB_t\Bigr),
$$
which is the \emph{asymptotic variance} in the CLT of Lemma~\ref{lemma:trunc_DQ_CLT}, i.e.,
 $\Omega = \sigma_k^2$.

For lags $\ell\ge0$, form the \emph{ideal} autocovariances
$$
\tilde\Gamma_\ell
\;:=\;
\frac{1}{T}\sum_{t=1}^{T-\ell}  V_{t}\, V_{t+\ell}
\quad(\text{Define }\tilde\Gamma_{-\ell}:=\tilde\Gamma_\ell).
$$
With a bandwidth $L_T\to\infty$ and $L_T=O(T^{1/3})$ and Bartlett weights
$w^{\mathrm{NW}}_\ell=1-\ell/(L_T+1)$, define the ideal HAC estimate of the
\emph{asymptotic variance} of $\sqrt{T}(\hat\tau_k-\mathbb E[\hat\tau_k])$:
$$
\tilde\Omega_T
\;:=\; \sum_{j=-T}^T k\left(j / S_T\right) \tilde\Gamma_j
\;=\;
\tilde\Gamma_0
\;+\;
2\sum_{\ell=1}^{L_T} w^{\mathrm{NW}}_\ell\,\tilde\Gamma_\ell,
$$
where $k(\cdot)$ is specified as the Bartlett kernel, i.e., $k(x)=(1-|x|) \mathbf{1}\{|x| \leq 1\}$, and $S_T := L_T +1$ is the bandwidth parameter.
We verify the following conditions to apply Theorem~1 in \citet{hansen1992consistent}. 

\begin{enumerate}
    \item[(K)] For all $x \in \mathbb{R}, |k(x)| \leqslant 1$, $k(x)=k(-x)$; $k(0)=1$; $k(x)$ is continuous at zero; and for almost all $x \in \mathbb{R}$, $\int_{\mathbb{R}} |k(x)| dx < \infty$.
    \item[(S)] $S_T \rightarrow \infty$ and for some $q \in(1 / 2, \infty), \; S_T^{1+2 q} / T=O(1)$.
    \item [(V1)] For some $r \in(2,4]$ such that $r>2+1 / q$, and some $p>r$, $12 \sum_{h=1}^{\infty} \alpha(h)^{2(1 / r-1 / p)}=A<\infty$ and $\sup _{t \geqslant 1}\left\|V_t\right\|_p=C<\infty$, where $\alpha(\cdot)$ is the strong mixing coefficients for $\left\{V_t\right\}_{t=1}^{T}$.
\end{enumerate}

Note that condition (K) is automatically satisfied under the Bartlett kernel, while condition (S) holds when taking $q=1$ and choosing the bandwidth parameter $L_T = O(T^{1/3})$. We now proceed to verify (V1) in the following.

For any $p \geq 1$, we have $\sup _{t \geq 1}\left\|V_t\right\|_p \leq 4(k+1) M <\infty$. 
Choose any $r \in(2,4]$ with $r>2+1 / q$ (e.g., for Bartlett $q=1$, take $r=4$), and then take any $p>r$.
Let $\beta:=2(1 / r-1 / p)>0$. 

Recall that $B_t = \sum_{i=\max\{1,\, t-k\}}^t w_i^{\mathrm{IPW}} Y_t$. Lemma~\ref{lemma:alpha-mixing} establishes that the process $\{(X_t, Z_t, Y_t)\}_{t=1}^T$ is strong mixing with coefficient $\alpha(h) = \gamma^{h-1}$. By a similar argument as in Corollary~\ref{corollary:Gamma-alpha-mixing}, the sequence $\{B_t\}_{t=1}^T$ is also strong mixing, with coefficient $\alpha(h) \leq \gamma^{\max\{0,\,h-k-1\}}$.
It follows that
$$
\sum_{h=1}^{\infty} \alpha(h)^\beta=\sum_{h=1}^{k+1} \alpha(h)^\beta+\sum_{h=k+2}^{\infty} \alpha(h)^\beta \leq(k+1) \cdot 1+\sum_{h=k+2}^{\infty}\left(\gamma^{h-k-1}\right)^\beta.
$$
The tail is geometric with ratio $\gamma^\beta \in(0,1)$, so
$$
\sum_{h=k+2}^{\infty}\left(\gamma^{h-k-1}\right)^\beta=\frac{\gamma^\beta}{1-\gamma^\beta}<\infty
$$
Therefore, for any finite $k < \infty$,
$$
12 \sum_{h=1}^{\infty} \alpha(h)^{2(1 / r-1 / p)} \leq 12\left[(k+1)+\frac{\gamma^\beta}{1-\gamma^\beta}\right] <\infty,
$$
which verifies (V1).

Together with the CLT established in Lemma~\ref{lemma:trunc_DQ_CLT}, Theorem~1 of \citet{hansen1992consistent} implies that the ideal HAC estimator consistently estimates the asymptotic variance, i.e., $\tilde\Omega_T \xrightarrow{p}
\sigma_k^2$.
\end{proof}

\subsection{Proof of Lemma~\ref{lemma:HAC_consistency}}
\begin{proof}[Proof of Lemma~\ref{lemma:HAC_consistency}]
Recall that $\bar B := T^{-1}\sum_{t=1}^T B_t$, $\mu_t := \E[B_t]$, and $\bar\mu_T := T^{-1}\sum_{t=1}^T \mu_t$.
Write
$$
\bar V_T:=\frac1T\sum_{t=1}^T V_{t}=\bar B-\bar\mu_T,\qquad
\delta_{t}:=\bar\mu_T-\mu_t.
$$
Then
$$
\hat V_{t}=B_t-\bar B=(B_t-\mu_t)-(\bar B-\bar\mu_T)-(\bar\mu_T-\mu_t)
=V_{t}-\bar V_T-\delta_{t}.
$$
Fix $\ell\geq 0$. Expanding $\hat\Gamma_\ell-\tilde\Gamma_\ell$ gives
\begin{align}
\hat\Gamma_\ell-\tilde\Gamma_\ell
&=\frac1T\sum_{t=1}^{T-\ell}\Big[(V_{t}-\bar V_T-\delta_{t})(V_{t+\ell}-\bar V_T-\delta_{t+\ell})-V_{t}V_{t+\ell}\Big]\notag\\
&=\underbrace{-\frac{\bar V_T}{T}\!\sum_{t=1}^{T-\ell}V_{t}
              -\frac{\bar V_T}{T}\!\sum_{t=1}^{T-\ell}V_{t+\ell}
              +\frac{T-\ell}{T}\bar V_T^2}_{=:A_\ell}\label{eq:A_term}\\
&\quad \underbrace{-\frac1T\!\sum_{t=1}^{T-\ell}\delta_{t}\,V_{t+\ell}
                   -\frac1T\!\sum_{t=1}^{T-\ell}\delta_{t+\ell}\,V_{t}
                   +\frac{\bar V_T}{T}\!\sum_{t=1}^{T-\ell}(\delta_{t}+\delta_{t+\ell})}_{=:B_\ell}\label{eq:B_term}\\
&\quad+\underbrace{\frac1T\!\sum_{t=1}^{T-\ell}\delta_{t}\,\delta_{t+\ell}}_{=:C_\ell}.\label{eq:C_term}
\end{align}

We bound these uniformly over $0\leq \ell\leq L_T$.

\emph{(i) Bound for $A_\ell$.}
Since $\sqrt{T}\,\bar V_T=O_p(1)$ by Lemma~\ref{lemma:trunc_DQ_CLT} and $|V_{t}|\leq 4(k+1) M$ deterministically for any $0\leq t \leq T$,
$$
\frac1T\sum_{t=1}^{T-\ell}V_{t}=\bar V_T+O\!\left(\frac{\ell}{T}\right),
\qquad
\frac1T\sum_{t=1}^{T-\ell}V_{t+\ell}=\bar V_T+O\!\left(\frac{\ell}{T}\right),
$$
hence $A_\ell=O_p\left(T^{-1}\right)+O_p\left(\frac{L_T}{T^{3 / 2}}\right)+O_p\left(\frac{L_T}{T^2}\right)$. Since $L_T = O(T^{1/3})$, $\sup_{\ell\leq L_T}|A_\ell|=O_p(T^{-1})$.

\emph{(ii) Bound for $B_\ell$.}
For the two mixed terms use Cauchy--Schwarz and $\sum_{t=1}^T V_{t}^2\leq 16(k+1)^2 M^2 T$:
$$
\left|\frac1T\sum_{t=1}^{T-\ell}\delta_{t}V_{t+\ell}\right|
\leq \frac1T\Big(\sum_{t=1}^{T}\delta_{t}^2\Big)^{1/2}\Big(\sum_{t=1}^{T}V_{t}^2\Big)^{1/2}
\leq 4(k+1)M\,\Delta_{T},
$$
and similarly for $\frac1T\sum \delta_{t+\ell}V_{t}$. For the part with $\bar V_T$,
$$
\left|\frac{\bar V_T}{T}\sum_{t=1}^{T-\ell}(\delta_{t}+\delta_{t+\ell})\right|
\leq \frac{|\bar V_T|}{T}\,\bigg(\Big|\sum_{t=1}^{T-\ell}\delta_{t}\Big|+\Big|\sum_{t=1}^{T-\ell}\delta_{t+\ell}\Big|\bigg).
$$
Because $\sum_{t=1}^T\delta_{t}=0$ and $|\delta_{t}|\leq 4(k+1)M<\infty$, each partial sum is $O(L_T)$, so this term is $O_p\left(\frac{L_T}{T^{3 / 2}}\right)$. Therefore,
$$
\sup_{\ell\leq L_T}|B_\ell| \;\le\; 8(k+1)M\,\Delta_T \;+\; O_p\!\left(\frac{L_T}{T^{3/2}}\right)
\;=\; O_p\left(\Delta_T\right) + O_p\left(T^{-7 / 6}\right).
$$

\emph{(iii) Bound for $C_\ell$.}
Again by Cauchy--Schwarz and $\sum_{t=1}^{T-\ell}\delta_{t+\ell}^2\leq \sum_{t=1}^{T}\delta_{t}^2$,
$$
|C_\ell|\leq \frac{1}{T}\Big(\sum_{t=1}^{T}\delta_{t}^2\Big)^{1/2}\Big(\sum_{t=1}^{T}\delta_{t}^2\Big)^{1/2}
=\Delta_T^2,
$$
so $\sup_{\ell\leq L_T}|C_\ell|= O_p(\Delta_T^2)$.
Combining (i)–(iii), uniformly for $0\leq \ell\leq L_T$,
$$
\hat\Gamma_\ell-\tilde\Gamma_\ell
= O_p(T^{-1}) \;+\; O_p(\Delta_T) \;+\; O_p(\Delta_T^2).
$$
Using $0\leq w^{\mathrm{NW}}_\ell\leq 1$,
$$
\hat\Omega_T-\tilde\Omega_T
=\big(\hat\Gamma_0-\tilde\Gamma_0\big)
+2\sum_{\ell=1}^{L_T}w^{\mathrm{NW}}_\ell(\hat\Gamma_\ell-\tilde\Gamma_\ell)
= O_p\!\Big(\frac{L_T}{T}\Big)\;+\;O_p\!\big(L_T\Delta_T\big)\;+\;O_p\!\big(L_T\Delta_T^2\big).
$$
Together with $L_T = O(T^{1/3})$ and $\Delta_T^2 = o(T^{-2/3})$ in Assumption~\ref{ass:mean-drift}, we have $L_T/T\to0$, $L_T\Delta_T\to0$, and $L_T\Delta_T^2\to0$. 
Hence $\hat\Omega_T-\tilde\Omega_T = o_p(1)$.
\end{proof}

}
\section{Experimental details}
\label{app:exp}

This section provides a detailed description of the simulation setup for Section~\ref{sec:numerical}; additional implementation details are available in the accompanying code. All experiments were run on a PC with a 12-core CPU and 18 GB of RAM.

\subsection{Two-state non-stationary MDP}
\label{Appendix:two_state}

In the experiment with a two-state non-stationary MDP, we consider a finite-horizon setting with state space $\mathcal{X}=\{0,1\}$, action space $\mathcal{Z}=\{0,1\}$, and horizon length $T=5000$. The environment evolves according to time-varying transition kernels $P_z^t$ for $z \in\{0,1\}$ and $t\in [T]$, which are constructed to model non-stationarity through a mean-reverting autoregressive process with additive i.i.d. Gaussian noise. 
Specifically, for each state $x\in\{0,1\}$, the control transition kernel at time $t>1$ is given by
\begin{equation*}
    P_0^t(x, \cdot)\propto\alpha P_0^{t-1}(x, \cdot)+(1-\alpha) \mu_x+\varepsilon,
\end{equation*}
where $\alpha \in [0, 1]$ denotes the mean reversion rate, $\mu_x$ is both the long-run and the initial distribution of next-state transitions from state $x$, s.t. $\left\|\mu_0-\mu_1\right\|_{\mathrm{TV}}=\gamma$, and $\varepsilon \sim \mathcal{N}(0, \sigma_\epsilon^2)$ denotes Gaussian noise. The treatment transition kernel is a shifted version of the control kernel, moving probability toward state 1, i.e., $P_1^t(x, 1) \propto P_0^t(x, 1)+ \delta$ and $P_1^t(x, 0) \propto P_0^t(x, 0)- \delta$.
Rewards are generated from a Normal distribution, i.e., $Y_t \sim \mathcal{N}(r(X_t,Z_t), \sigma_r^2)$.

In the simulation, we set the mean reversion rate to $\alpha = 0.5$, the standard deviations of the Gaussian noise and reward to $\sigma_\epsilon = \sigma_r = 0.1$ and the kernel deviation $\delta =0.1$. The initial distribution $\mu_x$ is uniform randomly chosen to be either $[ \frac{1}{2}(1 + \gamma), \frac{1}{2}(1 - \gamma) ]$ or $[ \frac{1}{2}(1 - \gamma), \frac{1}{2}(1 + \gamma) ]$ (one for each $x\in\{0,1\}$), ensuring a total variation distance of $\gamma$ between $\mu_0$ and $\mu_1$. Mean rewards $r(x,z)$ are randomly generated for each state-action pair $(x,z)$.
All experiments are repeated over 1,000 independent trials. Figure~\ref{fig:numerical} further shows some examples under different kernel deviations $\delta$, signs-positive (panels a-h) and negative (panels i-p).

\begin{figure}[H]
    \centering
    \begin{subfigure}[t]{0.24\textwidth}
        \centering
        \includegraphics[width=\linewidth]{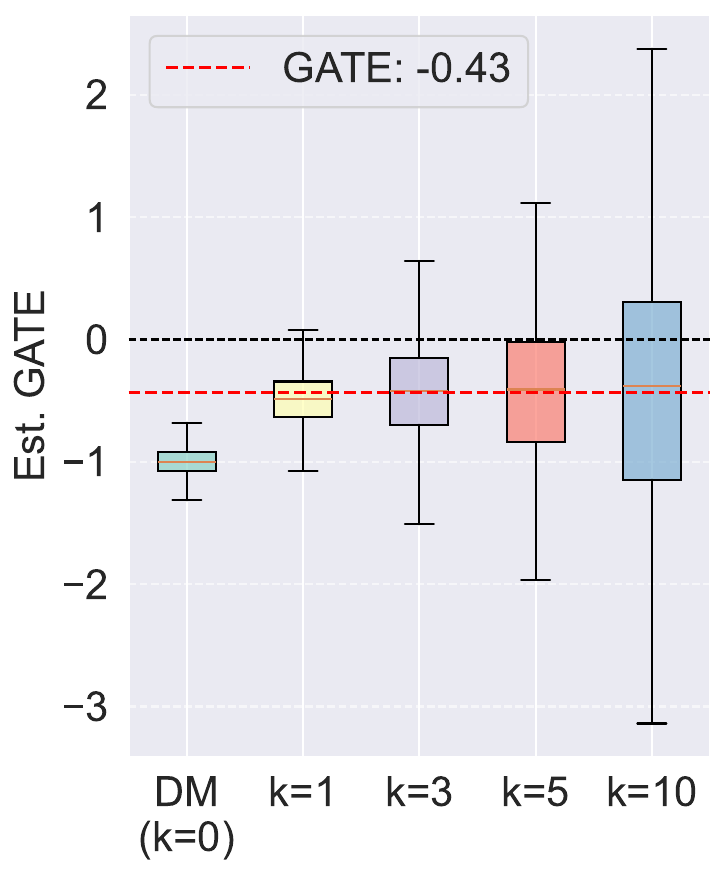}
        \caption{$\delta=0.1, \gamma=0.1$}
        \label{fig:two_states_sub_a}
    \end{subfigure}
    \begin{subfigure}[t]{0.24\textwidth}
        \centering
        \includegraphics[width=\linewidth]{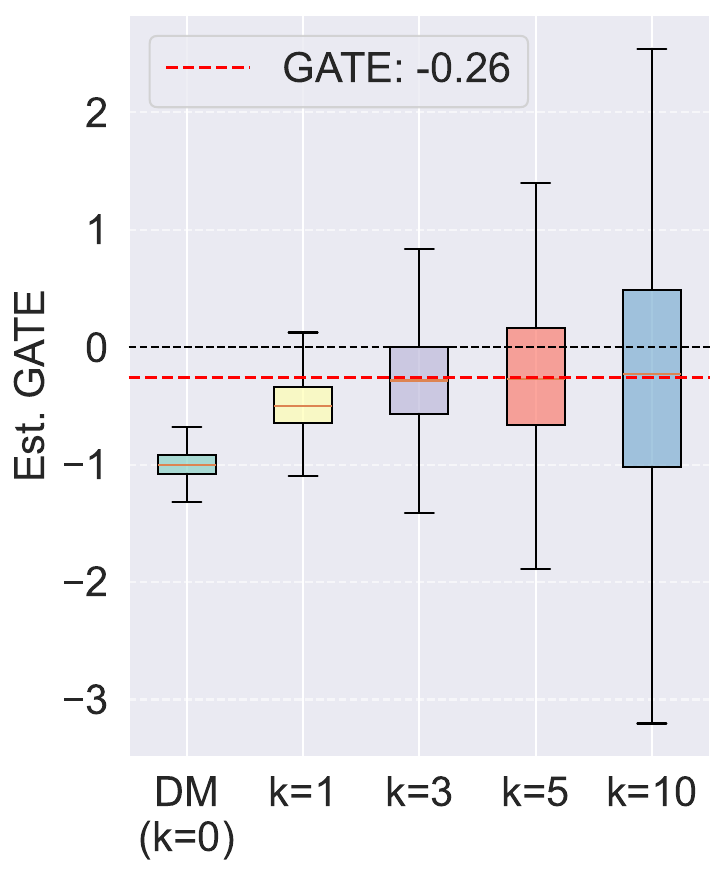}
        \caption{$\delta=0.1, \gamma=0.3$}
        \label{fig:two_states_sub_b}
    \end{subfigure}
    \begin{subfigure}[t]{0.24\textwidth}
        \centering
        \includegraphics[width=\linewidth]{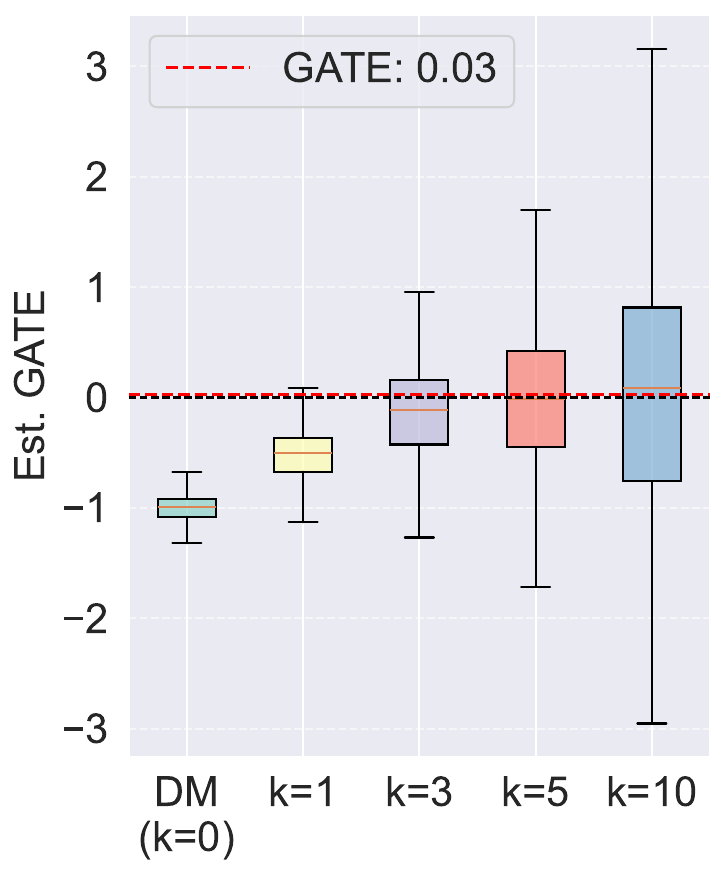}
        \caption{$\delta=0.1, \gamma=0.5$}
        \label{fig:two_states_sub_c}
    \end{subfigure}
    \begin{subfigure}[t]{0.24\textwidth}
        \centering
        \includegraphics[width=\linewidth]{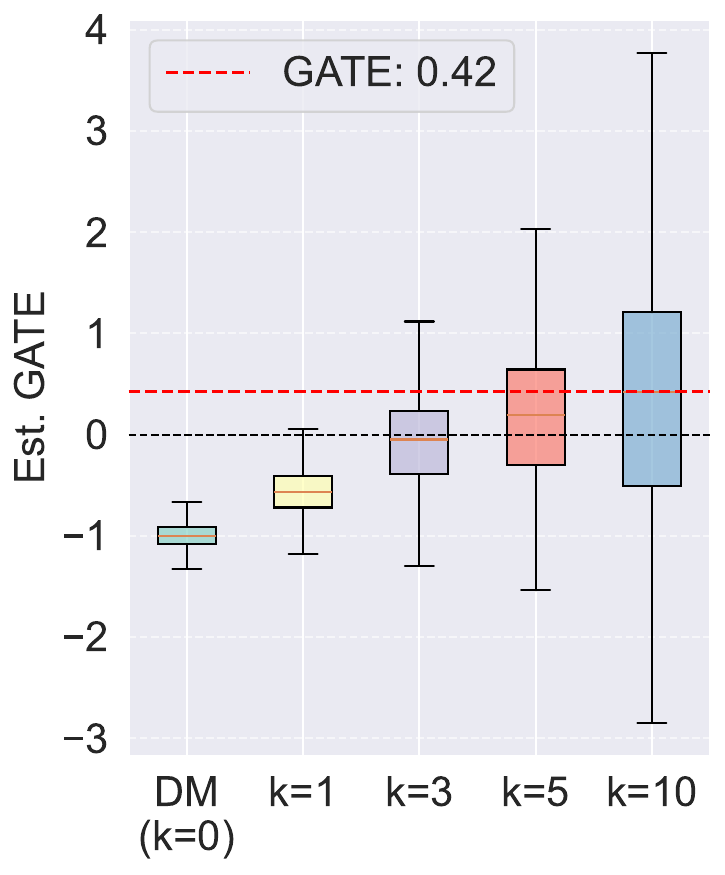}
        \caption{$\delta=0.1, \gamma=0.7$}
        \label{fig:two_states_sub_d}
    \end{subfigure}

    \begin{subfigure}[t]{0.24\textwidth}
        \centering
        \includegraphics[width=\linewidth]{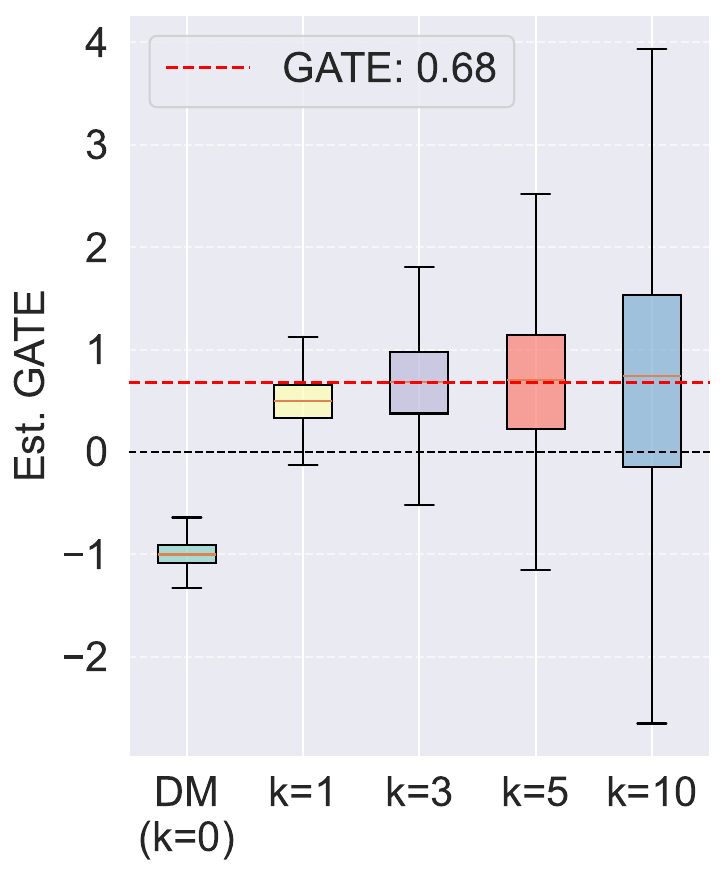}
        \caption{$\delta=0.3, \gamma=0.1$}
        \label{fig:two_states_sub_e}
    \end{subfigure}
    \begin{subfigure}[t]{0.24\textwidth}
        \centering
        \includegraphics[width=\linewidth]{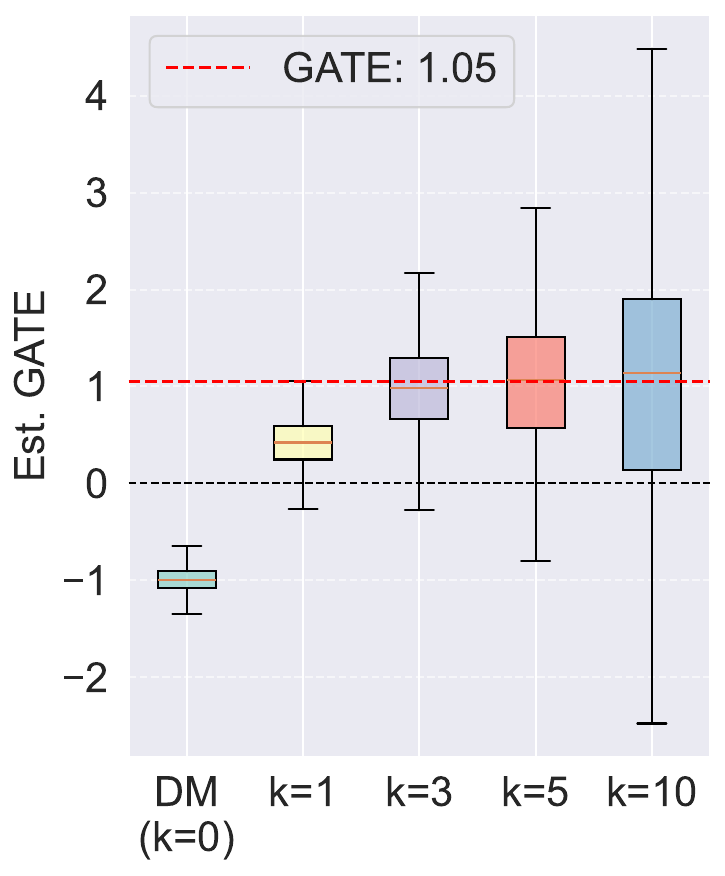}
        \caption{$\delta=0.3, \gamma=0.3$}
        \label{fig:two_states_sub_f}
    \end{subfigure}
    \begin{subfigure}[t]{0.24\textwidth}
        \centering
        \includegraphics[width=\linewidth]{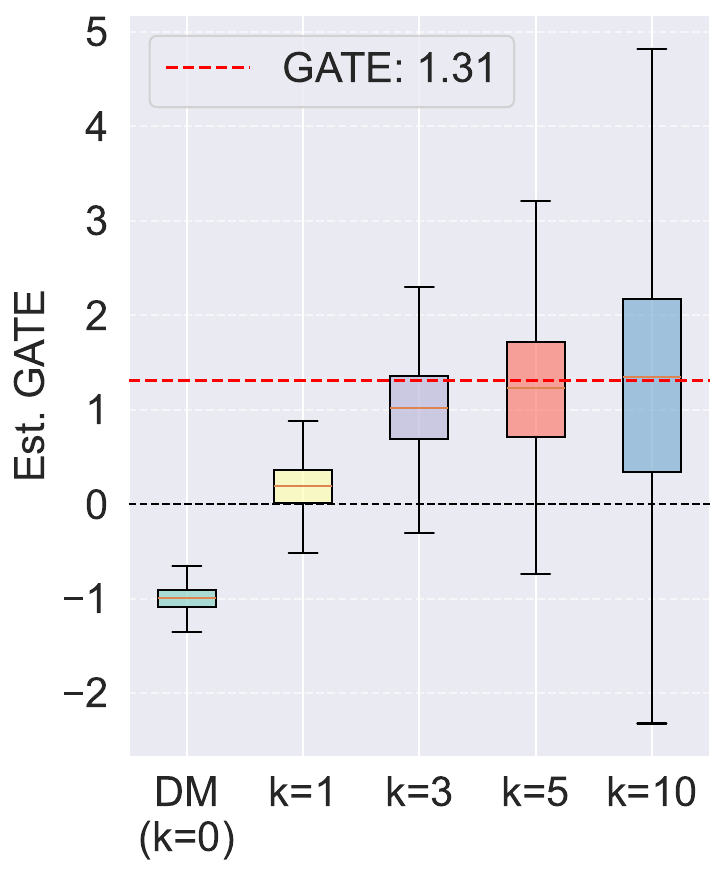}
        \caption{$\delta=0.3, \gamma=0.5$}
        \label{fig:two_states_sub_g}
    \end{subfigure}
    \begin{subfigure}[t]{0.24\textwidth}
        \centering
        \includegraphics[width=\linewidth]{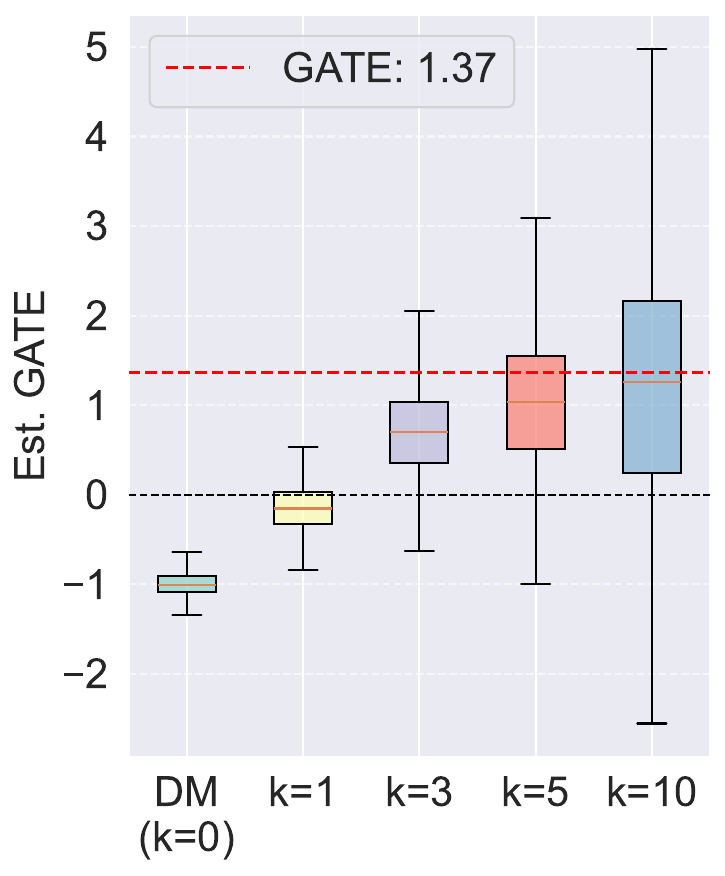}
        \caption{$\delta=0.3, \gamma=0.7$}
        \label{fig:two_states_sub_h}
    \end{subfigure}

        \begin{subfigure}[t]{0.24\textwidth}
        \centering
        \includegraphics[width=\linewidth]{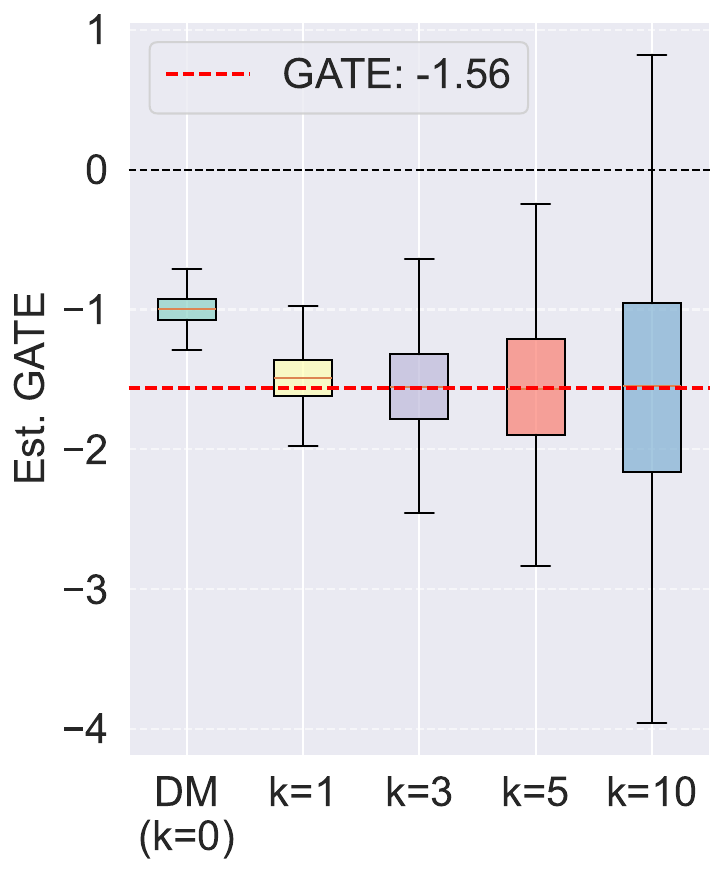}
        \caption{$\delta=-0.1, \gamma=0.1$}
        \label{fig:two_states_sub_a_ap}
    \end{subfigure}
    \begin{subfigure}[t]{0.24\textwidth}
        \centering
        \includegraphics[width=\linewidth]{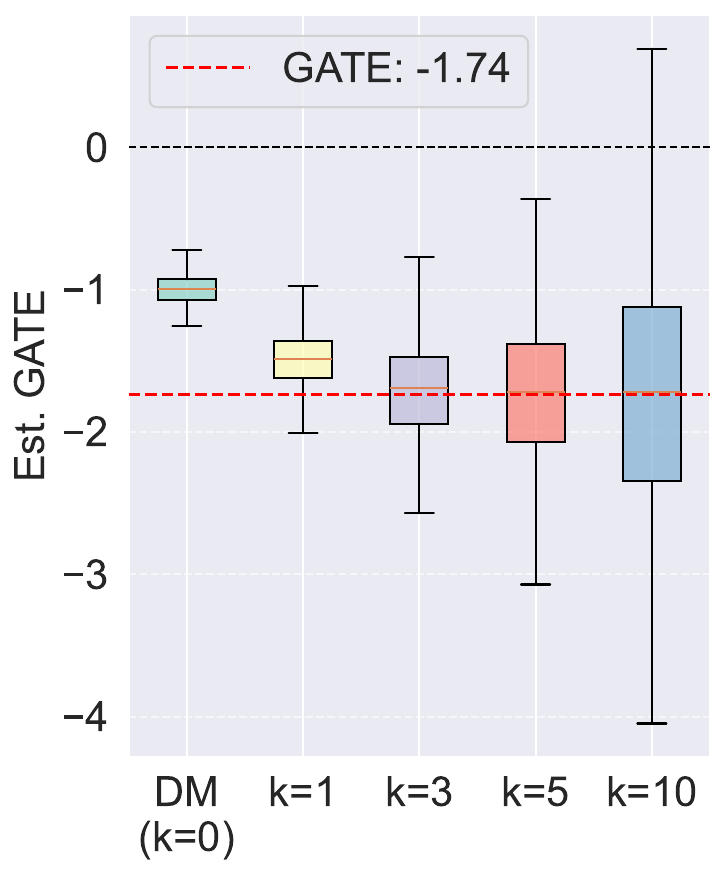}
        \caption{$\delta=-0.1, \gamma=0.3$}
        \label{fig:two_states_sub_b_ap}
    \end{subfigure}
    \begin{subfigure}[t]{0.24\textwidth}
        \centering
        \includegraphics[width=\linewidth]{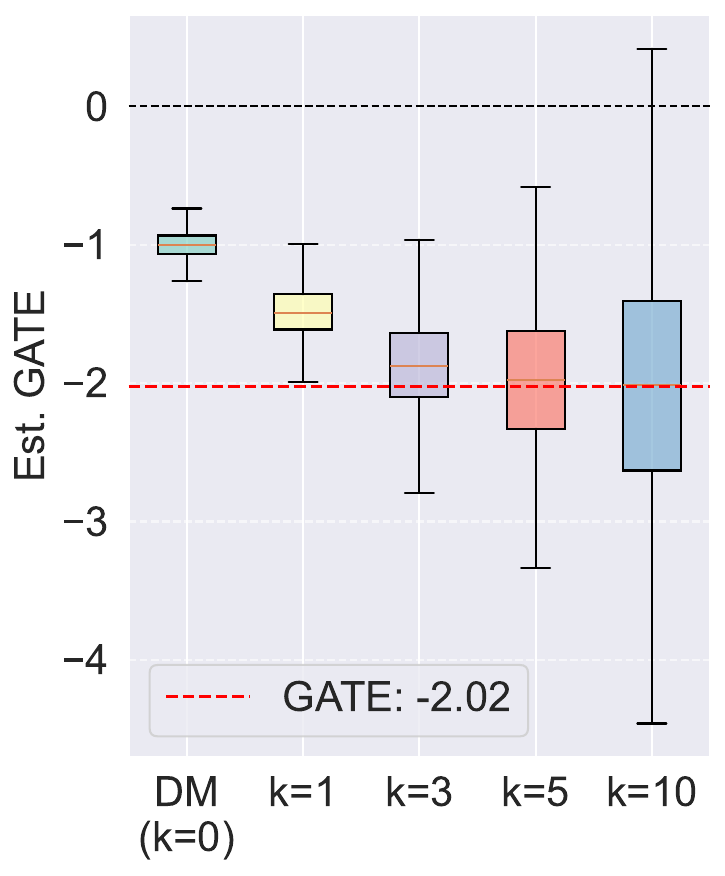}
        \caption{$\delta=-0.1, \gamma=0.5$}
        \label{fig:two_states_sub_c_ap}
    \end{subfigure}
    \begin{subfigure}[t]{0.24\textwidth}
        \centering
        \includegraphics[width=\linewidth]{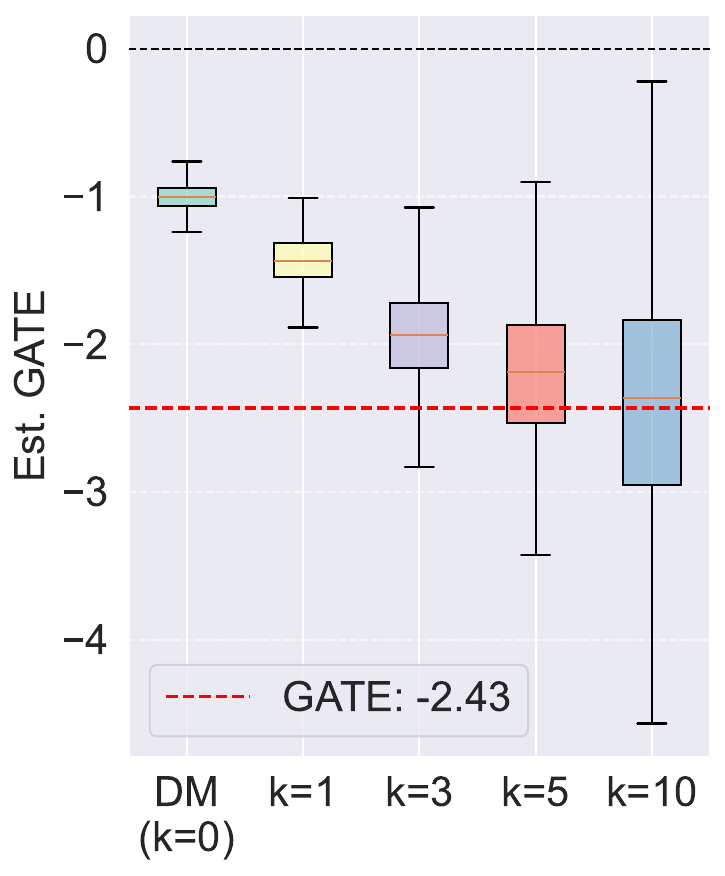}
        \caption{$\delta=-0.1, \gamma=0.7$}
        \label{fig:two_states_sub_d_ap}
    \end{subfigure}

    \begin{subfigure}[t]{0.24\textwidth}
        \centering
        \includegraphics[width=\linewidth]{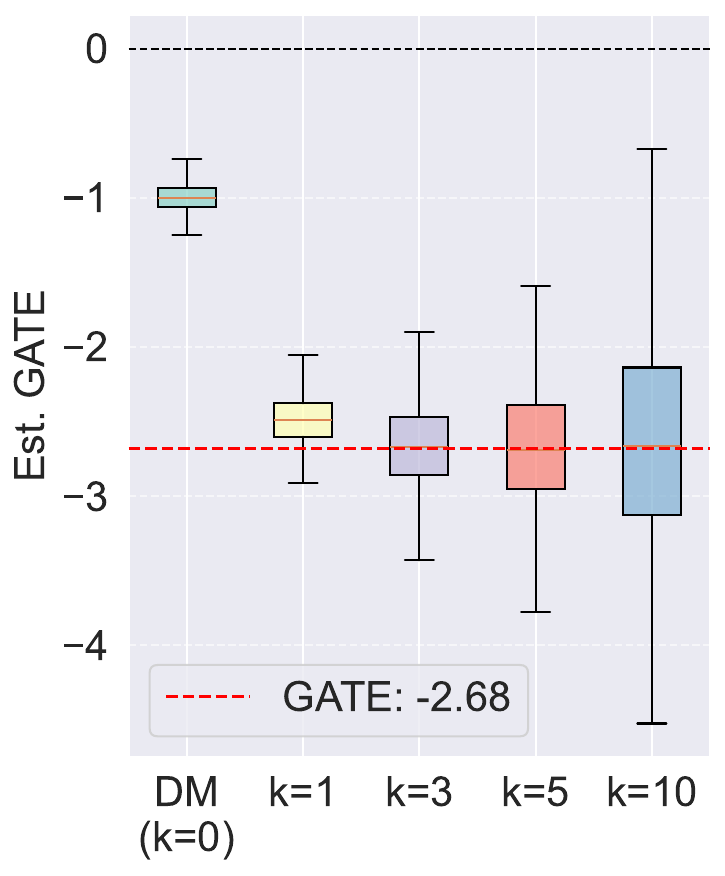}
        \caption{$\delta=-0.3, \gamma=0.1$}
        \label{fig:two_states_sub_e_ap}
    \end{subfigure}
    \begin{subfigure}[t]{0.24\textwidth}
        \centering
        \includegraphics[width=\linewidth]{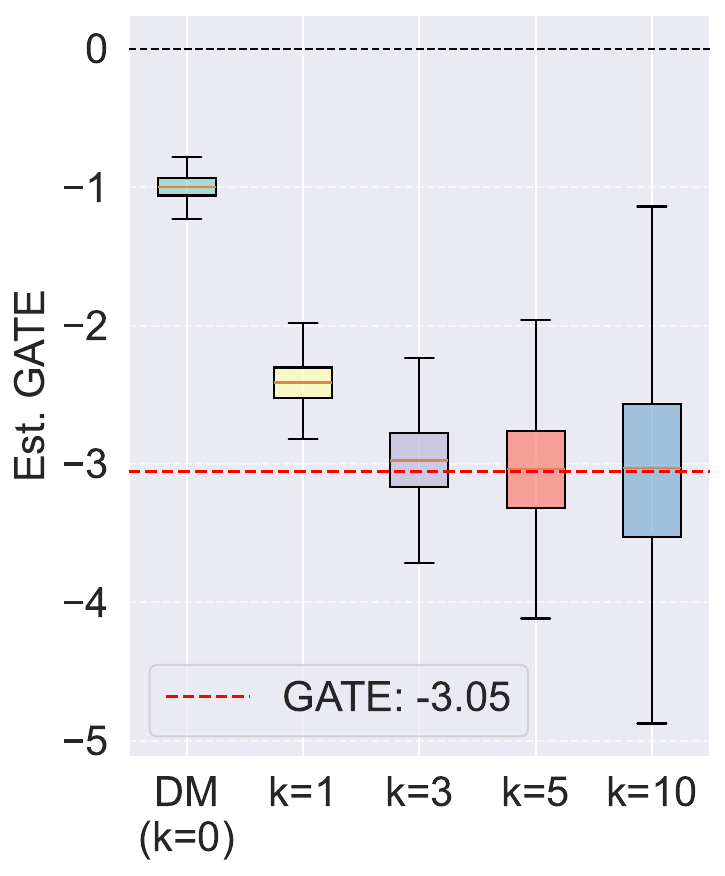}
        \caption{$\delta=-0.3, \gamma=0.3$}
        \label{fig:two_states_sub_f_ap}
    \end{subfigure}
    \begin{subfigure}[t]{0.24\textwidth}
        \centering
        \includegraphics[width=\linewidth]{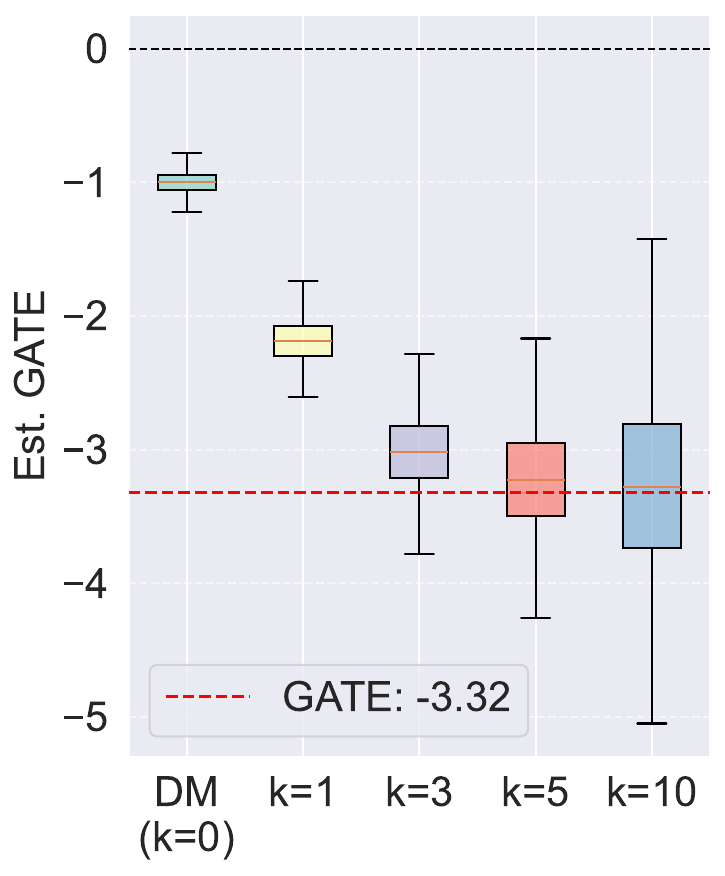}
        \caption{$\delta=-0.3, \gamma=0.5$}
        \label{fig:two_states_sub_g_ap}
    \end{subfigure}
    \begin{subfigure}[t]{0.24\textwidth}
        \centering
        \includegraphics[width=\linewidth]{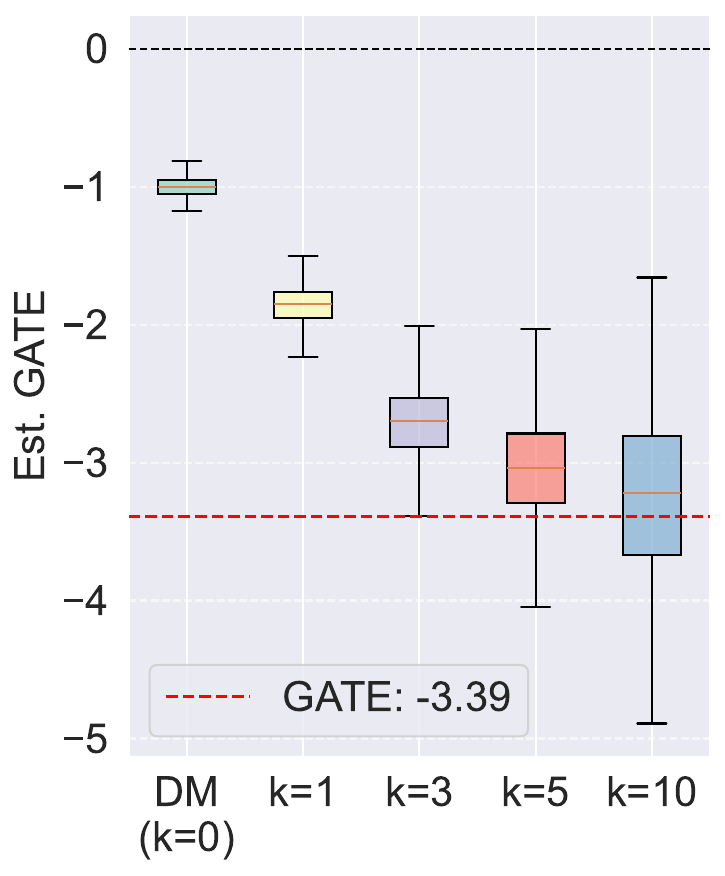}
        \caption{$\delta=-0.3, \gamma=0.7$}
        \label{fig:two_states_sub_h_ap}
    \end{subfigure}

    \caption{Evaluation of TPG estimators with various $\delta$ and $\gamma$ in a 2-state MDP.}
    \label{fig:numerical}
\end{figure}

\subsection{Nonstationary queueing simulator}
\label{Appendix:real_case}

{In this section we detail our nonstationary queueing simulator, based on a nonstationary estimate of patient arrival rates to an emergency department (both based on day of week, and time of day), obtained using data from the SEEStat Home Hospital database \citep{seestat_data}; see Figure 9, Section 5.2, and Appendix B.1 of \cite{li2023experimenting} for further details.

The patient arrival rates are estimated using data from the Home Hospital database on admissions to an Israeli hospital emergency department over 2004, for every half hour of each day, at the day-of-week level (Sunday-Saturday).  Let $b_{d,t}$ denote this estimated arrival rate at time $t$, on a day of week $d$.  We simulate a four-week experiment, similar to \cite{li2023experimenting}. The arrival rate to the queueing system in week $w = 1, 2, 3, 4$, on day of week $d$, and time $t$, in state $k$, is specified as:
$$ \lambda_{w,d,t,k} = \frac{8 - 4p}{1 + k/5} a_w b_{d,t}, $$
where $a_1 = 0.9, a_2 = 1.0, a_3 = 1.1, a_4 =1.2$.  The service rate is $\mu = 20$ in all states.  In treatment, $p = 1.75$; in control, $p = 0.25$.

To simulate the system and map it to the discrete-time setting in our paper, we {\em uniformize} time by considering one-minute time steps.  This gives rise to $T = 40,320$ time steps per simulation.  We run 500 simulations of each type of experiment (Bernoulli randomized experiment, and switchback experiment). 
In the switchback experiment, we set the switchback interval to one hour (i.e., 60 time steps), and switch between treatment and control with probability $1/2$ at each interval, following the Bernoulli switchback design of \cite{hu2022switchback}.

\subsection{NYC ride-sharing simulator}
\label{Appendix:real_case_NYC}

We adopt a variant of the large-scale NYC ride-sharing simulator from \cite{peng2025differences} for the pricing experiment. The system observes a sequence of “eyeballs” (i.e., potential ride requests), where each rider specifies pickup and dropoff locations and is shown a posted price along with an estimated time of arrival (ETA). 
Based on this information, the rider decides whether to accept the trip. If accepted, the nearest available vehicle is dispatched and the system collects the trip fare as a reward; if rejected, no dispatch occurs and the reward is zero. The rider’s acceptance behavior follows a simple logistic choice model:
$$
\mathbb{P}(\text {accept})=\sigma\left(w_{\text {price}} \cdot \text {price}+w_{\text {eta}} \cdot \text {ETA}+w_0\right),
$$
where $\sigma(\cdot)$ is the sigmoid function. Specifically, we set $w_{\text {price}}=-0.3, w_{\text {eta}}=-0.005$, and $w_0=4$. The price is computed as $\text {price}=\alpha \cdot \tau(\text {pickup location}, \text {dropoff location})$, where $\tau$ denotes the travel time matrix derived from real Manhattan traffic data. The ETA is computed as the driver's earliest arrival time at the pickup location minus the request time.
The pricing coefficient $\alpha$ is set to $0.01$ under the control policy and $0.02$ under the treatment policy. A similar modeling setup is described in Section~6.3 of \cite{peng2025differences}. As shown in Figure~\ref{fig:events}, rider arrivals in the NYC ride-sharing data from September 2024 exhibit complex, non-stationary patterns \citep{TLC}, presenting a significant challenge for treatment effect estimation.
}


\end{document}